\documentclass[11pt]{article}

\usepackage[a4paper, margin=2.5cm]{geometry}
\usepackage{amsmath}
\usepackage{amssymb}
\usepackage{amsthm}
\usepackage{graphicx}
\usepackage{xurl}
\usepackage{multirow}
\usepackage{accents}
\usepackage{natbib}
\usepackage[hidelinks]{hyperref}

\DeclareMathOperator{\Var}{Var}
\DeclareMathOperator{\Cov}{Cov}
\DeclareMathOperator{\sgn}{sign}
\newcommand{\spn}{\mathrm{span}}

\theoremstyle{plain}
\newtheorem{theorem}{Theorem}[section]
\newtheorem{lemma}[theorem]{Lemma}
\newtheorem{proposition}[theorem]{Proposition}
\newtheorem{corollary}[theorem]{Corollary}
\theoremstyle{definition}
\newtheorem{definition}[theorem]{Definition}
\newtheorem{assumption}[theorem]{Assumption}
\newtheorem{remark}[theorem]{Remark}

\numberwithin{equation}{section}
\numberwithin{figure}{section}
\numberwithin{table}{section}

\newenvironment{keywords}{\par\medskip\noindent\textbf{Key words.} \ignorespaces}{\par\medskip}
\newenvironment{MSCcodes}{\par\noindent\textbf{AMS subject classifications.} \ignorespaces}{\par\medskip}
\newcommand{\email}[1]{\texttt{#1}}

\title{The Science and Practice of Trend-Following Systems}

\author{Artur Sepp\thanks{LGT Bank, Zurich, Switzerland (\email{artursepp@gmail.com}).}
\and Vladimir Lucic\thanks{Marex Solutions, London, and Imperial College London, London, United Kingdom (\email{v.lucic@imperial.ac.uk}).}}

\date{This version: July 20, 2026}

\begin{document}

\maketitle

\begin{abstract}
	We present a unified approach to designing trend-following (TF) systems and classify them into European, American, and Time Series Momentum categories. For European TF systems, we derive an exact relationship between profit-and-loss, autocorrelation, and drift in volatility-normalized returns. We analyze the expected return under fractional ARFIMA processes and show that TF systems are profitable when the long-term autocorrelation is positive, even under short-term mean reversion. In the frequency domain, the expected return is represented as a Poisson-kernel reading of the analytical or empirical spectrum of the volatility-normalized returns: the system profits at zero drift when the kernel-weighted spectral mass exceeds one, so trend-following alpha is excess spectral mass at low frequencies. Longer lookbacks benefit in addition from the squared drift of the return process. We derive the closed-form Sharpe ratio, with the excess kurtosis of the innovations entering through a single loading, and the net Sharpe ratio and cost-optimal span under trading costs. Under white noise, we derive the closed-form skewness of aggregated TF returns, which is positive at every horizon and peaks near half the filter span. Monte Carlo experiments confirm the analytical results. We show that the positive skewness of TF returns is structural under various model assumptions. Empirically, we evaluate the systems on liquid contracts, and show that all TF systems are strongly correlated and our analytical results can be applied for their performance attribution.  Our results enable design, simulation, and performance attribution of TF systems from trend persistence, mean reversion, drift, and skewness.
\end{abstract}

\begin{keywords}
	Trend-Following Strategies, Managed Futures, Fractional Processes, Autocorrelation, Sharpe Ratio, JEL classification: G11, C22, C58, G17
\end{keywords}

\begin{MSCcodes}
	91G10, 62M10, 60G10
\end{MSCcodes}

\section{Introduction}

Trend-following systems have been employed by many quantitative and discretionary funds, also known as commodity trading advisors (CTAs), or managed futures, since the early 1980s. The industry traces back to the turtle-trading classes of Richard Dennis in the early 1980s (\citealp{Angrist1989}), whose graduates created quantitatively driven CTA funds. According to BarclayHedge data, the assets under management (AUM) of managed futures funds stood at USD $340$ billion at the end of $2024$. A sizeable additional allocation runs through quantitative investment strategies (QIS) offered and executed by major investment banks.

\cite{Lintner1983} provided the first evidence that managed futures deliver better risk-adjusted returns and offer strong diversification benefits for long-only portfolios.
Subsequent studies (see, among others, \cite{Hamill2016}, \cite{Bouchaud2017}, \cite{Harvey2019}, \cite{Sepp2019}, and \cite{Sepp2020}) reinforce the diversification role of managed futures, so many private and institutional portfolios now carry some exposure to them.

Some large fund managers publish research papers on the analysis of their models, including \cite{Hurst2013} at AQR, \cite{Baz2015} at Man Group, and \cite{Bouchaud2017} at Capital Fund Management. Replicating an industry benchmark such as the SG Trend Index is reasonably straightforward, as we show in Section \ref{sc:impl}, and managers pursue outperformance through second-order proprietary features, including style factors, risk management, and other risk premia (\citealp{Carver2023}).

On the modeling side, the profitability of trend-following (TF) systems connects to the return-predictability literature in financial economics: \citet{PesaranTimmermann1995} document the economic significance of stock-return predictability, \citet{Cooper2004} document the dependence of momentum profits on the market state, and \citet{Bottazzi2019} derive short-term momentum and long-term reversal in equilibrium, which gives an economic origin for the serial dependence that we take as exogenous.

From a purely theoretical viewpoint, serial dependence under the statistical measure is not inconsistent with the absence of arbitrage: by the Dalang--Morton--Willinger theorem (\citealp{DalangMortonWillinger1990}), the absence of arbitrage in discrete time is equivalent to the existence of an equivalent martingale measure. Under that measure and the appropriate integrability conditions, the discounted price process has uncorrelated increments, which in general is not the case under the physical measure. Furthermore, the asset drift under the two measures can differ substantially.
Our net Sharpe ratio analysis sharpens the point quantitatively: at the empirically observed first-lag autocorrelations, the proxy break-even cost in equation \eqref{eq:sr_net_ar1} lies below realistic trading costs, so the short-memory predictability is not exploitable net of frictions.

This paper provides methodological and practical results on TF systems. We unify three approaches to the construction of TF systems, which we term European, American, and Time Series Momentum (TSMOM) systems, and we find a set of parameters for which the three systems are strongly correlated and generate similar performance to the SG Trend Index. In the theoretical part, we link the expected return of the European TF system to the autocorrelation function and drift of a generic process driving the dynamics of returns. In related analysis, \cite{Bouchaud2017} show that the performance of the risk-managed trend system is given as a difference between a long-term variance and a short-term variance of the returns of the underlying instrument. This difference-of-variances decomposition originates with \cite{BruderGaussel2011}, and our sample-path identity gives its exact EWMA-weighted sample form with the explicit boundary term. Quantitative aspects of TF systems in the Gaussian framework are studied by \cite{Grebenkov2014}, \cite{Ferreira2018}, \cite{Zakamulin2020}, \cite{Valeyre2025}. \cite{Jusselin2017} derive the mean, the variance, and the skewness of EWMA trend following under trend-plus-noise dynamics, and \cite{Martin2012} compute moments of linear momentum strategies under Gaussian dynamics. \cite{Martin2023} derives the term structure of the skewness of momentum trading returns in closed form, and treats the expected return by simulation. In the cross-section, \cite{LoMacKinlay1990} decompose trading profits into weighted autocovariances and a squared-mean term. \cite{GouldingHarveyMazzoleni2023} and \cite{MoreiraMuir2017} attribute the alpha of a risk-scaled strategy to the covariance between the risk signal and future returns.

We instead decompose the return of the raw signal into an autocorrelation term and a drift term, before any risk scaling. We extend the available results to derive an exact analytical formula linking the performance of the TF system to the autocorrelation of the volatility-normalized returns under generic processes. We show that the TF system is expected to be profitable when the autocorrelation of returns is positive even if the drift is zero. We also show that the TF system is expected to be profitable for any process with a large positive or negative drift if the filter span is large. We derive a closed-form proxy for the expected volatility-normalized turnover of the European TF system, exact for the signal component under independent Gaussian returns. Further, we derive the Sharpe ratio of the European TF system in closed form, using the Isserlis theorem for the variance of its daily returns. To our knowledge, the closed-form Sharpe ratio linking the risk-adjusted performance of a TF system to the autocorrelation function of the volatility-normalized returns is new. The two results hold at two levels of generality. The expected-return formula requires only a stationary mean, variance, and autocorrelation function, so it holds for any stationary autocorrelation function. The Sharpe ratio additionally assumes a causal linear process with independent innovations, which fixes the fourth-moment structure. The generating function admits a spectral representation in which the span selects the bandwidth of a Poisson kernel applied to the spectrum of the volatility-normalized returns (Section \ref{sec:sr}).

We illustrate the analytical results under white noise, AR-1, and fractional ARFIMA processes. The ARFIMA process carries both short-term and long-term mean reversion and trend features, and it extends the moving-average analysis of \cite{Brock1992}. In the empirical part, we apply the systems to liquid futures contracts and verify the analytical Sharpe ratios against the backtests. The scope of the present paper is the second and third moments of the strategy: the expected return, the variance, the gross and net Sharpe ratios, and, under white noise, the skewness of the aggregated returns, whose term structure \cite{Martin2023} derives in closed form for linear systems and which we obtain within our unified framework and take to the futures data. The convexity profile of TF overlays and the diversification benefits for long-only portfolios are analyzed in a companion paper.

Compared to the cited results, our contributions are the exact sample-path identity for the cumulative return and the closed-form solution for the gross Sharpe ratio under a generic stationary autocorrelation function within the linear-process class, including the long-short filter, together with the closed-form skewness of the aggregated returns under white noise. We further derive the leading-order net Sharpe ratio under proportional trading costs, with a closed-form break-even cost for short-memory dynamics and a power-law optimal span under long memory. We validate the formula on market data: in an in-sample reconstruction, the sample autocorrelation functions of $84$ futures contracts reproduce the realized Sharpe ratios with a pooled correlation of $0.99$ and a regression slope of $0.96$ (Section \ref{sec:attr}).

Our analysis complements the optimal-control strand of trend following. \cite{DaiZhangZhu2010} derive the optimal trading rule for a trend follower under a continuous-time regime-switching model, solving an optimal stopping problem for the entry and exit times. We take the trading rules as given by industry practice and characterize their performance analytically, so our evaluation results complement the optimal-design strand.

Our paper is organized as follows. In Section \ref{sc:frm}, we introduce the framework and necessary tools for the construction of TF systems. In Section \ref{sc:tfs}, we introduce the three approaches for building TF systems. In Section \ref{sc:eft}, we develop the European TF system and derive key analytical results. In Section \ref{sec:sr}, we derive the closed-form Sharpe ratio of the European TF system. In Section \ref{sec:sp}, we analyze the expected return and the Sharpe ratio of the European TF system under specific return-generating processes. In Section \ref{sc:impl}, we apply the TF systems considered to real-world data of liquid futures and consider several implementation aspects. All proofs and auxiliary results are provided in the appendices.

\section{Framework and Necessary Tools}\label{sc:frm}

The underlying instruments of most TF systems are futures. Futures markets provide leveraged exposure to equity indices, government bonds, foreign exchange (FX) rates, and commodities. Since futures contracts have set expiration dates, existing positions must be rolled over. Continuous time series of futures prices (termed as continuous futures) for historical analysis are obtained by stitching the front and second contracts at roll events, which occur at some time prior to the first notice day of the contract. The stitching adjusts the price level at each roll, so the relative returns of the continuous series carry no roll-related jumps and equal the excess returns of the held contract (\citealp{Carver2023}). We assume that the base currency for all systems is the United States Dollar (USD).

\begin{definition}[Futures markets and their returns]\label{eq:fm}
	We consider the set $\{t\}$ of observation times (without loss of generality, we assume weekday observations) and the set $\{s_{t}\}$ of sample prices of continuous futures incorporating excess USD returns.
	We define the lag-1 difference $d_{t}$ as follows:
	\begin{equation}\label{eq:int1} 
		d_{t} = s_{t} - s_{t-1},
	\end{equation}
	and the relative return $r_{t}$ as follows:
	\begin{equation} \label{eq:int2} 
		r_{t} = s_{t} / s_{t-1} - 1.
	\end{equation}
	
	We convert the daily return of a non-USD contract to USD by the most common futures convention, in which the local return is scaled by the FX ratio:
	\begin{equation} \label{eq:int3} 
		r_{t} = \left(s^{*}_{t}/s^{*}_{t-1}-1\right)\left(x_{t}/x_{t-1}\right),
	\end{equation}
	where $s^{*}_{t}$ is the local price and $x_{t}$ is the exchange rate of the local currency X to USD. Returns on futures contracts are excess returns, because futures margins are small and funding costs are embedded into futures returns by arbitrage.
	
\end{definition}

\subsection{EWMA Filter}

The design of TF systems is typically based on inference of trend indicators from noisy price and returns data. One of the most commonly used filters for the design of TF systems is the exponentially weighted moving average (EWMA) filter.

We consider a stochastic process $y_t$ on the grid of discrete observation times $\{t\}$ with ordered sequence $\{y_{t}\}$. In our context, $\{y_{t}\}$ represents the realizations of the price or returns of a financial instrument, such as a financial future contract. We introduce the statistical measure $\mathbb{P}$ for computing population moments of $\{y_{t}\}$.

\begin{definition}[EWMA Filter, \cite{Holt1957}] The exponentially weighted moving average (EWMA) filter with smoothing parameter $\nu$, $0<\nu<1$, on sequence $\{y_{t}\}$ is defined as:
	\begin{equation} \label{eq:ewma1} 
		\begin{split}
			\mathcal{L}^{(\nu)} (y_{t}) &  \equiv   \left(1-\nu \right)  \left(  y_{t} + \nu^{1} y_{t-1} + \nu^{2} y_{t-2}+... \right) \\
			&  =  \left(1-\nu \right)  \sum^{\infty}_{m=0} \nu^{m} y_{t-m}\\
			&  =  \left(1-\nu \right)y_{t} +  \nu\mathcal{L}^{(\nu)} (y_{t-1}).
		\end{split}
	\end{equation}
\end{definition}

The smoothing parameter can be specified in terms of the span period as follows:
\begin{equation} \label{eq:i2} 
	\nu =1 - \frac{2}{\spn+1},
\end{equation}
where $\spn$ is the lookback period of the EWMA filter measured in units of observation times.

The EWMA filter in equation \eqref{eq:ewma1} for serially independent variables $\{y_{t}\}$ can be interpreted as an estimator of the running mean with the following variance:
\begin{equation} \label{eq:v7} 
	\Var\left[ \mathcal{L}^{(\nu)} (y_{n}) \right]  \equiv   \frac{ 1-\nu   }{ 1+\nu} \vartheta_{0} =  \frac{ 1}{ \spn } \vartheta_{0}.
\end{equation}
Thus, for serially independent data, the EWMA filter reduces the variance as much as the moving-average (MA) filter with the window equal to the span.

\begin{definition}[Variance-preserving EWMA filter] We define the adjusted filter:
	\begin{equation} \label{eq:ewma2} 
		\tilde{\mathcal{L}}^{(\nu)} (y_{t})   = \sqrt{ \frac{ 1+\nu}{ 1-\nu} } \mathcal{L}^{(\nu)} (y_{t}). 
	\end{equation}
\end{definition}

\begin{proposition}\label{prop1} Let $\{y_{t}\}$ be serially independent under $\mathbb{P}$ with variance $\Var[y_{t}]=\vartheta_{0}$. The variance-preserving filter in equation \eqref{eq:ewma2} satisfies:
	\begin{equation} \label{eq:ewma2a} 
		\Var[\widetilde{\mathcal{L}}^{(\nu)} (y_{t})]   = \vartheta_0.
	\end{equation}
\end{proposition}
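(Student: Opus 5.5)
The plan is to compute the variance of the infinite EWMA sum directly from the series representation in equation \eqref{eq:ewma1} and then rescale. Write $\mathcal{L}^{(\nu)}(y_t) = (1-\nu)\sum_{m\ge 0}\nu^m y_{t-m}$. I would first note that the series converges in $L^2(\mathbb{P})$. The partial sums form a Cauchy sequence, because for independent summands with common variance $\vartheta_0$ the tail variances are bounded by $(1-\nu)^2\vartheta_0\sum_{m>M}\nu^{2m}\to 0$ when $0<\nu<1$. Since the series converges in $L^2$, the variance of the limit equals the limit of the variances of the partial sums.

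For a finite partial sum, serial independence (uncorrelatedness suffices) removes all cross terms. This gives
\[
\Var\Big[(1-\nu)\sum_{m=0}^{M}\nu^m y_{t-m}\Big] = (1-\nu)^2\vartheta_0\sum_{m=0}^{M}\nu^{2m}.
\]
Letting $M\to\infty$ and summing the geometric series yields
\[
(1-\nu)^2\vartheta_0/(1-\nu^2) = \frac{1-\nu}{1+\nu}\vartheta_0,
\]
which confirms equation \eqref{eq:v7}. Substituting $\nu = 1-2/(\spn+1)$ from \eqref{eq:i2} gives the $\vartheta_0/\spn$ form, although that step is not needed for the proposition.

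Finally, apply equation \eqref{eq:ewma2}. Variance scales with the square of a deterministic constant, so
\[
\Var[\widetilde{\mathcal{L}}^{(\nu)}(y_t)] = \frac{1+\nu}{1-\nu}\cdot\frac{1-\nu}{1+\nu}\vartheta_0 = \vartheta_0,
\]
which is \eqref{eq:ewma2a}.

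The only genuinely delicate step is justifying the interchange of variance and infinite sum, that is, the $L^2$ convergence. This is handled by the geometric tail bound above, which requires $\sup_t \Var[y_t] = \vartheta_0 < \infty$. The remaining steps are routine algebra.
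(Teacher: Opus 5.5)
Your argument is correct and matches the paper's proof in Appendix~\ref{sec:prop1}: sum the geometric series to get $\Var[\mathcal{L}^{(\nu)}(y_t)]=\frac{1-\nu}{1+\nu}\vartheta_0$, then rescale by the squared factor in \eqref{eq:ewma2}. The $L^{2}$-convergence step is extra rigor the paper omits; note that the Cauchy argument on the uncentered partial sums also needs bounded means (the paper assumes a constant mean $\mu_0$), or else you should run it on the centered series.
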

The proof is provided in Appendix \ref{sec:prop1}.

Finally, we introduce the long-short EWMA filter, termed a double filter by \citet{Martin2012} and \citet{Martin2023}.

\begin{definition}[Variance-preserving long-short filter] We define the long-short EWMA filter as follows:
	\begin{equation} \label{eq:ewma3} 
		\begin{split}
			\widetilde{\mathcal{LS}}^{(\nu_{1}, \nu_{2})}(y_{n})   & \equiv \widetilde{l}_{1} \widetilde{\mathcal{L}}^{(\nu_{1})} (y_{n}) - \widetilde{l}_{2}\widetilde{\mathcal{L}}^{(\nu_{2})} (y_{n}) \\
			&   = \widetilde{l}_{1}\sqrt{\frac{ 1+\nu_{1}   }{ 1-\nu_{1}}} \mathcal{L}^{(\nu_{1})} (y_{n}) - \widetilde{l}_{2}\sqrt{\frac{ 1+\nu_{2}   }{ 1-\nu_{2}}} \mathcal{L}^{(\nu_{2})} (y_{n})\\
			&   \equiv l_{1} \mathcal{L}^{(\nu_{1})} (y_{n}) - l_{2} \mathcal{L}^{(\nu_{2})} (y_{n}),
		\end{split}
	\end{equation}
	where loading coefficients $l_{1}$, $l_{2}$ and $\widetilde{l}_{1}$, $\widetilde{l}_{2}$ are defined by:
	\begin{equation} \label{eq:ewma3a} 
		\begin{split}
			& l_{1} =\frac{1}{ 1-\nu_{1} } q, \ l_{2} =\frac{1}{1-\nu_{2}} q,\\
			& \widetilde{l}_{1} =\frac{1}{\sqrt{ 1-\nu^{2}_{1}} } q, \ \widetilde{l}_{2} =\frac{1}{\sqrt{ 1-\nu^{2}_{2}} } q, \\ 
			& q = \left( \frac{1}{ 1-\nu^{2}_{1}}  + \frac{1}{ 1-\nu^{2}_{2} } - 2 \frac{1}{1-\nu_{1}\nu_{2}} \right)^{-1/2}.
		\end{split}
	\end{equation}
\end{definition}

\begin{proposition} \label{prop2} Assuming a sequence of serially independent and identically distributed random variables $\{y_t\}$ with $\Var[y_t]=\vartheta_{0}$ and distinct spans $\nu_{1}\neq\nu_{2}$, the variance of the filter in equation \eqref{eq:ewma3} is given by:
	\begin{equation} \label{eq:ewma3c} 
		\Var\left[ \widetilde{\mathcal{LS}}^{(\nu_{1}, \nu_{2})}(y_{n})  \right] = \vartheta_0.
	\end{equation}
\end{proposition}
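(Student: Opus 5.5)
Write the filter as a single linear combination of past observations and compute its variance term by term; independence removes all cross-lag terms. By the last line of \eqref{eq:ewma3} and the series form of \eqref{eq:ewma1},
\begin{equation*}
\widetilde{\mathcal{LS}}^{(\nu_{1}, \nu_{2})}(y_{n}) = \sum_{m=0}^{\infty} w_m\, y_{n-m}, \qquad w_m = l_1(1-\nu_1)\nu_1^m - l_2(1-\nu_2)\nu_2^m = q\left(\nu_1^m - \nu_2^m\right),
\end{equation*}
using $l_i(1-\nu_i)=q$ from \eqref{eq:ewma3a}. Since $0<\nu_i<1$, the coefficients are square-summable, so the series converges in $L^2$. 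Because the $y_t$ are serially independent with common variance $\vartheta_0$, the variance is $\vartheta_0\sum_m w_m^2$. This is the same mechanism as in Proposition \ref{prop1}.

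Next, evaluate the sum of squared weights by expanding $(\nu_1^m-\nu_2^m)^2=\nu_1^{2m}+\nu_2^{2m}-2(\nu_1\nu_2)^m$ and summing three geometric series:
\begin{equation*}
\sum_{m=0}^{\infty} w_m^2 = q^2\left(\frac{1}{1-\nu_1^2}+\frac{1}{1-\nu_2^2}-\frac{2}{1-\nu_1\nu_2}\right).
\end{equation*}
By the definition of $q$ in \eqref{eq:ewma3a}, this equals $1$, so the variance is $\vartheta_0$, as claimed in \eqref{eq:ewma3c}.

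The only point needing care, and the main obstacle, is that $q$ must be well defined: the bracket has to be strictly positive. This is where the hypothesis $\nu_1\neq\nu_2$ enters. The bracket equals $\sum_m(\nu_1^m-\nu_2^m)^2$, which is nonnegative. Its $m=1$ term $(\nu_1-\nu_2)^2$ is strictly positive whenever $\nu_1\neq\nu_2$, so the bracket is positive and $q$ is finite and real. Alternatively, one can check positivity directly as a Gram-determinant-type inequality for the vectors $(\nu_1^m)_m$ and $(\nu_2^m)_m$. Identical distribution is not needed beyond equal variances.
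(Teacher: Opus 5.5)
Your proof is correct and is essentially the paper's argument: the paper writes the variance as the two leg variances minus twice the raw-filter cross-covariance $(1-\nu_{1})(1-\nu_{2})\vartheta_{0}/(1-\nu_{1}\nu_{2})$ of equation \eqref{eq:ls3}, and this is exactly what you get by merging the legs into the weights $w_{m}=q(\nu_{1}^{m}-\nu_{2}^{m})$ and expanding $w_{m}^{2}$ term by term. Your observation that the bracket equals $\sum_{m}(\nu_{1}^{m}-\nu_{2}^{m})^{2}>0$ when $\nu_{1}\neq\nu_{2}$, so that $q$ is well defined, makes explicit a point the paper leaves implicit.
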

The proof is provided in Appendix \ref{sec:prop2}.

Figure \ref{fig:signal_weight} illustrates the filters for the unit-impulse sequence $(1,0,0,\ldots)$ at three combinations of spans.

\begin{figure}[tph]
	\begin{center}\hspace*{-3\baselineskip}\vspace*{-\baselineskip} 
		\includegraphics[width=0.78\textwidth, angle=0] {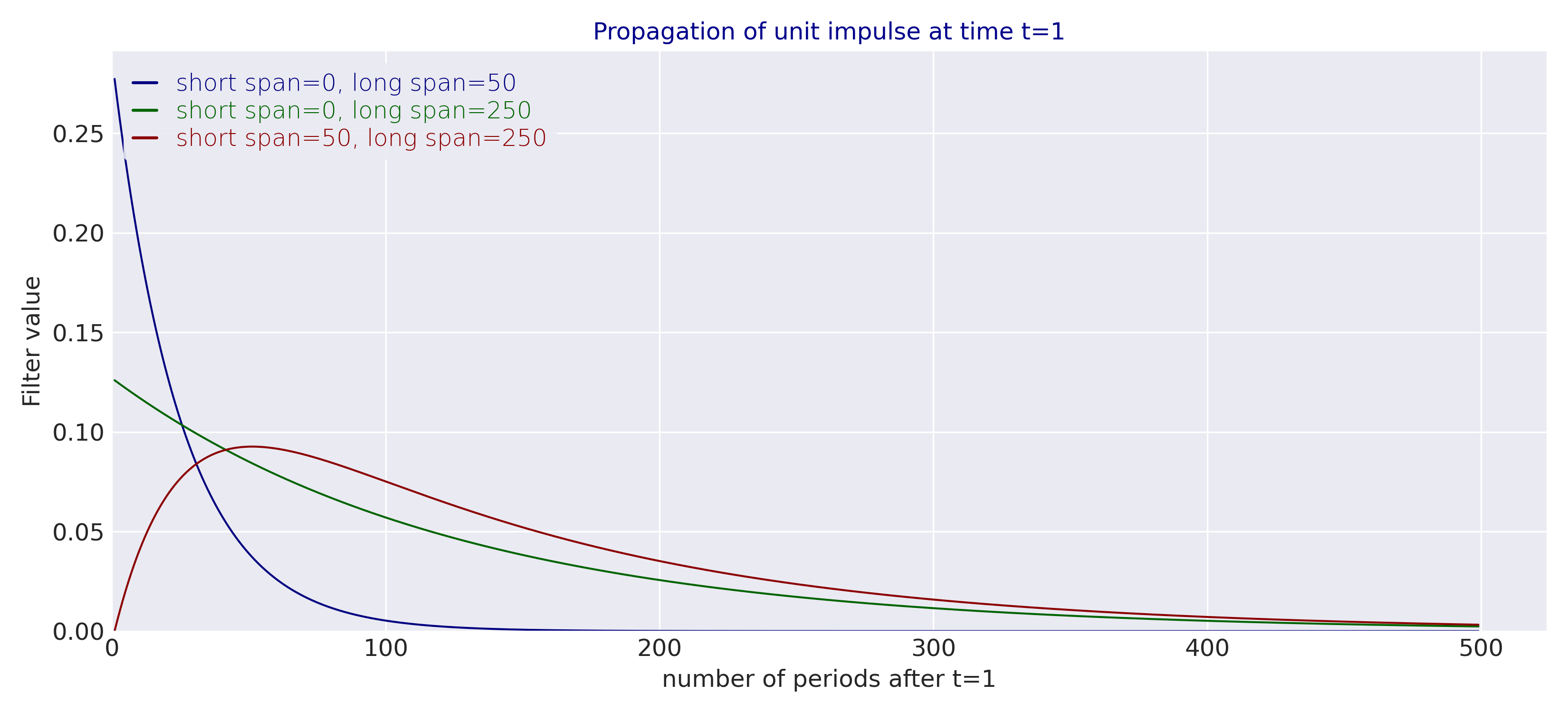}\hspace*{-2.5\baselineskip} 
	\end{center}
	\caption{Values of EWMA long-short filter applied to the unit-impulse sequence $(1,0,0,\ldots)$ using long span $N_{1}$ such that $\nu_{1}=1-2/(N_{1}+1)$ and short span $N_{2}$ such that $\nu_{2}=1-2/(N_{2}+1)$ for three combinations: $(N_{1}=50, N_{2}=0)$, $(N_{1}=250, N_{2}=0)$, $(N_{1}=250, N_{2}=50)$, where $N_{2}=0$ denotes the single-filter case without the short leg}
	\label{fig:signal_weight}
\end{figure}

\subsection{Volatility Estimation and Volatility-Normalized Returns}

Volatility is the key input for position sizing. Most often, it is estimated with the EWMA filter without mean adjustment, applied to the price changes in equation \eqref{eq:int1}:
\begin{equation} \label{eq:pricevol}
	\sigma^{price}_{t}=\sqrt{\mathcal{L}^{(\nu_{\sigma})}\left(d^{2}_{t}\right)},
\end{equation}
or for relative returns in equation \eqref{eq:int2} by:
\begin{equation} \label{eq:ewmavol}
	\sigma^{return}_{t}=\sqrt{\mathcal{L}^{(\nu_{\sigma})}\left(r^{2}_{t}\right)},
\end{equation}
where $\nu_{\sigma}$ is the corresponding smoothing parameter.

The American TF system measures volatility by the average true range (ATR) of daily prices, which we define in Appendix \ref{app:tfs}. The ATR-based estimates are highly correlated with the standard deviations of returns, with an average $R^{2}$ of $90\%$ across our core futures contracts, so both risk measures lead to similar system behavior.

We compute the volatility-normalized daily returns $z_{t}$ as follows:
\begin{equation} \label{eq:eu1} 
	z_{t} = \frac{r_{t}}{\sigma_{t-1}},
\end{equation}
where $\sigma_{t-1}$ is lag-1 volatility of daily returns. 

The returns $z_{t}$ can be interpreted as the returns of a strategy with volatility targeting in equation \eqref{eq:eu3}  with $\sigma_{\mathrm{target}}=\sqrt{a}$. Empirically, the realized variance of $z_{t}$ is close to one when the lagged volatility estimator is well calibrated, sufficiently stable, and tracks the conditional second moment of the returns. For S\&P 500 index futures, the EWMA volatility averages $17\%$, the volatility-target weight from equation \eqref{eq:eu3} with $\sigma_{\mathrm{target}}=15\%$ averages $1.09$, and the volatility-normalized returns from equation \eqref{eq:eu1} have a standard deviation of $0.96$.

\section{Trend-Following Systems}\label{sc:tfs}

We differentiate between three approaches for signal generation of TF systems, based on our experience and communications with CTA managers.

\begin{definition}[Trend-following systems] ~
	
	\textbf{1. European TF} system is based on continuous weights computed using the EWMA filter (other filters can also be applied) as in equations \eqref{eq:eu2} and \eqref{eq:eu3}. This approach is formulated by a number of large European CTA managers (see \cite{Bouchaud2017} and \cite{Baz2015}). European systems assume continuous position sizes proportional to the signal strength, so the number of contracts fluctuates day-to-day, up to discretization given contract sizes.
	
	\textbf{2. American TF} originates from early adopters of CTAs in North America\footnote{The labels European and American follow industry usage and do not refer to option exercise styles. The European system updates the position continuously, and the American system trades at discrete breakout and stop levels.}, in particular, from the so-called turtle traders in the 1980s (\citealp{Covel2009}, \citealp{Curtis2007}). Channel and breakout systems are tested by \cite{Lukac1988} and \cite{Szakmary2010}, \cite{ParkIrwin2007} survey technical trading rules, \cite{KaminskiLo2014} analyze the stop-loss exit leg, and \cite{LevinePedersen2016} and \cite{Brock1992} study moving-average crossover rules. 
	We generalize this approach in Definition \ref{as:am} following the design of the turtle systems. The American system trades break-outs from ranges with binary position sizes, so the exposure is fully allocated when the signal is on and zero otherwise.

	\textbf{3. Time Series Momentum (TSMOM) TF} system is based on the price momentum as in equation \eqref{eq:tsmom1} or, more generally, on the momentum of the returns adjusted for volatility as in equation \eqref{eq:tsmom2}. This approach is commonly used in academic studies (see \cite{Moskowitz2012}, \cite{Hurst2013}, \cite{Baltas2015}, and \cite{Dudler2015}). The rebalancing frequency is typically monthly in the cited literature. In equation \eqref{eq:tsmom2}, the frequency is set by the period length $L$ in days, and in practice we recommend weekly to monthly rebalancing with $L\in(5, 30)$.
	
\end{definition}

Although the three systems generate trend-following signals differently, Section \ref{sec:comp} finds parameter specifications under which they deliver similar risk-adjusted performances and closely correlated returns, matching CTA benchmarks such as the SG Trend Index.

\subsection{European TF System}

Our approach to generate the trend-following signal is close to that of \cite{Bouchaud2017} who specify the allocation to each contract by:
\begin{equation} \label{eq:etf1} 
	w_{t} = \frac{1}{\sigma_{t}}  \mathcal{L}^{(\nu)} \left(\frac{d_{t}}{\sigma_{t-1}}\right), 
\end{equation}
where $\mathcal{L}^{(\nu)}$ is the EWMA operator defined in equation \eqref{eq:ewma1}, $\sigma_{t}$ is the volatility estimated using the EWMA filter in equation \eqref{eq:pricevol}. We differ by using relative returns for the volatility in equation \eqref{eq:ewmavol} and volatility-normalized returns for the signal, with the full construction in Definition \ref{as:eu}. The meta-parameters are the smoothing value $\nu$ of the long filter or the pair $(\nu_{1}, \nu_{2})$ of a long-short filter.

\subsection{American and TSMOM TF Systems}

The American TF system generates discrete long and short positions from the crossover of two EWMA filters applied to futures prices. The entry rules apply a signal buffer, and the exit rules apply trailing stop-losses, with both scaled by the ATR of daily prices. Position sizes are binary and fixed at trade inception, so the system trades at discrete breakout and stop levels. We define the system in Appendix \ref{app:tfs}.

The TSMOM system sizes positions by the momentum of past returns, following \cite{Moskowitz2012}. We generalize the signal to the normalized sum of the signs of volatility-normalized returns over $M$ periods of $L$ days each, mirroring the European signal. We define the system in Appendix \ref{app:tfs}.

The three systems monetize the same underlying trend signal, and Section \ref{sec:comp} documents their strong pairwise correlations and their $80\%$ average correlation with the SG Trend Index. We focus the analysis on the European system, because its continuous signal admits exact closed-form results, and Section \ref{sec:attr} shows empirically that the European closed form ranks the realized performance of the American and TSMOM implementations.

\section{Construction of European TF System}\label{sc:eft}

\begin{definition}[European TF system]\label{as:eu} ~

	\begin{enumerate}
		
		\item We consider the sequence of daily returns $\{r_{t}\}$ of a continuous futures contract. We introduce an estimator for the daily (non-annualized) volatility of the daily returns that produces a sequence of estimated volatilities $\{\sigma_{t}\}$ of daily returns. We use volatility-normalized daily returns $z_{t}$ as computed in equation \eqref{eq:eu1}.

		\item The signal $S_{t}$ is computed using either the variance-preserving EWMA filter in equation \eqref{eq:ewma2} or the long-short filter in equation \eqref{eq:ewma3} as follows:
		\begin{equation} \label{eq:eu2} 
			S_{t} = \widetilde{\mathcal{L}}^{(\nu)} (z_{t}) , \ S_{t} = \widetilde{\mathcal{LS}}^{(\nu_{1}, \nu_{2})} (z_{t}). 
		\end{equation}
		
		\item The position size (or the position weight per trade level) is linearly proportional to the signal $S_{t}$ and to the volatility-target weight $w^{vt}_{t}$ as follows:
		\begin{equation} \label{eq:eu3} 
			w_{t} =  S_{t}w^{vt}_{t}, \ w^{vt}_{t}=\frac{\sigma_{\mathrm{target}}}{\sqrt{a}\sigma_{t}}, 
		\end{equation}
		where $\sigma_{\mathrm{target}}$ is the annualized volatility target and $a$ is the annualization factor ($a=260$ for daily returns, the weekday count of Definition \ref{eq:fm}).
		
	\end{enumerate}
	
\end{definition}

We stress that the goal of targeting volatility in equation \eqref{eq:eu3} is not risk management. Instead, we seek volatility-normalized returns for signal generation and for the performance of the TF system on the instrument level.

\begin{corollary}[Daily return of European TF system] Using equation \eqref{eq:eu3} for computing the end-of-day weights of the European TF system, the daily return of the system for each instrument is computed by\footnote{We assume end-of-day rebalancing at the close. In practice, orders are sampled and submitted shortly before the session close.}:
	\begin{equation} \label{eq:ret} 
		f_{t}  \equiv w_{t-1} r_{t} =  S_{t-1}  \frac{\sigma_{\mathrm{target}}}{\sqrt{a}\sigma_{t-1}}r_{t}= \frac{\sigma_{\mathrm{target}}}{\sqrt{a}}  S_{t-1} z_{t}.
	\end{equation}
	
\end{corollary}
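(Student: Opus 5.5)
The corollary is a direct substitution, so the plan is to unwind the definitions in order and check the timing convention. First I fix what $f_t$ means. The weight $w_{t-1}$ is set at the close of day $t-1$, using only information available then, and it is held over the interval $(t-1,t]$. Since $r_t$ is the excess return of the continuous futures contract (Definition \ref{eq:fm}), the per-unit-capital profit of holding weight $w_{t-1}$ over that day is $w_{t-1} r_t$. No funding term appears, because futures returns are already excess returns. This gives the first equality, which is essentially the definition of the system's daily return under end-of-day rebalancing.

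Next I apply equation \eqref{eq:eu3} at time $t-1$:
\begin{equation*}
w_{t-1} = S_{t-1}\, w^{vt}_{t-1} = S_{t-1}\,\frac{\sigma_{\mathrm{target}}}{\sqrt{a}\,\sigma_{t-1}}.
\end{equation*}
Substituting this into $w_{t-1} r_t$ gives the middle expression $S_{t-1}\frac{\sigma_{\mathrm{target}}}{\sqrt{a}\sigma_{t-1}} r_t$. The constant $\sigma_{\mathrm{target}}/\sqrt{a}$ factors out, leaving $S_{t-1}\, r_t/\sigma_{t-1}$. By equation \eqref{eq:eu1}, $r_t/\sigma_{t-1} = z_t$, which yields the final form $\frac{\sigma_{\mathrm{target}}}{\sqrt{a}} S_{t-1} z_t$.

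The only point needing care is the indexing of the volatility. The normalization in \eqref{eq:eu1} uses the lagged estimate $\sigma_{t-1}$, and the volatility-target weight held over day $t$ is also built from $\sigma_{t-1}$, since it is computed at the close of $t-1$. The two therefore match exactly, with no residual ratio $\sigma_{t-1}/\sigma_t$. I would state explicitly that $\sigma_{t-1}>0$, which holds whenever the EWMA estimator \eqref{eq:ewmavol} has a nonzero return in its history, so that the division is well defined. I would also note that $S_{t-1}$ is $\mathcal{F}_{t-1}$-measurable, which is what later results on expected return exploit. Beyond this bookkeeping there is no real obstacle: the identity is algebraic and holds pathwise for either choice of signal in \eqref{eq:eu2}.
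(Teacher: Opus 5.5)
Your proposal is correct and matches the paper's derivation, which is the same substitution: evaluate equation \eqref{eq:eu3} at $t-1$ and use $z_{t}=r_{t}/\sigma_{t-1}$ from equation \eqref{eq:eu1}. Your observation that the weight and the normalization share the same lagged volatility $\sigma_{t-1}$, so that no ratio $\sigma_{t-1}/\sigma_{t}$ survives, is exactly the reason the identity holds pathwise.
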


As a result, the day-$t$ return of the trend-following strategy is proportional to the volatility-normalized return weighted by the signal observed on the previous day, so the daily return is a path-dependent function of the past volatility-normalized returns.

In the idealized model in which the volatility-normalized returns are independent with zero mean and unit variance and the volatility denominator is treated as fixed, the variance-preserving signal has unit variance at every signal span, and the strategy variance equals $\sigma^{2}_{\mathrm{target}}/a$. With an estimated volatility, the variance of $z_{t}$ and of the weights depends on the span of the volatility estimator, so the invariance is approximate. The long-term volatility of the trend-following system is then close to $\sigma_{\mathrm{target}}$.

In Figure \ref{fig:ES1_short_signals}, we illustrate the European TF system on the S\&P 500 index future for fast, medium, and slow long-short EWMA filters. The standard deviation of all three filters is close to one in panel (A). The average weights in panel (B) are positive, reflecting the positive long-term drift of the index, and the weights become extreme during periods of low realized volatility, which CTA managers mitigate with floors in practice. The realized volatility of daily returns on the TF system in panel (C) is $18\%$ on average, close to the $15\%$ target.

\begin{figure}[t]
	\begin{center}\hspace*{-3\baselineskip}\vspace*{-\baselineskip} 
		\includegraphics[width=0.80\textwidth, angle=0] {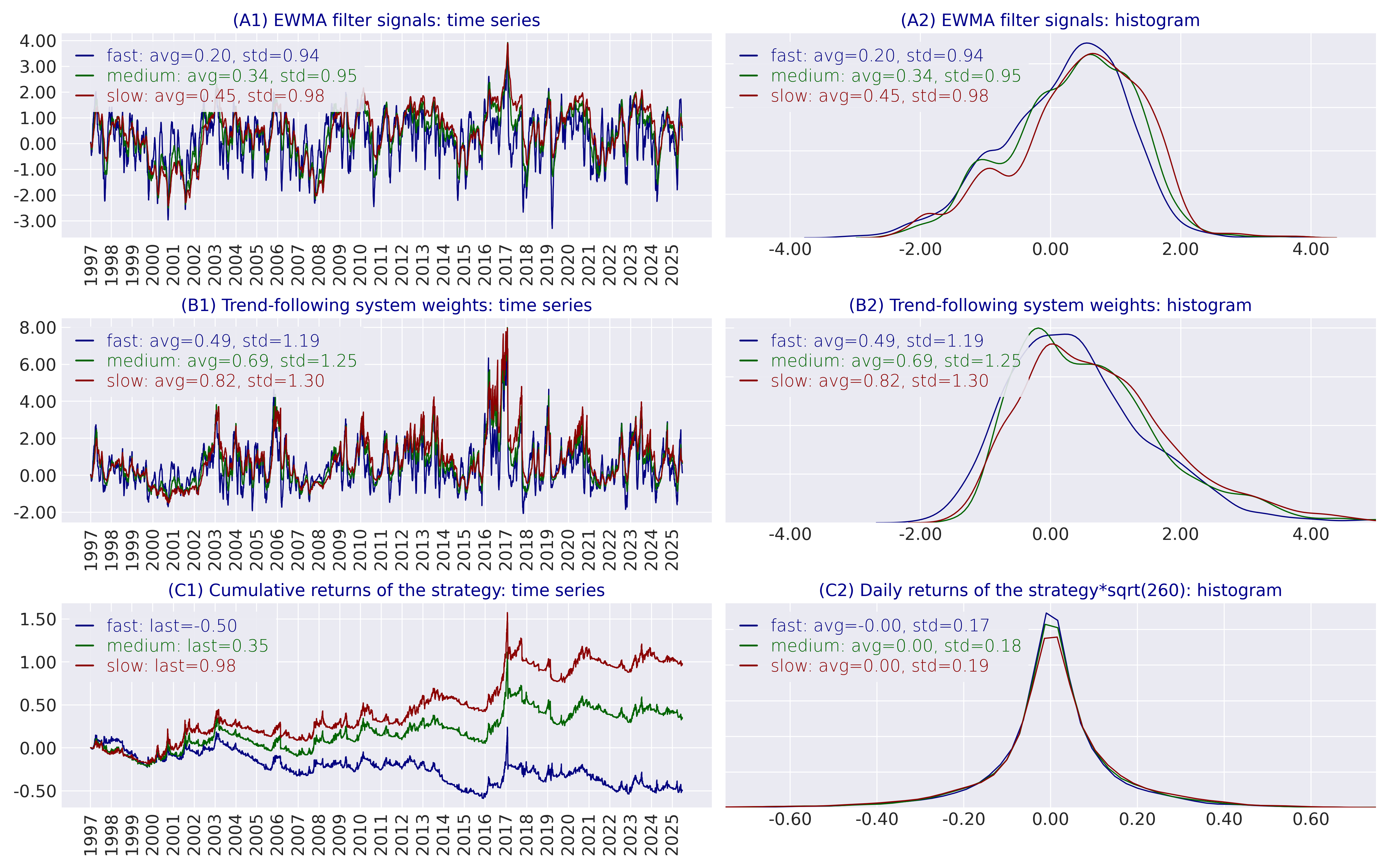}\hspace*{-2.5\baselineskip} 
	\end{center}
	\caption{Panels (A1) and (A2) show time series and histogram of EWMA long-short filters applied on volatility-normalized returns. Panels (B1) and (B2) show time series of weights with volatility target of $15\%$. Panels (C1) and (C2) show the cumulative P\&L and daily returns multiplied by $\sqrt{a}$ with $a=260$. Fast, medium, and slow long-short EWMA filters are computed using long spans of $30$, $125$, and $250$ days, respectively. The short span for all filters is $20$ days. The labels show average values (avg) and standard deviation (std). The underlying contract is S\&P 500 index futures with data taken from 31 December 1997 to 10 July 2026}
	\label{fig:ES1_short_signals}
\end{figure}

\subsection{Connection to Merton Problem and Kelly Sizing}\label{sc:kelly}

The connection between the position sizing of the European TF system in equation \eqref{eq:eu3} and the Merton problem for the optimization of long-term portfolio growth (\citealp{Merton1969}) offers a useful interpretation. Practitioners describe volatility targeting as fractional Kelly sizing (\citealp{Carver2023}), and we make the link explicit. \citet{Watt2026} revisits the discrete-time growth-optimal allocation over one risky and one risk-free asset, which is the setting of our daily grid with excess returns.

Returns on futures contracts are excess returns, so the Kelly rule of \citet{Merton1969}, $w^{*}=\tilde{\mu}/\sigma^{2}$, becomes $w^{*}=\left(\tilde{\mu}/\sigma\right)\times\left(1/\sigma\right)$, where the first term is the Sharpe ratio of the continuous futures and the second term is the volatility sizing. Volatilities and Sharpe ratios are time-inhomogeneous and cyclical, so we replace the two terms with their rolling estimates, $\widehat{w}^{*}_{t}=\widehat{SR}_{t}/\widehat{\sigma}_{t}$. The weight of the European TF system in equation \eqref{eq:eu3} has this form up to the filter loading, because the EWMA filter of the volatility-normalized returns estimates the Sharpe ratio, $\widehat{SR}_{t} = \mathcal{L}^{(\nu)} (z_{t})$. The variance-preserving filter applies the loading $l=\sqrt{\spn}$, so the weight is a span-dependent multiple of the Kelly fraction. The Kelly rule also assumes independent returns, while we apply the filter to serially dependent returns. For the white noise model with drift, equation \eqref{eq:wn3} gives the expected annual return of the TF system proportional to the square of the annualized drift, which is in line with the expected growth rate of one-half of the squared Sharpe ratio in the continuous-time Merton solution.

\subsection{Cumulative Return of European TF System}

By construction, the daily return of the European TF system is given in equation \eqref{eq:ret}. For generality, we consider the linear scaling of the EWMA filter for signal generation in equation \eqref{eq:eu2}:
\begin{equation} \label{eq:e1} 
	\mathcal{L}^{(\nu, l)} (z_{t})  \equiv l \mathcal{L}^{(\nu)} (z_{t}),
\end{equation}
where loading $l$ is $l=1$ for the EWMA filter $\mathcal{L}^{(\nu)}$ in equation \eqref{eq:ewma1}, $l=\sqrt{(1+\nu)/(1-\nu)}$ for the variance-preserving EWMA filter in equation \eqref{eq:ewma2}, and $l=l_{1}$ or $l=l_{2}$ for weights of the long-short EWMA filter in equation \eqref{eq:ewma3}. We develop analytical results for the single EWMA filter and then generalize results for the long-short filter using its linearity feature.

We present the daily return in equation \eqref{eq:ret} as follows:
\begin{equation} \label{eq:e2} 
	f_{t} = w_{t-1} r_{t} = \left[ \frac{\sigma_{\mathrm{target}}}{\sqrt{a}}\mathcal{L}^{(\nu, l)} (z_{t-1}) \right]\frac{ r_{t} }{\sigma_{t-1}}=   \frac{ l\sigma_{\mathrm{target}}}{\nu\sqrt{a}} \mathcal{E}_{t},
\end{equation}
where
\begin{equation} \label{eq:e3} 
	\mathcal{E}_{t} =    \nu z_{t}  \mathcal{L}^{(\nu)} (z_{t-1}).
\end{equation}

We consider the time period $T$ and the cumulative return over the period $T$, defined as the cumulative arithmetic P\&L rather than the compounded wealth return, as:
\begin{equation} \label{eq:e4} 
	F_{T} \equiv \sum^{T}_{t=1}  f_{t} = \frac{l\sigma_{\mathrm{target}} }{\nu\sqrt{a}} \sum^{T}_{t=1}  \mathcal{E}_{t}  = \frac{ l\sigma_{\mathrm{target}} }{\nu\sqrt{a}} E(T)
\end{equation}
where 
\begin{equation} \label{eq:e4a} 
	E(T)= \sum^{T}_{t=1}  \mathcal{E}_{t}.
\end{equation}

\begin{proposition} [Cumulative return]\label{pr:cumreturn} Assume that the EWMA series converges absolutely, for which the finite second moments of Assumption \ref{as:pr} are sufficient. The cumulative return of the European TF system over the period $T$ can be presented as follows:
	\begin{equation} \label{eq:cumreturn} 
		\begin{split}
			& F_{T} = \frac{ l\sigma_{\mathrm{target}} }{\nu\sqrt{a}} E(T)\\
			& E(T) =   \left(1-\nu \right)  \sum^{\infty}_{m=0} \nu^{m} \sum^{T}_{t=1}  \left( z_{t} -\overline{z}_{T} \right) \left(z_{t-m} -\overline{z}_{T} \right) -  \left(1-\nu \right) \sum^{T}_{t=1} \left(z_{t}-\overline{z}_{T} \right)^{2} + \nu T (\overline{z}_{T})^{2} + R_{T},
		\end{split}
	\end{equation}
	where the boundary term $R_{T}$ collects the interaction of the sample mean with the window shift of the lagged returns:
	\begin{equation} \label{eq:cumreturn_rt}
		R_{T} = \overline{z}_{T}\left(1-\nu \right) \sum^{\infty}_{m=1} \nu^{m} \Delta_{m}, \qquad \Delta_{m} = \sum^{T}_{t=1} z_{t-m} - T\overline{z}_{T}.
	\end{equation}
	and $\overline{z}_{T}$ is the sample mean:
	\begin{equation} \label{eq:mean} 
		\overline{z}_{T} = \frac{1}{T} \sum^{T}_{t=1} z_{t}.
	\end{equation}
	
	In equation \eqref{eq:cumreturn} for the cumulative return $E(T)$, the  terms are the sample autocorrelation term, the sample variance term, the squared sample mean, and residual, respectively.
	
\end{proposition}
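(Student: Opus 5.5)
The plan is a direct algebraic rearrangement of the sum $E(T)$ in equation \eqref{eq:e4a}, so no probabilistic result is needed beyond the convergence hypothesis. First I unfold the signal. By the definition of the EWMA filter in equation \eqref{eq:ewma1}, $\mathcal{L}^{(\nu)}(z_{t-1}) = (1-\nu)\sum_{k\ge 0}\nu^{k} z_{t-1-k}$. Substituting into equation \eqref{eq:e3} and re-indexing with $m=k+1$ gives $\mathcal{E}_t = (1-\nu)\sum_{m\ge 1}\nu^{m} z_t z_{t-m}$.

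Next I sum over $t=1,\dots,T$. Because $T$ is finite and each EWMA series converges absolutely (guaranteed by the finite second moments of Assumption \ref{as:pr}), the finite $t$-sum can be interchanged with the $m$-series. Introducing the raw lagged cross-products $C_m = \sum_{t=1}^{T} z_t z_{t-m}$, this yields
\begin{equation*}
E(T) = (1-\nu)\sum_{m\ge 1}\nu^{m} C_m = (1-\nu)\sum_{m\ge 0}\nu^{m} C_m - (1-\nu) C_0 .
\end{equation*}
Writing the series from $m=0$ and subtracting the $m=0$ term is what produces the sample-variance term in the statement.

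The key step is to centre each $C_m$ at the sample mean $\overline{z}_T$. Expanding the product,
\begin{equation*}
\sum_{t=1}^{T}(z_t-\overline{z}_T)(z_{t-m}-\overline{z}_T) = C_m - \overline{z}_T\sum_{t=1}^{T} z_{t-m} - T\overline{z}_T^{2} + T\overline{z}_T^{2} = C_m - \overline{z}_T\left(\Delta_m + T\overline{z}_T\right),
\end{equation*}
where I used $\sum_{t=1}^{T} z_t = T\overline{z}_T$ and the definition of $\Delta_m$ in equation \eqref{eq:cumreturn_rt}. Hence
\begin{equation*}
C_m = \sum_{t=1}^{T}(z_t-\overline{z}_T)(z_{t-m}-\overline{z}_T) + T\overline{z}_T^{2} + \overline{z}_T\Delta_m ,
\end{equation*}
and $\Delta_0=0$, so $C_0$ equals the centred sum of squares plus $T\overline{z}_T^{2}$.

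I then substitute these expressions into $E(T)=(1-\nu)\sum_{m\ge1}\nu^m C_m$ and collect terms:
\begin{itemize}
\item The centred products give $(1-\nu)\sum_{m\ge 0}\nu^{m}(\cdots)$ minus the $m=0$ term $(1-\nu)\sum_{t}(z_t-\overline{z}_T)^2$. These are the autocorrelation and variance terms of equation \eqref{eq:cumreturn}.
\item The constant $T\overline{z}_T^{2}$ contributes $(1-\nu)\sum_{m\ge 1}\nu^m\, T\overline{z}_T^{2} = \nu T\overline{z}_T^{2}$ by the geometric series. This is the squared-mean term.
\item The $\overline{z}_T\Delta_m$ pieces give exactly $R_T$.
\end{itemize}
Multiplying by the prefactor $l\sigma_{\mathrm{target}}/(\nu\sqrt{a})$ from equation \eqref{eq:e4} then gives $F_T$.

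The main obstacle is only bookkeeping: justifying the interchange of sums and splitting the series so that each rearranged piece converges separately. The centred series converges because the $z_{t-m}$ terms are square-summable against $\nu^m$. The series $\sum_m \nu^m\Delta_m$ converges because $|\Delta_m|$ grows at most like a sum of $T$ terms with finite second moments. I would state this convergence check explicitly before regrouping.
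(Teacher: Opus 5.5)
Your proposal is correct and follows the paper's proof essentially step for step: unfold the EWMA signal, add and subtract the $m=0$ term, center each lagged cross-product at $\overline{z}_{T}$ (which produces the boundary pieces $\overline{z}_{T}\Delta_{m}$ with $\Delta_{0}=0$), and collect terms. The only difference is cosmetic: you sum the constant $T(\overline{z}_{T})^{2}$ over $m\geq 1$ directly to get $\nu T(\overline{z}_{T})^{2}$, whereas the paper obtains it as $T(\overline{z}_{T})^{2}-(1-\nu)T(\overline{z}_{T})^{2}$, and your explicit check that the centered series and $\sum_{m}\nu^{m}\Delta_{m}$ converge separately before regrouping is a welcome addition that the paper leaves implicit.
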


Proof is provided in Appendix \ref{sec:cumreturn}.

\subsection{Cumulative Return through Sample Autocorrelation}

We introduce the sample autocorrelation of volatility-normalized returns $\{z_{t}\}$ realized over the period $T$ as follows:
\begin{equation} \label{eq:e11} 
	\begin{split}
		\hat{\rho}_{T}(m) & = \frac{\hat{\gamma}_{T}(m)}{\hat{\gamma}_{T}(0)},\\
		\hat{\gamma}_{T}(0)  & = \frac{1}{T} \sum^{T}_{t=1} \left( z_{t} -\overline{z}_{T}\right)^{2}, \\
		\hat{\gamma}_{T}(m)  & = \frac{1}{T} \sum^{T}_{t=1} \left(z_{t}-\overline{z}_{T}\right) \left(z_{t-m}-\overline{z}_{T}\right).
	\end{split}
\end{equation}

Two conventions and one boundary term make the identity in Proposition \ref{pr:cumreturn} exact. First, the filter uses the observations $z_{t-m}$ for all $m\geq 0$, so the data extend before the sample start by the memory of the filter. Second, the estimator $\hat{\gamma}_{T}(m)$ in equation \eqref{eq:e11} sums over $t=1,\ldots,T$ and uses the pre-sample observations, while the conventional estimator sums over $t=m+1,\ldots,T$, and the ratio $\hat{\rho}_{T}(m)$ normalizes by $\hat{\gamma}_{T}(0)$ alone, so it is an identity-specific lag product and need not lie in $[-1,1]$. Third, the boundary term $R_{T}$ collects the cross products of the sample mean with the shifted windows, because the sample mean centers the window $t=1,\ldots,T$ and not its lagged copies, and it vanishes when $\overline{z}_{T}=0$. Under Assumption \ref{as:pr}, $\mathbb{E}\left|\Delta_{m}\right|$ grows at most linearly in $m$, so $R_{T}$ is of the order $\spn/T$ in expected magnitude relative to $E(T)$ at a nonzero expected return, and the centered decomposition without $R_{T}$ is an approximation for $T \gg \spn$. Truncating the filter at the sample start adds a separate term of the same order. The population statement in Corollary \ref{col:cumreturn} is free of all three effects under Assumption \ref{as:pr}.

\begin{corollary} The sample cumulative return $\hat{F}_{T}$ of equation \eqref{eq:e4} follows from the sample estimators in equation \eqref{eq:e11}:
	\begin{equation} \label{eq:e12} 
		\begin{split}
			\hat{E}(T)& =   \left(1-\nu \right)   T  \hat{\gamma}_{T}(0)\left(  \left[ \sum^{\infty}_{m=0} \nu^{m} \hat{\rho}_{T}(m) -  1 \right] + \frac{\nu}{1-\nu}\frac{(\overline{z}_{T})^{2}}{\hat{\gamma}_{T}(0)}\right) + R_{T},
		\end{split}
	\end{equation}
	\begin{equation} \label{eq:e14} 
		\begin{split}
			\hat{F}_{T} &   =\left( \frac{l\sigma_{\mathrm{target}} }{\sqrt{a}} \frac{\left(1-\nu \right)}{\nu}   T \right)    \hat{\gamma}_{T}(0)  \left( \left[ \sum^{\infty}_{m=0} \nu^{m} \hat{\rho}_{T}(m) -  1 \right]  + \frac{\nu}{1-\nu}\frac{(\overline{z}_{T})^{2}}{\hat{\gamma}_{T}(0)} \right) + \frac{l\sigma_{\mathrm{target}}}{\nu\sqrt{a}}\, R_{T}.
		\end{split}
	\end{equation}
	
\end{corollary}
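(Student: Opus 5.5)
The corollary is a direct rewriting of Proposition \ref{pr:cumreturn} in terms of the sample estimators \eqref{eq:e11}. I take the identity \eqref{eq:cumreturn} as given and express each of its first three terms through $\hat{\gamma}_{T}(0)$ and $\hat{\rho}_{T}(m)$. The boundary term $R_{T}$ is carried along unchanged.

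First, by the definition in \eqref{eq:e11}, the inner lag sum is $\sum_{t=1}^{T}(z_{t}-\overline{z}_{T})(z_{t-m}-\overline{z}_{T}) = T\hat{\gamma}_{T}(m) = T\hat{\gamma}_{T}(0)\hat{\rho}_{T}(m)$ for every $m\ge 0$. This uses the identity-specific convention that the sum runs over $t=1,\ldots,T$ with pre-sample observations and that $\hat{\rho}_{T}$ normalizes by $\hat{\gamma}_{T}(0)$ only. Hence the autocorrelation term becomes $(1-\nu)T\hat{\gamma}_{T}(0)\sum_{m\ge 0}\nu^{m}\hat{\rho}_{T}(m)$. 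The interchange of the $m$-sum with the finite $t$-sum is harmless, because the absolute convergence assumed in Proposition \ref{pr:cumreturn} guarantees convergence of $\sum_m \nu^m \hat{\rho}_T(m)$ pathwise.

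Second, the variance term is $-(1-\nu)\sum_{t}(z_{t}-\overline{z}_{T})^{2} = -(1-\nu)T\hat{\gamma}_{T}(0)$. Third, the drift term $\nu T\overline{z}_{T}^{2}$ is rewritten as $(1-\nu)T\hat{\gamma}_{T}(0)\cdot\frac{\nu}{1-\nu}\overline{z}_{T}^{2}/\hat{\gamma}_{T}(0)$. Factoring out $(1-\nu)T\hat{\gamma}_{T}(0)$ gives \eqref{eq:e12}. Multiplying by the prefactor $l\sigma_{\mathrm{target}}/(\nu\sqrt{a})$ from \eqref{eq:e4} yields \eqref{eq:e14}, with the remainder becoming $\frac{l\sigma_{\mathrm{target}}}{\nu\sqrt{a}}R_{T}$.

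The only point needing care is the division by $\hat{\gamma}_{T}(0)$. I will note that it is positive unless $z_{1}=\dots=z_{T}$. In that degenerate case, \eqref{eq:e12} is read in its multiplied-out form, which coincides with \eqref{eq:cumreturn} directly. Apart from this, the argument is pure bookkeeping and presents no real obstacle.
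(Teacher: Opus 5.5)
Your proposal is correct and follows the same route as the paper, which gives no separate proof and treats the corollary as an immediate substitution of $\sum_{t=1}^{T}(z_{t}-\overline{z}_{T})(z_{t-m}-\overline{z}_{T})=T\hat{\gamma}_{T}(0)\hat{\rho}_{T}(m)$ into the identity of Proposition \ref{pr:cumreturn}. Factoring out $(1-\nu)T\hat{\gamma}_{T}(0)$ and multiplying by the prefactor of equation \eqref{eq:e4} then gives both displays, and your remark on the degenerate case $\hat{\gamma}_{T}(0)=0$ is a harmless addition that the paper does not make.
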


The boundary term $R_{T}$ enters equation \eqref{eq:e14} scaled by the prefactor of equation \eqref{eq:e4}. The cumulative return is proportional to the realized variance $\hat{\gamma}_{T}(0)$ and to the sum of two leading drivers: the moment of realized autocorrelation weighted by $\nu^{m}$, $\sum^{\infty}_{m=0} \nu^{m} \hat{\rho}_{T}(m) -  1$, and the square of the daily sample Sharpe ratio, $(\overline{z}_{T})^{2}/\hat{\gamma}_{T}(0)$, whose annualized counterpart carries the factor $a$. For volatility-normalized returns, $\hat{\gamma}_{T}(0)$ is close to one, so the realized autocorrelation is the key driver of the cumulative return. The $\nu$-weighted sum of the sample autocorrelations at positive lags must be positive for the strategy to profit at zero drift. The boundary term $R_{T}$ in equation \eqref{eq:cumreturn_rt} is bilinear in the sample mean $\overline{z}_{T}$ and the shifted-window discrepancies $\Delta_{m}$.

\subsection{Turnover Analysis}

The average volatilities of futures returns range from $2\%$ for short rates to above $30\%$ for most commodities, so, after volatility-normalized position sizing, the realized turnover of managed futures programs may be dominated by a few low-volatility contracts with inflated notional exposures. We introduce the volatility-normalized turnover $U_{t}$ as a stable measure of the turnover of futures contracts.

\begin{definition}[Volatility-normalized turnover] We define $U_{t}$ as follows:
	\begin{equation} \label{eq:tur1}  
		U_{t}  \equiv \sqrt{a}\sigma_{t}\left| w_{t} -  w_{t-1}\right|,
	\end{equation}
	where $\sigma_{t}$ is the volatility of daily returns and $a$ is the annualization factor.
\end{definition}

\begin{proposition}\label{pr:turnover1} Assuming serially independent Gaussian $z_{t}$ with unit variance $\vartheta_{0}=1$ and the specification of the weight in equation \eqref{eq:eu3} for the European TF system with the variance-preserving EWMA filter in equation \eqref{eq:ewma2}, the expected annualized signal turnover $U^{(signal)}_{t} = \sigma_{\mathrm{target}}\left|S_{t}-S_{t-1}\right|$ is given by:
	\begin{equation} \label{eq:tur2} 
		\mathbb{E}\left[ a U^{(signal)}_{t} \right]    =  \frac{2a}{\sqrt{\pi}}\sigma_{\mathrm{target}} \sqrt{1-\nu}.
	\end{equation}
\end{proposition}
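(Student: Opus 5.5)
We start from the recursion in equation \eqref{eq:ewma1}, which makes the signal increment an explicit linear combination of Gaussian variables. Write $S_{t}=\tilde{l}\,\mathcal{L}^{(\nu)}(z_{t})$ with $\tilde{l}=\sqrt{(1+\nu)/(1-\nu)}$. Then
\begin{equation*}
S_{t}-S_{t-1}=\tilde{l}\left(\mathcal{L}^{(\nu)}(z_{t})-\mathcal{L}^{(\nu)}(z_{t-1})\right)=\tilde{l}(1-\nu)\left(z_{t}-\mathcal{L}^{(\nu)}(z_{t-1})\right).
\end{equation*}
The term $\mathcal{L}^{(\nu)}(z_{t-1})$ depends only on $z_{t-1},z_{t-2},\ldots$, so it is independent of $z_{t}$. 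Because the $z_{t}$ are independent Gaussian and the EWMA series converges in $L^{2}$, the increment is a centered Gaussian variable. We treat the mean as zero, as in Proposition \ref{prop1}; a nonzero drift would only add a shift.

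Next we compute the variance of the increment. Independence and equation \eqref{eq:v7} with $\vartheta_{0}=1$ give
\begin{equation*}
\Var\left[S_{t}-S_{t-1}\right]=\frac{1+\nu}{1-\nu}(1-\nu)^{2}\left(1+\frac{1-\nu}{1+\nu}\right)=(1-\nu)(1+\nu)\cdot\frac{2}{1+\nu}=2(1-\nu).
\end{equation*}
As a cross-check, Proposition \ref{prop1} gives $\Var[S_{t}]=1$, and $\Cov[S_{t},S_{t-1}]=\nu$ because the AR(1) recursion $S_{t}=\nu S_{t-1}+\tilde{l}(1-\nu)z_{t}$ holds. This again yields $2-2\nu$.

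The last step uses the folded-normal identity: for $X\sim N(0,s^{2})$, $\mathbb{E}|X|=s\sqrt{2/\pi}$. With $s^{2}=2(1-\nu)$ this gives
\begin{equation*}
\mathbb{E}\left|S_{t}-S_{t-1}\right|=\sqrt{2(1-\nu)}\sqrt{2/\pi}=\frac{2}{\sqrt{\pi}}\sqrt{1-\nu}.
\end{equation*}
Multiplying by $a\,\sigma_{\mathrm{target}}$, which is deterministic, produces equation \eqref{eq:tur2}.

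The only delicate point is justifying that the infinite-lag filter output is exactly Gaussian and independent of $z_{t}$. We handle it by taking the $L^{2}$ limit of finite Gaussian sums, which stays Gaussian and is measurable with respect to the past. Since the statement concerns only the signal component, the fluctuation of $\sigma_{t}$ in $w^{vt}_{t}$ never enters the argument.
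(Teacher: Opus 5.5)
Your proof is correct and follows the paper's route: you use the EWMA recursion to write the increment as $\tilde{l}(1-\nu)\bigl(z_{t}-\mathcal{L}^{(\nu)}(z_{t-1})\bigr)$, obtain the variance $2(1-\nu)$ from independence and the variance-preserving property, and finish with the Gaussian absolute-moment formula (the paper first splits the full turnover $U_{t}$ by the triangle inequality, which the statement as posed does not need). Your drift caveat is unnecessary and slightly wrong: the EWMA weights sum to one, so $\mathbb{E}\bigl[z_{t}-\mathcal{L}^{(\nu)}(z_{t-1})\bigr]=\mu-\mu=0$, the increment is centered for any mean $\mu$, and the formula holds without assuming zero drift.
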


For the European TF system, the leading order of the expected turnover is inversely proportional to the square root of the EWMA filter span. The formula is free of process parameters only under the independence assumption, because the variance of the signal increment loads on the autocorrelation function when the returns are serially dependent. The full turnover $U_{t}$ in equation \eqref{eq:tur1} contains both the signal-increment and the volatility-update components, which the absolute value does not separate additively, so the closed forms serve as the independence-based signal-turnover proxy for $\mathbb{E}[aU_{t}]$.

\begin{proposition}\label{pr:turnover2}
	
	Assuming serially independent Gaussian $z_{t}$ with unit variance $\vartheta_{0}=1$ and the long-short EWMA filter in equation \eqref{eq:ewma3} for the specification of the weight in equation \eqref{eq:eu3}, the expected annualized signal turnover of the long-short filter is given by:
	\begin{equation} \label{eq:tur_ls} 
		\mathbb{E}\left[ a U^{(signal)}_{t} \right]   =  \frac{2a}{\sqrt{\pi}}\sigma_{\mathrm{target}} \sqrt{\zeta},
	\end{equation}
	where $\zeta$ is given through the normalization $q$ of equation \eqref{eq:ewma3a} by
	\begin{equation*}
		\zeta = \frac{q^{2}}{2}\left( \frac{1-\nu_{1}}{1+\nu_{1}}  +  \frac{1-\nu_{2}}{1+\nu_{2}}  - \frac{2\left( 1-\nu_{1}\right) \left( 1-\nu_{2}\right)}{1-\nu_{1}\nu_{2}}\right).
	\end{equation*}
	
\end{proposition}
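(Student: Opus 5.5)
The plan is to reduce the claim to the first absolute moment of a centered Gaussian variable. Since $U^{(signal)}_{t}=\sigma_{\mathrm{target}}\left|S_{t}-S_{t-1}\right|$, it suffices to show two things. First, $X_t\equiv S_t-S_{t-1}$ is Gaussian with mean zero. Second,
\begin{equation*}
\Var[X_t]=2\zeta .
\end{equation*}
Then $\mathbb{E}|X_t|=\sqrt{2/\pi}\,\sqrt{2\zeta}=\tfrac{2}{\sqrt{\pi}}\sqrt{\zeta}$. Multiplying by $a\,\sigma_{\mathrm{target}}$ gives equation \eqref{eq:tur_ls}. The proof of Proposition \ref{pr:turnover1} is the special case with a single leg, so I would follow the same route.

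\textbf{Step 1: write the increment as a linear filter.} By the recursion in equation \eqref{eq:ewma1},
\begin{equation*}
\mathcal{L}^{(\nu)}(z_t)-\mathcal{L}^{(\nu)}(z_{t-1})=(1-\nu)\Big(z_t+\sum_{m\ge1}(\nu^{m}-\nu^{m-1})z_{t-m}\Big).
\end{equation*}
By the last line of equation \eqref{eq:ewma3}, $S_t=l_1\mathcal{L}^{(\nu_1)}(z_t)-l_2\mathcal{L}^{(\nu_2)}(z_t)$. Equation \eqref{eq:ewma3a} gives $l_i(1-\nu_i)=q$. Hence $X_t=q\sum_{m\ge0}a_m z_{t-m}$, where
\begin{equation*}
a_0 = 1-1 = 0,\qquad a_m=(1-\nu_2)\nu_2^{m-1}-(1-\nu_1)\nu_1^{m-1}\quad (m\ge1).
\end{equation*}
The coefficients are square-summable geometric sequences. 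Hence the series converges in $L^2$, and also almost surely. An $L^2$ limit of centered jointly Gaussian partial sums is centered Gaussian, so $X_t$ is Gaussian with mean zero.

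\textbf{Step 2: compute the variance.} Serial independence and unit variance give $\Var[X_t]=q^2\sum_{m\ge1}a_m^2$. Expanding the square produces three geometric series:
\begin{equation*}
\sum_{m\ge1}(1-\nu_i)^2\nu_i^{2(m-1)}=\frac{1-\nu_i}{1+\nu_i},\qquad \sum_{m\ge1}(1-\nu_1)(1-\nu_2)(\nu_1\nu_2)^{m-1}=\frac{(1-\nu_1)(1-\nu_2)}{1-\nu_1\nu_2}.
\end{equation*}
The cross term enters with coefficient $-2$. Therefore $\Var[X_t]=2\zeta$, with $\zeta$ exactly as stated.

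\textbf{Main obstacle.} The only delicate point is bookkeeping. One must notice that the lag-$0$ coefficients of the two legs cancel precisely because $l_1(1-\nu_1)=l_2(1-\nu_2)=q$. Without this cancellation, $Z_t$ would add an extra $q^2$-weighted term to the variance. One must also check that the factor $\tfrac{1}{2}$ in $\zeta$ is matched by the $\sqrt{2}$ coming from the half-normal mean $\sqrt{2/\pi}$. Everything else is routine summation of geometric series.
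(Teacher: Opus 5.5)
Your proof is correct and follows essentially the same route as the paper. The identity $l_1(1-\nu_1)=l_2(1-\nu_2)=q$ cancels the contemporaneous term, so the increment equals $q\bigl(\mathcal{L}^{(\nu_2)}(z_{t-1})-\mathcal{L}^{(\nu_1)}(z_{t-1})\bigr)$, whose lag weights are exactly your $a_m$; its variance $2\zeta$ comes from the two filter variances and the cross-covariance, and the Gaussian absolute-moment formula gives the result (your explicit $L^2$-limit argument for Gaussianity is a small addition the paper leaves implicit).
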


Proofs are given in Appendix \ref{sec:ta1} and \ref{sec:ta2}. The closed forms are exact for the signal turnover under the stated assumptions and serve as leading-order proxies for the full turnover. The derivation retains the signal-increment component, omits the volatility-update component, and applies the Gaussian mean-absolute-value formula to the signed signal increment, which is exact for Gaussian innovations. Across the white-noise, AR-1, and ARFIMA exhibits, the proxy stays within $8.4\%$ of the signal-level MC turnover, with the largest gap at the shortest span, and the residual gap arises from the serial dependence, the non-Gaussianity, and the volatility-normalization pipeline. Within the full pipeline at the volatility span of $33$ days, the single-filter turnover exceeds the proxy by about $4\%$. For the long-short filter, the small signal increment amplifies the volatility-update term, and the proxy understates the pipeline turnover by a factor of $1.6$ to $2.3$ across our processes, and by at most $29\%$ at the volatility span of $250$ days (Section \ref{ssec:sr_mc}).

At the spans of the empirical section, the signal-turnover proxy gives $393\%$ per year for the single 250-day filter and $88\%$ for LS(250,20), which is a significant reduction for paractical purposes, because the two legs carry the identical loading $q$ on the contemporaneous innovation, which cancels exactly in the signal increment.

\subsection{Expected Cumulative Return}

We state the assumption on the volatility-normalized returns that underlies our analytical population-based results on the performance of the European TF system.
\begin{assumption}[Characteristics of volatility-normalized returns $z_{t}$]\label{as:pr} ~
	We consider the volatility-normalized returns $z_{t}$ in equation \eqref{eq:eu1} with the following population-based characteristics for the expected mean, the variance, and the autocorrelation function, respectively, under statistical measure $\mathbb{P}$:
	\begin{equation} \label{eq:e15} 
		\begin{split}
			& \mathbb{E}^{\mathbb{P}} \left[ z_{t} \right] = \mu,\\
			& \Var^{\mathbb{P}} \left[ z_{t} \right] = \vartheta > 0,\\
			& \mathbb{E}^{\mathbb{P}} \left[ \frac{1}{\vartheta} (z_{t}-\mu)(z_{t-m}-\mu) \right] = \rho(m), \ m = 1, 2, \ldots.
		\end{split}
	\end{equation}
	We set $\rho(0)=1$, extending the autocorrelation function to the zero lag.
\end{assumption}

Assumption \ref{as:pr} requires no specification of the raw returns $r_{t}$, because the systems trade the volatility-normalized returns $z_{t}$ and every population formula depends on the process only through the moments of $z_{t}$. When the volatility is constant, the autocorrelation functions of $z_{t}$ and $r_{t}$ coincide, and the annualized mean of $z_{t}$ equals the annualized Sharpe ratio of the instrument. More generally, if $r_{t} = \sigma_{t-1} x_{t}$ with a standardized stationary process $x_{t}$ independent of the predictable volatility, then $\rho_{r}(m) = \rho_{x}(m)\,\mathbb{E}[\sigma_{t}\sigma_{t-m}]/\mathbb{E}[\sigma^{2}_{t}]$, so time-varying volatility dampens the autocorrelation of the raw returns, and the normalization recovers it. Under the estimated EWMA volatility, the normalization keeps $z_{t}$ close to a stationary process with unit variance, and its autocorrelation is a dampened version of the raw-return autocorrelation, which Section \ref{ssec:sr_mc} quantifies. The empirical inputs of Section \ref{sec:attr} are estimated directly from $z_{t}$, so the formulas are applied to the process that they assume.

\begin{corollary}[Expected cumulative return of European TF system]\label{col:cumreturn} Under Assumption \ref{as:pr}, the expected cumulative return of equation \eqref{eq:e4} under $\mathbb{P}$ is given by:
	\begin{equation} \label{eq:ft1} 
		\bar{F}_{T} \equiv \mathbb{E}^{\mathbb{P}} \left[ F_{T}\right] = \left( \frac{l\sigma_{\mathrm{target}} }{\sqrt{a}}  \frac{1-\nu }{\nu}    \vartheta T \right)   \left[ \sum^{\infty}_{m=0} \nu^{m}\rho(m) -  1 \right]  +  \left( \frac{l\sigma_{\mathrm{target}} }{\sqrt{a}}  T \right)   \mu^{2},  
	\end{equation}
	where the loading $l$ is specified in equation \eqref{eq:e1}.

	For the long-short EWMA filter, the expected cumulative return is given by:
	\begin{equation} \label{eq:ft2} 
		\begin{split}
			\bar{F}_{T}  &   = \left( \frac{\sigma_{\mathrm{target}} }{\sqrt{a}}  \vartheta T\right) \left( l_{1}\frac{1-\nu_{1} }{\nu_{1}}  \left[ \sum^{\infty}_{m=0} \nu^{m}_{1}\rho(m) -  1 \right]  -  l_{2}\frac{1-\nu_{2} }{\nu_{2}}  \left[ \sum^{\infty}_{m=0} \nu^{m}_{2}\rho(m) -  1 \right]\right)\\
			& +  \left( \frac{\sigma_{\mathrm{target}} }{\sqrt{a}}  \mu^{2}   T \right)    \left(l_{1}-l_{2}\right),
		\end{split}
	\end{equation}
	where the loadings $l_{1}$ and $l_{2}$ are specified in equation \eqref{eq:ewma3a}.
	
\end{corollary}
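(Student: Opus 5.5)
The plan is to compute the expectation directly from the definition of $F_T$ in equation \eqref{eq:e4}, rather than taking expectations in the sample-path identity of Proposition \ref{pr:cumreturn}. That identity contains the sample mean and the boundary term $R_T$, whose expectations would each need separate bookkeeping.

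\textbf{Single filter.} By equations \eqref{eq:e3} and \eqref{eq:ewma1},
\[
\mathcal{E}_t=\nu(1-\nu)\sum_{m=0}^{\infty}\nu^m z_t z_{t-1-m}=(1-\nu)\sum_{k=1}^{\infty}\nu^{k} z_t z_{t-k}.
\]
Under Assumption \ref{as:pr}, the moments $\mathbb{E}|z_t z_{t-k}|$ are bounded uniformly by $\vartheta+\mu^2$ through Cauchy--Schwarz. Since the weights $\nu^k$ are summable, Fubini applies and we may exchange expectation and the infinite sum. Using $\mathbb{E}[z_t z_{t-k}]=\vartheta\rho(k)+\mu^2$, we get
\[
\mathbb{E}[\mathcal{E}_t]=(1-\nu)\vartheta\sum_{k=1}^{\infty}\nu^k\rho(k)+(1-\nu)\mu^2\frac{\nu}{1-\nu}
=(1-\nu)\vartheta\Big[\sum_{m=0}^{\infty}\nu^m\rho(m)-1\Big]+\nu\mu^2,
\]
where the bracket uses $\rho(0)=1$. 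This expression does not depend on $t$, by stationarity of the first two moments. Summing over $t=1,\ldots,T$ and multiplying by the prefactor $l\sigma_{\mathrm{target}}/(\nu\sqrt{a})$ gives equation \eqref{eq:ft1}: the $\nu$ in $\nu\mu^2$ cancels against the $1/\nu$ in the prefactor, leaving $l\sigma_{\mathrm{target}}T\mu^2/\sqrt{a}$.

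\textbf{Long-short filter.} From equation \eqref{eq:ewma3}, the signal is $l_1\mathcal{L}^{(\nu_1)}(z_{t-1})-l_2\mathcal{L}^{(\nu_2)}(z_{t-1})$. The daily return is therefore the difference of two single-filter returns with loadings $l_1$ and $l_2$. By linearity of expectation, $\bar F_T$ is the difference of two copies of equation \eqref{eq:ft1}. The drift parts combine into $(l_1-l_2)\sigma_{\mathrm{target}}T\mu^2/\sqrt{a}$, which is equation \eqref{eq:ft2}.

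The only delicate step is justifying the exchange of expectation and infinite sum, together with the identification of the mixed moment. Square-integrability of $z_t$ and geometric decay of the weights handle the exchange. Beyond that, the argument is bookkeeping of the index shift from $m$ to $k=m+1$, which is what turns the sum starting at $k=1$ into the bracket $\sum_{m\ge0}\nu^m\rho(m)-1$.

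As a cross-check, taking expectations in Proposition \ref{pr:cumreturn} should give the same result up to corrections of order $\spn/T$ arising from the sample mean and from $R_T$. This is consistent with the remark that the population statement is free of those effects.
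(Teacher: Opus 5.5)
Your proof is correct and is essentially the paper's own computation. The paper states the corollary without a separate proof. The mean it relies on is derived in the proof of Proposition~\ref{prop:dailyvar} by the same index-shifted covariance sum, equation \eqref{eq:sr_cov}, giving $\mathbb{E}[z_t\mathcal{L}^{(\nu)}(z_{t-1})]=\vartheta\frac{1-\nu}{\nu}\Psi_{\nu}+\mu^{2}$. The interchange of sums is justified there by the same Cauchy--Schwarz bound $\mathbb{E}|z_t z_{t-m}|\le\vartheta+\mu^2$ that the paper uses for \eqref{eq:e5}, and the long-short case follows by linearity, as in Corollary~\ref{col:sr_ls}.

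One small correction to your closing cross-check. Proposition~\ref{pr:cumreturn} is an exact identity, so the expectation of its full right-hand side, including the sample-mean terms and $R_T$, reproduces \eqref{eq:ft1} exactly. Discrepancies of relative order $\spn/T$ arise only if the centered terms are replaced one by one by their population analogues, or if $R_T$ is dropped.
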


Further, we consider the annual performance period with $T=a$. We set the variance of volatility-normalized returns to unit variance $\vartheta=1$. Since we apply volatility-normalization to raw returns, the ``natural'' volatility of the returns process is an auxiliary variable and it can be set to one. In equation \eqref{eq:e15}, $\mu$ stands for the population mean of the daily volatility-normalized return. The annual population mean of volatility-normalized return denoted by $\mu^{z}_{an}$ is obtained by applying the annualization factor $a$:
\begin{equation} \label{eq:er2a} 
	\mu^{z}_{an} \equiv \frac{a \mu }{\sqrt{a}} = \sqrt{a} \mu. 
\end{equation}
Here, $\mu^{z}_{an}$ can be interpreted as the Sharpe ratio of the process with unit variance.

\begin{corollary}[Expected annual return of European TF system] Using Corollary \ref{col:cumreturn} with $T=a$, $\vartheta=1$, $\mu^{z}_{an}=\sqrt{a} \mu$, we obtain:
	\begin{equation} \label{eq:er2} 
		\begin{split}
			\bar{F}_{1y} = h_{1y} \left[ \sum^{\infty}_{m=0} \nu^{m}\rho(m) -  1 \right]  +  \left( \frac{l\sigma_{\mathrm{target}} }{\sqrt{a}} \right)  (\mu^{z}_{an})^{2}  , \ h_{1y} = l\sigma_{\mathrm{target}}\sqrt{a}\frac{1-\nu }{\nu},
		\end{split}
	\end{equation}
	where the loading $l$ is specified in equation \eqref{eq:e1} and $\rho(m)$ is the population autocorrelation function of the volatility-normalized returns in Assumption \ref{as:pr}.

	For the long-short EWMA filter, the expected annual return is given by:
	\begin{equation} \label{eq:er3} 
		\begin{split}
			\bar{F}_{1y}  &   =\widetilde{h}_{1} \left[ \sum^{\infty}_{m=0} \nu^{m}_{1}\rho(m) -  1 \right]  -  \widetilde{h}_{2}  \left[ \sum^{\infty}_{m=0} \nu^{m}_{2}\rho(m) -  1 \right] + \left( \frac{\sigma_{\mathrm{target}} }{\sqrt{a}}  (\mu^{z}_{an})^{2} \right)    \left(l_{1}-l_{2}\right)\\
			& \widetilde{h}_{1} = \sigma_{\mathrm{target}} \sqrt{a}  l_{1}\frac{1-\nu_{1} }{\nu_{1}}, \ \widetilde{h}_{2} =  \sigma_{\mathrm{target}} \sqrt{a} l_{2}\frac{1-\nu_{2} }{\nu_{2}}, 
		\end{split}
	\end{equation}
	where the loadings $l_{1}$ and $l_{2}$ are specified in equation \eqref{eq:ewma3a}.
	
\end{corollary}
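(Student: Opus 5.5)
This corollary is a direct specialization of Corollary \ref{col:cumreturn}, so the plan is substitution, not new analysis. I would start from equation \eqref{eq:ft1} and set $T=a$ and $\vartheta=1$. The autocorrelation term then has prefactor
\[
\frac{l\sigma_{\mathrm{target}}}{\sqrt{a}}\,\frac{1-\nu}{\nu}\,a = l\sigma_{\mathrm{target}}\sqrt{a}\,\frac{1-\nu}{\nu},
\]
using $a/\sqrt{a}=\sqrt{a}$. This is exactly $h_{1y}$, and the bracket $\left[\sum_{m\ge 0}\nu^m\rho(m)-1\right]$ is carried over unchanged. For the drift term, $\frac{l\sigma_{\mathrm{target}}}{\sqrt{a}}\,a\,\mu^2$ is rewritten with equation \eqref{eq:er2a}, $\mu=\mu^{z}_{an}/\sqrt{a}$, so that $a\mu^2=(\mu^{z}_{an})^2$. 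This gives $\frac{l\sigma_{\mathrm{target}}}{\sqrt{a}}(\mu^{z}_{an})^2$ and hence equation \eqref{eq:er2}.

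For the long-short filter I would apply the same two substitutions term by term to equation \eqref{eq:ft2}. The factor $\frac{\sigma_{\mathrm{target}}}{\sqrt{a}}\,a = \sigma_{\mathrm{target}}\sqrt{a}$ multiplies $l_i\frac{1-\nu_i}{\nu_i}$, which yields $\widetilde{h}_1$ and $\widetilde{h}_2$ with the minus sign on the short leg kept. The drift part becomes $\frac{\sigma_{\mathrm{target}}}{\sqrt{a}}(\mu^{z}_{an})^2(l_1-l_2)$ by the same identity $a\mu^2=(\mu^{z}_{an})^2$.

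No step is a real obstacle. The one point to get right is that the $T$ inside the drift term of equation \eqref{eq:ft1} is absorbed into $(\mu^{z}_{an})^2$ and not into the prefactor, which is why the drift term keeps $1/\sqrt{a}$ while the autocorrelation term carries $\sqrt{a}$. The validity conditions are inherited from Corollary \ref{col:cumreturn} under Assumption \ref{as:pr}. The series $\sum_m\nu^m\rho(m)$ converges absolutely because $|\rho(m)|\le 1$ and $0<\nu<1$, so no further justification is needed.
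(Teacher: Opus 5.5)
Your proposal is correct and follows the paper's own route: the corollary is a direct substitution of $T=a$, $\vartheta=1$ and $a\mu^{2}=(\mu^{z}_{an})^{2}$ into equations \eqref{eq:ft1} and \eqref{eq:ft2}. You also correctly note that the factor $T=a$ becomes $\sqrt{a}$ in the autocorrelation prefactors $h_{1y}$, $\widetilde{h}_{1}$, $\widetilde{h}_{2}$, but is absorbed into $(\mu^{z}_{an})^{2}$ in the drift term.
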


\section{Sharpe Ratio of European TF System}\label{sec:sr}

Corollary \ref{col:cumreturn} of Section \ref{sc:eft} links the expected return of the European TF system to the autocorrelation function and the drift of the volatility-normalized returns. The expected return alone, however, cannot rank filter spans, because the risk of the strategy changes with the span as well. In this section, we derive the variance of the daily strategy return in closed form and obtain the annualized Sharpe ratio. We define the annualized gross Sharpe ratio\footnote{The daily returns $f_{t}$ are serially dependent because the signal is persistent. We define the annualized Sharpe ratio on the moments of daily returns, which is the convention of our backtests and of industry reporting for liquid managed futures. The horizon-based Sharpe ratio of \cite{Lo2002}, defined on the variance of aggregated returns, differs from our definition when returns are autocorrelated. Under the white noise with zero mean assumption, the covariances of daily strategy returns vanish at all lags and the two definitions coincide. For persistent signals, the daily strategy returns are typically positively autocorrelated, so the daily-moment annualization overstates the horizon-based Sharpe ratio.} of a strategy with daily returns $f_{t}$ as
\begin{equation} \label{eq:sr_def}
	SR = \sqrt{a} \, \frac{\mathbb{E}^{\mathbb{P}}[f_{t}]}{\sqrt{\Var^{\mathbb{P}}[f_{t}]}},
\end{equation}
where $a$ is the number of trading days per year, and $\mathbb{E}^{\mathbb{P}}[f_{t}]$ and $\Var^{\mathbb{P}}[f_{t}]$ are the mean and the variance of the daily returns under $\mathbb{P}$. Returns on futures contracts are excess returns, so we do not subtract a risk-free rate. The daily return in equation \eqref{eq:e2} is a product of the volatility-normalized return $z_{t}$ and the lagged filter value, and both variables are linear in past returns. The variance of such a product is available in closed form through an extension of the Isserlis theorem. As a result, the Sharpe ratio depends on the autocorrelation moments of the returns and on the excess kurtosis of the innovations.

The gross Sharpe ratio excludes trading costs. We define the net Sharpe ratio under a proportional cost $c$ per unit of the volatility-normalized turnover $U_{t}$ in equation \eqref{eq:tur1}:
\begin{equation} \label{eq:sr_net_def}
	SR^{net} = \sqrt{a} \, \frac{\mathbb{E}^{\mathbb{P}}[f_{t}] - c\,\mathbb{E}^{\mathbb{P}}[U_{t}]}{\sqrt{\Var^{\mathbb{P}}[f_{t}]}} = SR - c\, \frac{\mathbb{E}^{\mathbb{P}}[a U_{t}]}{\sqrt{a}\,\sqrt{\Var^{\mathbb{P}}[f_{t}]}}.
\end{equation}
The second form divides the expected annual cost by the annualized volatility of the strategy. The denominator keeps the gross volatility, so the ratio is a cost-adjusted gross-risk ratio rather than the Sharpe ratio of the net return $f_{t}-c\,U_{t}$, whose variance adds the covariance term $-2c\,\Cov[f_{t},U_{t}]$ of the first order in $c$ and the term $c^{2}\Var[U_{t}]$ of the second order. The turnover input applies the signal-turnover proxy of Propositions \ref{pr:turnover1} and \ref{pr:turnover2}: the independence-based Gaussian closed form for the signal component of the turnover. Serial dependence changes the signal turnover itself, and the full turnover contains the volatility-update component as well, so the net results are leading-order approximations under Assumption \ref{as:pr}, while the gross results are exact. We refer to the resulting closed forms as the net Sharpe ratio under the signal-turnover proxy, and we write them with the approximation sign. The net Sharpe ratio remains independent of the volatility target, because the turnover and the strategy returns are both proportional to $\sigma_{\mathrm{target}}$.

\subsection{Autocorrelation Generating Function}

Under Assumption \ref{as:pr}, we define the autocorrelation generating function $\Phi_{\nu}$ and its centered version $\Psi_{\nu}$ as follows:
\begin{equation} \label{eq:sr_phi}
	\Phi_{\nu} = \sum^{\infty}_{m=0} \nu^{m}\rho(m), \qquad \Psi_{\nu} = \Phi_{\nu} - 1 = \sum^{\infty}_{m=1} \nu^{m}\rho(m).
\end{equation}
In this notation, the autocorrelation term in the expected return in equation \eqref{eq:er2} equals $\Psi_{\nu}$: zero for white noise and $\nu\phi/(1-\nu\phi)$ for the AR-1 process. For ARFIMA processes, we evaluate $\Psi_{\nu}$ numerically from the autocorrelation function in equations \eqref{eq:lm3} and \eqref{eq:lm4} truncated at $2000$ lags, where the weights $\nu^{m}$ bound the truncation remainder relative to $\Psi_{\nu}$ below $10^{-4}$ at the longest span for the fractional orders $d \leq 0.1$ used in this paper.

The generating function admits a spectral representation. By the Herglotz theorem, the autocorrelation function of any stationary process is the Fourier coefficient sequence of a normalized spectral measure $F$ on $[-\pi, \pi]$, which requires only the positive definiteness of the autocorrelation function, and the symmetrized sum gives:
\begin{equation} \label{eq:sr_spectral}
	2\Phi_{\nu} - 1 = \int^{\pi}_{-\pi} \frac{1-\nu^{2}}{1 - 2\nu\cos\lambda + \nu^{2}}\, dF(\lambda),
\end{equation}
where the integrand is the Poisson kernel with the concentration parameter $\nu$. The expected return of the TF system is therefore a Poisson-kernel reading of the spectrum of the volatility-normalized returns. The system profits at zero drift if and only if the kernel-weighted spectral mass exceeds one, so trend-following alpha is excess spectral mass at low frequencies. The span sets the bandwidth of the kernel, so short spans read the whole spectrum while long spans read the spectral mass near zero, where the long memory resides. The long-short filter reads the spectrum through a signed contrast of two Poisson-kernel readings. The contrast is not a band-pass window, because the level signal retains a nonzero gain at frequency zero under the normalization in equation \eqref{eq:ewma3a}. The one-day signal increment has zero gain at frequency zero, so the increment, and not the level, acts as a band-pass difference of the two filters.

\subsection{Variance of the Daily Return}

The expected return in Corollary \ref{col:cumreturn} requires only the first two moments of the process. The variance of the daily return involves fourth moments, so we need an assumption on the distribution of the volatility-normalized returns.

\begin{assumption}[Linear process for volatility-normalized returns]\label{as:lp}
	The volatility-normalized returns $\{z_{t}\}$ satisfy Assumption \ref{as:pr} and admit a linear representation in the form of a Wold decomposition (see \citep[Chapter 4]{Hamilton1994}):
	\begin{equation} \label{eq:sr_lp}
		z_{t} = \mu + \sqrt{\vartheta}\sum^{\infty}_{s=0}\psi_{s}\epsilon_{t-s}, \qquad \sum^{\infty}_{s=0}\psi^{2}_{s}=1,
	\end{equation}
	where the innovations $\{\epsilon_{t}\}$ are independent and identically distributed with zero mean, unit variance, zero third moment, and finite excess kurtosis $\kappa=\mathbb{E}[\epsilon^{4}_{t}]-3$.
\end{assumption}

The moving-average weights $\psi_{s}$ determine the autocorrelation function through $\rho(m)=\sum^{\infty}_{s=0}\psi_{s}\psi_{s+m}$. The assumption does not restrict the serial dependence between the signal and the return, because both variables load on the common innovations. The zero third moment removes cross terms between the drift and the squared innovations, and these terms are of the order of the drift times the skewness of the innovations. The excess kurtosis enters the variance of the strategy return through a single loading, which Proposition \ref{prop:dailyvar} defines. Jointly Gaussian returns correspond to the case $\kappa=0$, because a stationary Gaussian process is a linear process with Gaussian innovations\footnote{The Wold decomposition separates a deterministic component, which we exclude. The processes of Section~\ref{sec:sp} are linear by construction.}.

Volatility normalization can remove most of the heavy tails and the asymmetry of raw returns, so the restrictions on the innovations are mild. Under GARCH-type dynamics, the EWMA variance recursion approximates the conditional variance, so the volatility-normalized returns are approximately the standardized innovations: in a GARCH(1,1) simulation with a raw excess kurtosis of $7$, the pipeline $z_{t}$ has an excess kurtosis of $0.06$ and a first-lag autocorrelation below $0.001$. Positive excess kurtosis adds a positive term to the variance of the strategy return and reduces the absolute Sharpe ratio, lowering a positive ratio and moving a negative ratio toward zero. The expected return in Corollary \ref{col:cumreturn} is unaffected, because the first moment is distribution-free. We quantify the impact of the full volatility-normalization pipeline, which the linear representation does not capture, in Section \ref{ssec:sr_mc}.

\begin{lemma}[Variance of a product of variables linear in common innovations]\label{lem:isserlis}
	Let $X=\mu_{X}+\sum^{\infty}_{s=0}a_{s}\epsilon_{t-s}$ and $Y=\mu_{Y}+\sum^{\infty}_{s=0}b_{s}\epsilon_{t-s}$ be linear in the innovations of Assumption \ref{as:lp}, with variances $\sigma^{2}_{X}$ and $\sigma^{2}_{Y}$ and covariance $\sigma_{XY}$. The variance of the product is given by:
	\begin{equation} \label{eq:sr_isserlis}
		\Var\left[ XY \right] = \sigma^{2}_{X}\sigma^{2}_{Y} + \sigma^{2}_{XY} + \mu^{2}_{X}\sigma^{2}_{Y} + \mu^{2}_{Y}\sigma^{2}_{X} + 2\mu_{X}\mu_{Y}\sigma_{XY} + \kappa\sum^{\infty}_{s=0}a^{2}_{s}b^{2}_{s}.
	\end{equation}
	For $\kappa=0$, equation \eqref{eq:sr_isserlis} reduces to the Isserlis theorem for jointly Gaussian variables.
\end{lemma}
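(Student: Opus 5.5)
We write $X=\mu_X+U$ and $Y=\mu_Y+V$, where $U=\sum_s a_s\epsilon_{t-s}$ and $V=\sum_s b_s\epsilon_{t-s}$ are the centered parts. Expanding gives
\[
XY-\mu_X\mu_Y-\sigma_{XY}=(UV-\sigma_{XY})+\mu_X V+\mu_Y U .
\]
We take $\sum_s a_s^2<\infty$ and $\sum_s b_s^2<\infty$, so both series converge in $L^4$ given a finite fourth moment of the innovations. The variance of the product is then the second moment of this three-term sum. The terms are computed one by one.

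The linear part is immediate. It gives $\mu_X^2\sigma_Y^2+\mu_Y^2\sigma_X^2+2\mu_X\mu_Y\sigma_{XY}$, using $\sigma_X^2=\sum a_s^2$, $\sigma_Y^2=\sum b_s^2$ and $\sigma_{XY}=\sum a_sb_s$.

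The cross terms $\mathbb{E}[(UV-\sigma_{XY})U]$ and $\mathbb{E}[(UV-\sigma_{XY})V]$ involve third moments. They reduce to $\mathbb{E}[UVU]=\sum_s a_s^2b_s\,\mathbb{E}[\epsilon^3]$, since every other index configuration contains a lone innovation with zero mean. These terms vanish by the zero third moment of Assumption \ref{as:lp}. This is exactly why the assumption is imposed, and the skewness term would otherwise appear here.

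The main step is $\Var[UV]=\mathbb{E}[U^2V^2]-\sigma_{XY}^2$. We write
\[
\mathbb{E}[U^2V^2]=\sum_{i,j,k,l}a_ia_jb_kb_l\,\mathbb{E}[\epsilon_{t-i}\epsilon_{t-j}\epsilon_{t-k}\epsilon_{t-l}] .
\]
For i.i.d. innovations, the fourth moment is $\delta_{ij}\delta_{kl}+\delta_{ik}\delta_{jl}+\delta_{il}\delta_{jk}+\kappa\,\delta_{ijkl}$. This holds because the all-equal case contributes $\mathbb{E}[\epsilon^4]=3+\kappa$, while the three pairings each count it once. Summing the pairings gives $\sigma_X^2\sigma_Y^2+2\sigma_{XY}^2$. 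The diagonal correction gives $\kappa\sum_s a_s^2b_s^2$. Subtracting $\sigma_{XY}^2$ then yields $\sigma_X^2\sigma_Y^2+\sigma_{XY}^2+\kappa\sum_s a_s^2b_s^2$. Adding the linear part gives equation \eqref{eq:sr_isserlis}.

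The main obstacle is justifying the interchange of the infinite quadruple sum with the expectation. We handle it first for finite truncations $U_n,V_n$, where the computation above is exact. We then pass to the limit, because $U_n\to U$ and $V_n\to V$ in $L^4$, so $U_nV_n\to UV$ in $L^2$ by Hölder's inequality. Each closed-form term also converges, since $\sum_s a_s^2b_s^2\le(\sum_s a_s^2)(\sum_s b_s^2)$. For $\kappa=0$ the diagonal term disappears and the formula is the Gaussian Isserlis identity.
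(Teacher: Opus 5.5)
Your proof is correct and follows essentially the same route as the paper. The fourth-moment tensor $\delta_{ij}\delta_{kl}+\delta_{ik}\delta_{jl}+\delta_{il}\delta_{jk}+\kappa\,\delta_{ijkl}$ you use is the coordinate form of the paper's moment--cumulant identity with $\mathrm{cum}(x,x,y,y)=\kappa\sum_s a_s^2b_s^2$, and in both proofs the zero third moment of the innovations removes the cross terms; your $L^4$ truncation argument with H\"older's inequality is a useful addition, since the paper exchanges the infinite sums and the expectation without justification.
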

Proof is given in Appendix \ref{sec:isserlis}.

We apply Lemma \ref{lem:isserlis} to $X=z_{t}$ and $Y=\mathcal{L}^{(\nu)}(z_{t-1})$, because the daily return in equation \eqref{eq:e2} is proportional to their product. The filter is a linear combination of past returns, so its moments follow from Assumption \ref{as:pr} by direct summation.

\begin{proposition}[Mean and variance of the daily return]\label{prop:dailyvar}
	Under Assumption \ref{as:lp}, the mean and the variance of the daily return $f_{t}$ of the European TF system in equation \eqref{eq:e2} are given by:
	\begin{equation} \label{eq:sr_moments}
		\begin{split}
			\mathbb{E}\left[ f_{t} \right] & = \frac{l\sigma_{\mathrm{target}}}{\sqrt{a}} \left( \vartheta A_{\nu} + \mu^{2} \right), \\
			\Var\left[ f_{t} \right] & = \left(\frac{l\sigma_{\mathrm{target}}}{\sqrt{a}}\right)^{2} \vartheta \left( \vartheta \left( B_{\nu} + A^{2}_{\nu} + \kappa K_{\nu}\right) + \mu^{2}\left( 1 + B_{\nu} + 2A_{\nu} \right) \right),
		\end{split}
	\end{equation}
	where the autocorrelation loadings $A_{\nu}$, $B_{\nu}$ and the kurtosis loading $K_{\nu}$ are defined by:
	\begin{equation} \label{eq:sr_ab}
		A_{\nu} = \frac{1-\nu}{\nu}\Psi_{\nu}, \qquad B_{\nu} = \frac{1-\nu}{1+\nu}\left( 1 + 2\Psi_{\nu} \right),
	\end{equation}
	\begin{equation} \label{eq:sr_k}
		K_{\nu} = \sum^{\infty}_{s=1} \psi^{2}_{s}\, g^{2}_{s-1}, \qquad g_{u} = \left(1-\nu\right)\sum^{u}_{m=0}\nu^{m}\psi_{u-m}.
	\end{equation}
	Under a proportional cost $c$ per unit of the volatility-normalized turnover $U_{t}$ in equation \eqref{eq:tur1}, the expected net daily return is given at the leading order by:
	\begin{equation} \label{eq:sr_netmean}
		\mathbb{E}\left[ f_{t} - c U_{t} \right] \approx \mathbb{E}\left[ f_{t} \right] - \frac{2c}{\sqrt{\pi}}\,\sigma_{\mathrm{target}}\sqrt{(1-\nu)\vartheta},
	\end{equation}
	where the expected turnover applies the signal-turnover proxy of Proposition \ref{pr:turnover1}, scaled by $\sqrt{\vartheta}$ from its unit-variance statement, because the signal increment is linear in $z_{t}$.
\end{proposition}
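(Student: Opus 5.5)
Write $f_{t}=\frac{l\sigma_{\mathrm{target}}}{\sqrt{a}}XY$ with $X=z_{t}$ and $Y=\mathcal{L}^{(\nu)}(z_{t-1})$, as in equation \eqref{eq:e2}. The plan is to compute the first two moments of $X$ and $Y$ and their covariance from Assumption \ref{as:pr}, and then apply Lemma \ref{lem:isserlis}. That lemma needs $X$ and $Y$ to be linear in the common innovations. The Wold representation \eqref{eq:sr_lp} gives this directly: $X$ has coefficients $a_{s}=\sqrt{\vartheta}\psi_{s}$, and $Y$ has coefficients $b_{0}=0$ and $b_{s}=\sqrt{\vartheta}g_{s-1}$ for $s\geq1$, where $g$ is the convolution in \eqref{eq:sr_k}. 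Absolute summability of the EWMA weights justifies exchanging the sums.

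\textbf{Moments of $X$ and $Y$.} By definition $\mu_{X}=\mu$ and $\sigma^{2}_{X}=\vartheta$. Since the weights $(1-\nu)\nu^{m}$ sum to one, $\mu_{Y}=\mu$. The covariance is
\[
\sigma_{XY}=\vartheta(1-\nu)\sum_{m\geq0}\nu^{m}\rho(m+1)=\vartheta\frac{1-\nu}{\nu}\Psi_{\nu}=\vartheta A_{\nu}.
\]
Then $\mathbb{E}[XY]=\sigma_{XY}+\mu^{2}$, which gives the mean in \eqref{eq:sr_moments}; this is consistent with Corollary \ref{col:cumreturn}. For the variance of $Y$, expand the double sum:
\[
\sigma^{2}_{Y}=\vartheta(1-\nu)^{2}\sum_{j,k}\nu^{j+k}\rho(|j-k|).
\]
Group the terms by the lag $m=|j-k|$. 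The diagonal $m=0$ contributes $\sum_{j}\nu^{2j}=1/(1-\nu^{2})$. Each lag $m\geq1$ appears twice, with weight $\nu^{m}/(1-\nu^{2})$. Hence
\[
\sigma^{2}_{Y}=\vartheta\frac{(1-\nu)^{2}}{1-\nu^{2}}(1+2\Psi_{\nu})=\vartheta B_{\nu}.
\]

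\textbf{Applying Lemma \ref{lem:isserlis}.} Substituting these moments into \eqref{eq:sr_isserlis} gives
\[
\vartheta\cdot\vartheta B_{\nu}+\vartheta^{2}A^{2}_{\nu}+\mu^{2}\vartheta B_{\nu}+\mu^{2}\vartheta+2\mu^{2}\vartheta A_{\nu}+\kappa\sum_{s}a^{2}_{s}b^{2}_{s}.
\]
For the kurtosis term, $a_{0}^{2}b_{0}^{2}=0$ and $a_{s}^{2}b_{s}^{2}=\vartheta^{2}\psi_{s}^{2}g_{s-1}^{2}$ for $s\geq1$, so the sum equals $\vartheta^{2}K_{\nu}$. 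Collecting terms yields $\vartheta(\vartheta(B_{\nu}+A_{\nu}^{2}+\kappa K_{\nu})+\mu^{2}(1+B_{\nu}+2A_{\nu}))$. Multiplying by the squared prefactor gives the variance in \eqref{eq:sr_moments}.

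\textbf{Net mean.} The net mean follows by linearity: $\mathbb{E}[f_{t}-cU_{t}]=\mathbb{E}[f_{t}]-c\,\mathbb{E}[U_{t}]$. We replace $\mathbb{E}[U_{t}]$ by the signal-turnover proxy of Proposition \ref{pr:turnover1}, $\frac{2}{\sqrt{\pi}}\sigma_{\mathrm{target}}\sqrt{1-\nu}$, which is the annual figure divided by $a$. The signal increment is linear in $z$, so its standard deviation scales by $\sqrt{\vartheta}$. This gives \eqref{eq:sr_netmean}, with the approximation sign inherited from the proxy.

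\textbf{Main obstacle.} Most of the work is bookkeeping. The step I expect to need the most care is identifying the innovation coefficients of $Y$. The filter is applied to $z_{t-1}$, so the index shifts by one: $b_{0}=0$, and $b_{s}$ is the convolution of the EWMA weights with $\psi$ truncated at $s-1$. This shift determines both the range $s\geq1$ and the argument $g_{s-1}$ in $K_{\nu}$. I also need to check the interchange of the infinite sums for $\sigma^{2}_{Y}$ and $\sum_{s}b^{2}_{s}$. Square summability of $\psi$ together with geometric decay of the EWMA weights should be enough, by Young's inequality for convolutions.
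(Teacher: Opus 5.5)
Your proposal is correct and follows the paper's proof essentially step for step. You compute the moments of $z_t$ and the lagged filter from the autocorrelation function, obtaining the covariance $\vartheta A_\nu$ and, via the diagonal/off-diagonal split, the variance $\vartheta B_\nu$; you identify the shifted innovation loadings $b_s=\sqrt{\vartheta}\,g_{s-1}$ for the kurtosis term; you apply Lemma \ref{lem:isserlis}; and you take the net mean from the $\sqrt{\vartheta}$-scaled signal-turnover proxy. Your summability remark via Young's inequality is a harmless extra that the paper leaves implicit.
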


Proof is given in Appendix \ref{sec:dailyvar}.

The coefficient $g_{u}$ is the loading of the lagged filter on the innovation $\epsilon_{t-1-u}$. Only innovations that enter both the return $z_{t}$ and the lagged filter contribute to $K_{\nu}$, so the kurtosis correction requires serial dependence. Under white noise, the return loads only on the current innovation and the loading $K_{\nu}$ vanishes. 

\begin{corollary}[Moments and Sharpe ratio of the long-short system]\label{col:sr_ls} The result extends to the long-short EWMA filter in equation \eqref{eq:ewma3} by linearity, because the signal remains a linear combination of past returns. The signal $S_{t-1}=l_{1}\mathcal{L}^{(\nu_{1})}(z_{t-1})-l_{2}\mathcal{L}^{(\nu_{2})}(z_{t-1})$, which restates equation \eqref{eq:ewma3} through the raw filters, has the following moments under Assumption \ref{as:pr}:
	\begin{equation} \label{eq:sr_ls}
		\begin{split}
			& \mathbb{E}\left[ S_{t-1} \right] = \left(l_{1}-l_{2}\right)\mu, \qquad
			\Cov\left[ z_{t}, S_{t-1} \right] = \vartheta \left( l_{1} \frac{1-\nu_{1}}{\nu_{1}}\Psi_{\nu_{1}} - l_{2} \frac{1-\nu_{2}}{\nu_{2}}\Psi_{\nu_{2}} \right), \\
			&  \Var\left[ S_{t-1} \right] = \vartheta \left( l^{2}_{1} B_{\nu_{1}} + l^{2}_{2} B_{\nu_{2}} - 2l_{1}l_{2} \frac{\left(1-\nu_{1}\right)\left(1-\nu_{2}\right)\left(1+\Psi_{\nu_{1}}+\Psi_{\nu_{2}}\right)}{1-\nu_{1}\nu_{2}} \right),
		\end{split}
	\end{equation}
	where the cross-covariance of the two filters follows by the same double-sum manipulation as in equation \eqref{eq:sr_varL}. Under Assumption \ref{as:lp}, substituting these moments into Lemma \ref{lem:isserlis} gives the Sharpe ratio of the long-short system. The kurtosis loading of the long-short system replaces $g_{u}$ in equation \eqref{eq:sr_k} with $l_{1}g^{(\nu_{1})}_{u}-l_{2}g^{(\nu_{2})}_{u}$, where $g^{(\nu_{i})}_{u}$ denotes the loading $g_{u}$ evaluated at the span $\nu_{i}$. The net Sharpe ratio of the long-short system replaces the expected turnover with the signal-turnover proxy of Proposition \ref{pr:turnover2}.
\end{corollary}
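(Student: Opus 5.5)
We prove Corollary \ref{col:sr_ls} by writing the long-short signal as a linear combination of the two raw filters and reusing the single-filter moment computations from the proof of Proposition \ref{prop:dailyvar}. The signal is $S_{t-1}=l_{1}\mathcal{L}^{(\nu_{1})}(z_{t-1})-l_{2}\mathcal{L}^{(\nu_{2})}(z_{t-1})$.

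\emph{Mean.} Each raw filter has weights $(1-\nu_i)\nu_i^m$ that sum to one. Hence $\mathbb{E}[\mathcal{L}^{(\nu_i)}(z_{t-1})]=\mu$, and linearity gives $\mathbb{E}[S_{t-1}]=(l_1-l_2)\mu$.

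\emph{Covariance with the return.} Bilinearity gives
$$\Cov[z_t,\mathcal{L}^{(\nu)}(z_{t-1})]=\vartheta(1-\nu)\sum_{m\ge0}\nu^m\rho(m+1)=\vartheta\frac{1-\nu}{\nu}\Psi_\nu.$$
This is exactly the quantity $\vartheta A_\nu$ appearing in Proposition \ref{prop:dailyvar}. Combining the two legs with weights $l_1$ and $-l_2$ yields the stated covariance.

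\emph{Variance of the signal.} The diagonal terms $l_i^2\vartheta B_{\nu_i}$ are the single-filter variances from equation \eqref{eq:sr_varL}. The new ingredient is the cross term
$$\Cov[\mathcal{L}^{(\nu_1)},\mathcal{L}^{(\nu_2)}]=\vartheta(1-\nu_1)(1-\nu_2)\sum_{j,k\ge0}\nu_1^j\nu_2^k\rho(|j-k|).$$
I evaluate it by splitting the double sum into $j=k$, $j>k$ and $j<k$:
\begin{itemize}
\item The diagonal $j=k$ gives $1/(1-\nu_1\nu_2)$.
\item For $j=k+m$ with $m\ge1$, summing over $k$ gives $\nu_1^m/(1-\nu_1\nu_2)$, so this block contributes $\Psi_{\nu_1}/(1-\nu_1\nu_2)$.
\item Symmetrically, the block $j<k$ contributes $\Psi_{\nu_2}/(1-\nu_1\nu_2)$.
\end{itemize}
Adding the three blocks gives $(1-\nu_1)(1-\nu_2)(1+\Psi_{\nu_1}+\Psi_{\nu_2})/(1-\nu_1\nu_2)$. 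Multiplying by $-2l_1l_2$ completes $\Var[S_{t-1}]$. Interchanging the sums is justified by absolute convergence, since $|\rho|\le1$ and $\nu_i<1$.

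\emph{Sharpe ratio.} Under Assumption \ref{as:lp}, both $z_t$ and $S_{t-1}$ are linear in the common innovations $\epsilon$. So Lemma \ref{lem:isserlis} applies with $X=z_t$ and $Y=S_{t-1}$, using the moments just computed.

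\emph{Kurtosis loading.} The coefficients are $a_s=\sqrt{\vartheta}\psi_s$ and $b_s$, the loading of $S_{t-1}$ on $\epsilon_{t-s}$. The loading $b_s$ is $\sqrt{\vartheta}(l_1g^{(\nu_1)}_{s-1}-l_2g^{(\nu_2)}_{s-1})$ for $s\ge1$ and zero for $s=0$. This follows because each raw filter's innovation loadings are the $g_u$ of equation \eqref{eq:sr_k}, and loadings combine linearly. The sum $\kappa\sum_s a_s^2b_s^2$ is therefore the stated replacement.

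\emph{Mean return, Sharpe and net Sharpe.} The mean return is $\mathbb{E}[z_tS_{t-1}]=\Cov[z_t,S_{t-1}]+\mu^2(l_1-l_2)$. Together with the variance from Lemma \ref{lem:isserlis}, this gives the Sharpe ratio via equation \eqref{eq:sr_def}. For the net Sharpe ratio, substitute the turnover proxy of Proposition \ref{pr:turnover2} into equation \eqref{eq:sr_net_def}.

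The only step that is not routine is the cross-covariance double sum, and that step is bookkeeping over the three index regions.
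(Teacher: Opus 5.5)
Your proposal is correct and follows the paper's route: you write the signal as a linear combination of the two raw filters, reuse the single-filter moments of Proposition \ref{prop:dailyvar}, and split the cross-covariance double sum into the diagonal and two off-diagonal blocks, which is the manipulation of equation \eqref{eq:sr_varL} with distinct smoothing parameters. Your block values $1/(1-\nu_1\nu_2)$, $\Psi_{\nu_1}/(1-\nu_1\nu_2)$ and $\Psi_{\nu_2}/(1-\nu_1\nu_2)$ check out, as do the innovation loadings $b_s=\sqrt{\vartheta}\,(l_1g^{(\nu_1)}_{s-1}-l_2g^{(\nu_2)}_{s-1})$ that give the stated kurtosis loading.
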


\subsection{Annualized Sharpe Ratio}

\begin{corollary}[Sharpe ratio of European TF system]\label{col:sr}
	Under Assumption \ref{as:lp} with $\vartheta=1$ and the annualized mean $\mu^{z}_{an}=\sqrt{a}\mu$, the annualized Sharpe ratio of the European TF system is given by:
	\begin{equation} \label{eq:sr_main}
		SR = \frac{ \sqrt{a}\, A_{\nu} + \left(\mu^{z}_{an}\right)^{2}/\sqrt{a} }{ \sqrt{ B_{\nu} + A^{2}_{\nu} + \kappa K_{\nu} + \left( \left(\mu^{z}_{an}\right)^{2}/a\right) \left( 1 + B_{\nu} + 2A_{\nu} \right) } }.
	\end{equation}
	The Sharpe ratio does not depend on the volatility target $\sigma_{\mathrm{target}}$ and does not depend on the filter loading $l$, because the factor $l\sigma_{\mathrm{target}}/\sqrt{a}$ multiplies the mean and the volatility of $f_{t}$ equally. For $\kappa=0$, equation \eqref{eq:sr_main} gives the Sharpe ratio under jointly Gaussian returns. Dividing the proxy turnover of Proposition \ref{pr:turnover1} by the annualized volatility of Proposition \ref{prop:dailyvar} and using the identity $l\sqrt{1-\nu}=\sqrt{1+\nu}$, the net Sharpe ratio of equation \eqref{eq:sr_net_def} follows at the leading order as:
	\begin{equation} \label{eq:sr_net}
		SR^{net} \approx SR - \frac{2ac}{\sqrt{\pi}}\, \frac{1-\nu}{\sqrt{ (1+\nu) \left( B_{\nu} + A^{2}_{\nu} + \kappa K_{\nu} + \left( \left(\mu^{z}_{an}\right)^{2}/a\right) \left( 1 + B_{\nu} + 2A_{\nu} \right) \right)} }.
	\end{equation}
	Under zero-drift white noise, the bracket reduces to $B_{\nu}=(1-\nu)/(1+\nu)$ and the cost drag equals $(2ac/\sqrt{\pi})\sqrt{1-\nu}$. The net Sharpe ratio remains independent of $\sigma_{\mathrm{target}}$ and depends on the filter only through the span.
\end{corollary}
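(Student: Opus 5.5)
The Sharpe ratio formula follows by substituting Proposition \ref{prop:dailyvar} into the definition \eqref{eq:sr_def}. Set $\vartheta=1$ and write $C=l\sigma_{\mathrm{target}}/\sqrt{a}$. The mean is then $C(A_{\nu}+\mu^{2})$. The standard deviation is $C\sqrt{B_{\nu}+A^{2}_{\nu}+\kappa K_{\nu}+\mu^{2}(1+B_{\nu}+2A_{\nu})}$, with $C>0$ because $l>0$ and $\sigma_{\mathrm{target}}>0$. The factor $C$ cancels, which shows at once that the ratio does not depend on $\sigma_{\mathrm{target}}$ or $l$.

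Next I multiply the remaining ratio by $\sqrt{a}$ and substitute $\mu=\mu^{z}_{an}/\sqrt{a}$. This gives $\mu^{2}=(\mu^{z}_{an})^{2}/a$. The numerator becomes $\sqrt{a}A_{\nu}+(\mu^{z}_{an})^{2}/\sqrt{a}$, and the drift term under the square root becomes $((\mu^{z}_{an})^{2}/a)(1+B_{\nu}+2A_{\nu})$, which is \eqref{eq:sr_main}. The Gaussian case is the specialization $\kappa=0$, and it is covered by the remark after Lemma \ref{lem:isserlis}.

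For the net ratio I use the second form of \eqref{eq:sr_net_def}. The expected cost uses the signal-turnover proxy of Proposition \ref{pr:turnover1}, $\mathbb{E}[aU_{t}]\approx(2a/\sqrt{\pi})\sigma_{\mathrm{target}}\sqrt{1-\nu}$; for $\vartheta=1$ this is consistent with \eqref{eq:sr_netmean}. The annualized volatility is $\sqrt{a}\sqrt{\Var[f_{t}]}=l\sigma_{\mathrm{target}}\sqrt{D}$, where $D$ is the bracket under the root in \eqref{eq:sr_main}. The ratio therefore equals $(2ac/\sqrt{\pi})\sqrt{1-\nu}/(l\sqrt{D})$, and $\sigma_{\mathrm{target}}$ cancels. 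The step that needs the most care is removing $l$: for the variance-preserving loading $l=\sqrt{(1+\nu)/(1-\nu)}$, so $l\sqrt{1-\nu}=\sqrt{1+\nu}$, which I rewrite as $1/l=(1-\nu)/\sqrt{1+\nu}\cdot 1/\sqrt{1-\nu}$. Substituting gives $\sqrt{1-\nu}/l=(1-\nu)/\sqrt{1+\nu}$, which yields \eqref{eq:sr_net}.

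This also shows why the net formula is tied to the variance-preserving loading while the gross formula holds for any $l$. The turnover proxy was derived for $S_{t}=\widetilde{\mathcal{L}}^{(\nu)}(z_{t})$. The result is marked $\approx$ because the proxy ignores serial dependence and the volatility-update component.

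The white-noise check is a direct evaluation. With $\rho(m)=0$ for $m\ge1$ we get $\Psi_{\nu}=0$, hence $A_{\nu}=0$ and $B_{\nu}=(1-\nu)/(1+\nu)$. The loading $K_{\nu}$ vanishes because $\psi_{s}=0$ for $s\ge1$. With $\mu=0$ the bracket is $B_{\nu}$, and the drag is $(2ac/\sqrt{\pi})(1-\nu)/\sqrt{(1+\nu)(1-\nu)/(1+\nu)}=(2ac/\sqrt{\pi})\sqrt{1-\nu}$. Finally, the net ratio depends on the filter only through $\nu$ in the bracket and the prefactor, so it depends only on the span.
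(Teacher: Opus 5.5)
Your proposal is correct and follows the paper's own route: you substitute the moments of Proposition \ref{prop:dailyvar} with $\vartheta=1$ and $\mu=\mu^{z}_{an}/\sqrt{a}$ into \eqref{eq:sr_def}, and you obtain the net form by dividing the signal-turnover proxy of Proposition \ref{pr:turnover1} by the annualized volatility $l\sigma_{\mathrm{target}}\sqrt{D}$ and applying $l\sqrt{1-\nu}=\sqrt{1+\nu}$. Your remark that the gross formula holds for any loading, while the net drag relies on the variance-preserving loading, matches how the paper states the result.
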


Equation \eqref{eq:sr_main} reduces the TF design problem for a single EWMA filter to the choice of the span, which can be selected from the autocorrelation structure and the excess kurtosis of the returns. The white-noise result in equation \eqref{eq:sr_wn} requires no distributional assumption beyond serial independence and a finite second moment. Under white noise, $z_{t}$ is independent of the signal built from $z_{t-1}, z_{t-2}, \ldots$, so the variance of the product factorizes into moments of order two. For autocorrelated returns, the excess kurtosis reduces the absolute Sharpe ratio through the term $\kappa K_{\nu}$ in Proposition \ref{prop:dailyvar}. For the AR-1 process of Section \ref{sec:sp}, the correction is below the Monte Carlo confidence intervals in Figure \ref{fig:expected_return_ar}.

Table \ref{tab:notation} summarizes the notation of the theoretical sections.

\begin{table}[!htb]
	\footnotesize
	\centering
	\caption{Notation of the theoretical sections}\label{tab:notation}
	\begin{tabular}{llc}
		\hline
		Symbol & Meaning & Equation \\
		\hline
		$r_{t}$ & daily return of the continuous futures contract & \eqref{eq:int2} \\
		$\sigma_{t}$ & estimate of the daily volatility of $r_{t}$ & \eqref{eq:ewmavol} \\
		$z_{t}$ & volatility-normalized return, $z_{t}=r_{t}/\sigma_{t-1}$ & \eqref{eq:eu1} \\
		$a$ & annualization factor, $a=260$ trading days & \eqref{eq:eu3} \\
		$\nu$ & filter smoothing parameter, $\nu=1-2/(\spn+1)$ & \eqref{eq:i2} \\
		$l$ & loading of the variance-preserving filter, $l=\sqrt{(1+\nu)/(1-\nu)}$ & \eqref{eq:e1} \\
		$\mathcal{L}^{(\nu,l)}$, $\widetilde{\mathcal{L}}^{(\nu)}$ & EWMA filter and its variance-preserving version & \eqref{eq:ewma1}, \eqref{eq:ewma2} \\
		$\widetilde{\mathcal{LS}}^{(\nu_{1},\nu_{2})}$ & variance-preserving long-short EWMA filter & \eqref{eq:ewma3} \\
		$S_{t}$, $w_{t}$ & signal and portfolio weight of the system & \eqref{eq:eu2}, \eqref{eq:eu3} \\
		$f_{t}$, $F_{T}$ & daily return and cumulative return of the system & \eqref{eq:ret}, \eqref{eq:e4} \\
		$\mu$, $\vartheta$ & mean and variance of $z_{t}$ & \eqref{eq:e15} \\
		$\rho(m)$, $\hat{\rho}_{T}(m)$ & population autocorrelation and identity-specific lag product of $z_{t}$ & \eqref{eq:e15}, \eqref{eq:e11} \\
		$\Phi_{\nu}$, $\Psi_{\nu}$ & autocorrelation generating function and its centered version & \eqref{eq:sr_phi} \\
		$A_{\nu}$, $B_{\nu}$ & autocorrelation loadings of Proposition \ref{prop:dailyvar} & \eqref{eq:sr_ab} \\
		$\psi_{s}$, $\epsilon_{t}$ & moving-average weights and innovations of the linear process & \eqref{eq:sr_lp} \\
		$\kappa$, $K_{\nu}$ & excess kurtosis of the innovations and kurtosis loading & \eqref{eq:sr_k} \\
		$\mu^{z}_{an}$ & annualized drift of $z_{t}$, $\mu^{z}_{an}=\sqrt{a}\mu$ & \eqref{eq:er2a} \\
		$U_{t}$, $U^{(signal)}_{t}$ & volatility-normalized turnover and its signal component & \eqref{eq:tur1}, \eqref{eq:tur2} \\
		$\zeta$ & turnover loading of the long-short filter & \eqref{eq:tur_ls} \\
		$c$, $SR^{net}$ & cost per unit of volatility-normalized turnover and net Sharpe ratio & \eqref{eq:sr_net_def} \\
		$\varsigma(T)$, $w_{j}$, $R(h)$ & skewness of the $T$-day return, signal weights, and signal autocorrelation & \eqref{eq:skew_t}, \eqref{eq:skew_gen} \\
		\hline
	\end{tabular}
\end{table}

\section{Expected Return and Sharpe Ratio under Specific Processes}\label{sec:sp} 

In this section, we apply different stochastic processes to derive analytic solutions for the expected return and the Sharpe ratio of the trend-following strategy. We also compare analytical results with estimates obtained using Monte Carlo simulations\footnote{The Python package \texttt{trendfollowing}, which reproduces all analytical results and exhibits of this paper, is available at \url{https://github.com/ArturSepp/TrendFollowingSystems}.}.

For verification of analytical results, we generate $1000$ Monte Carlo paths of $50$ years of daily returns ($13000$ observations per path). We simulate the daily returns $\{r_{t}\}$ assuming the time step $dt=1/a$, with $a=260$, and the noise process with variance equal to $dt$. We compute volatility-normalized returns $\{z_{t}\}$ in equation \eqref{eq:eu1} using the EWMA volatility estimator in equation \eqref{eq:ewmavol} with $\spn=33$\footnote{The confidence intervals improve with a longer volatility span. We use the shorter span in alignment with the implementation of TF strategies and industry practice (\cite{Bouchaud2017}, \cite{Robertson2025}).} and with the volatility target $\sigma_{\mathrm{target}}=15\%$. The processes of this section specify the raw returns with constant innovation variance, so the volatility-normalized returns inherit the autocorrelation function and the drift of the process up to the estimation effect of the volatility filter, which Section \ref{ssec:sr_mc} quantifies. The gross results assume zero trading costs. For the net Sharpe ratio, we apply the cost of $c=20$bp per unit of the volatility-normalized turnover\footnote{This corresponds to about $2\%$ and $1\%$ of annual trading costs for higher and medium frequency TF systems, respectively, see Figure \ref{fig:tf_sg_backtest_paper}.} to each simulated path, and we compute the net Sharpe ratio per path as the annual return minus the cost times the annualized turnover, divided by the gross annualized volatility, matching equation \eqref{eq:sr_net_def}. For each path, we compute the average annual return to generate the sample of $1000$ MC estimates $\{\hat{F}_{1y}\}$. The process exhibits report the mean of the per-path estimates, with the $95\%$ confidence interval $1.96\times stdev(\hat{F}_{1y})/\sqrt{1000}$ across the paths. The verification table of Section \ref{ssec:sr_mc} reports the pooled estimators over all paths, with confidence intervals from whole-path blocks.

We compute the expected return of the European TF system in Definition \ref{as:eu} as a function of the EWMA filter span in days, labeled as follows:
\begin{equation} \label{eq:fs1} 
	1w \equiv 5, \ 2w \equiv 10, \ 1m \equiv 21, \ 3m \equiv 63, \ 6m \equiv 125, \ 1y \equiv 250, \ 2y \equiv 500.
\end{equation}

The simulated returns enter the pipeline in two roles. They feed the volatility estimator $\sigma_{t}$ in equation \eqref{eq:ewmavol}, and they define the volatility-normalized returns $z_{t}$ in equation \eqref{eq:eu1}. All performance calculations then operate on $z_{t}$, because the identity in equation \eqref{eq:ret} expresses the daily strategy return as $f_{t} = (\sigma_{\mathrm{target}}/\sqrt{a})\, S_{t-1} z_{t}$ with the signal computed from $z_{t}$. The simulation therefore applies the strategy to the pipeline-normalized process $z_{t}$, and every performance statistic is a functional of the volatility-normalized returns. Dividing by an estimated volatility does not by itself enforce the stationarity or the population moments of Assumption \ref{as:pr}, so the Monte Carlo measures the discrepancy between the pipeline process and the analytical assumptions. The analytic values in the figures and in Table \ref{tab:t6} apply the population moments of $r_{t}$ in place of those of $z_{t}$, which the inheritance justifies up to the estimation effect.

\subsection{White Noise}

We first consider a white noise process with mean $\mu$ and unit annualized variance:
\begin{equation} \label{eq:wn1} 
	r_{t} = \mu + \epsilon_{t},
\end{equation}
where $\epsilon_{t}$ is normal with zero mean and variance $1/a$, and $\mu=\mu^{z}_{an}/a$, so the volatility-normalized returns have the daily mean $\mu^{z}_{an}/\sqrt{a}$ under a constant volatility estimate, and approximately under the estimated EWMA volatility, because the random denominator changes the expectation of $1/\sigma_{t-1}$. The autocorrelation is the Kronecker delta at $m=0$, $\rho(m) = \delta_{m=0}$, so that the expected return $\overline{F}_{1y}$ in equation \eqref{eq:er2} becomes as follows:
\begin{equation} \label{eq:wn3} 
	\bar{F}_{1y} = \left( \frac{l\sigma_{\mathrm{target}} }{\sqrt{a}} \right)   (\mu^{z}_{an})^{2}.
\end{equation}
Thus, a TF system under white noise can be profitable only if the mean $\mu$ is not zero. Figure \ref{fig:expected_return_white_noise} shows the expected return in panel (A) from equation \eqref{eq:wn3}, the gross Sharpe ratio in panel (B) from equation \eqref{eq:sr_wn}, and the net Sharpe ratio at $c=20$bp in panel (C) from equation \eqref{eq:sr_net}. The zero-drift Sharpe ratio is zero at every span, and the Sharpe ratio for non-zero drift grows with the span. The analytic results are within the MC confidence intervals.

\begin{figure}[tph]\hspace*{-3\baselineskip}
	\begin{center}\hspace*{-3\baselineskip}\vspace*{-\baselineskip} 
		\includegraphics[width=0.9\textwidth, angle=0] {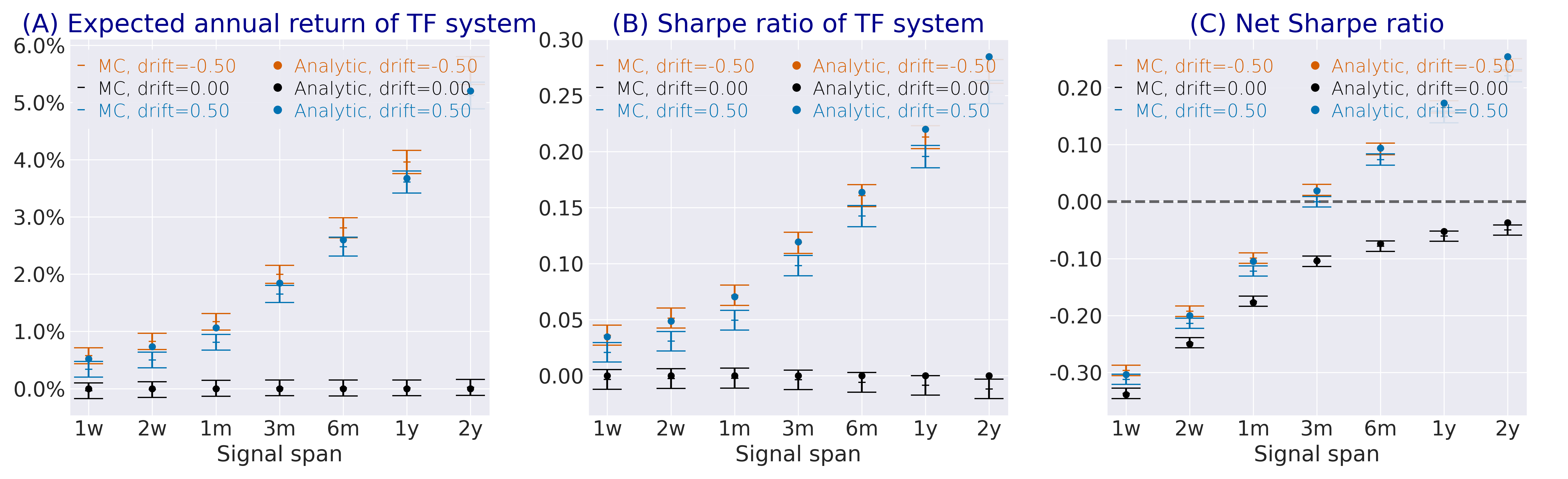}\hspace*{-2.5\baselineskip} 
	\end{center}
	\caption{Expected annual return, gross Sharpe ratio, and net Sharpe ratio of the European TF system on the white noise process in equation \eqref{eq:wn1}, as functions of the annualized drift $\mu^{z}_{an}\in\{-0.5,0.0,0.5\}$ and the filter span, with $\sigma_{\mathrm{target}}=15\%$. Panel (A) shows the expected annual return, panel (B) shows the Sharpe ratio, and panel (C) shows the net Sharpe ratio at the cost of $c=20$bp per unit of volatility-normalized turnover, with analytic values from equation \eqref{eq:sr_net}. Analytic values use the closed-form formulas, and MC midpoints with $95\%$ confidence intervals are obtained by Monte Carlo simulation of $1000$ paths of $50$ years}
	\label{fig:expected_return_white_noise}
\end{figure}

We now turn to the Sharpe ratio, shown in panel (B) of Figure \ref{fig:expected_return_white_noise}. Under white noise with drift, we have $\Psi_{\nu}=0$, so $A_{\nu}=0$ and $B_{\nu}=(1-\nu)/(1+\nu)=1/\spn$ using the span convention in equation \eqref{eq:i2}. Equation \eqref{eq:sr_main} becomes:
\begin{equation} \label{eq:sr_wn}
	SR = \frac{ \left(\mu^{z}_{an}\right)^{2} \sqrt{\spn/a} }{ \sqrt{ 1 + \left( \left(\mu^{z}_{an}\right)^{2}/a \right)\left( \spn + 1 \right) } } \approx \left(\mu^{z}_{an}\right)^{2} \sqrt{\frac{\spn}{a}},
\end{equation}
where the approximation drops the term of order $\left(\mu^{z}_{an}\right)^{2}/a$. The leading order gives a simple rule. At the one-year span with $\spn=a$, the Sharpe ratio of the TF system approximately equals the squared Sharpe ratio of the underlying instrument, because the exact denominator in equation \eqref{eq:sr_wn} exceeds one. The net Sharpe ratio follows from equation \eqref{eq:sr_net} in closed form:
\begin{equation} \label{eq:sr_net_wn}
	SR^{net} \approx \frac{ \left(\mu^{z}_{an}\right)^{2} \sqrt{\spn/a} - \left(2ac/\sqrt{\pi}\right) \sqrt{2/(\spn+1)} }{ \sqrt{ 1 + \left( \left(\mu^{z}_{an}\right)^{2}/a \right)\left( \spn + 1 \right) } } \approx \left(\mu^{z}_{an}\right)^{2} \sqrt{\frac{\spn}{a}} - 2ac\sqrt{\frac{2}{\pi\, \spn}}.
\end{equation}
The drift alpha grows with $\sqrt{\spn}$ while the cost drag declines with $1/\sqrt{\spn}$, so, under the signal-turnover proxy, any proportional cost is beaten at a long enough span. The net Sharpe ratio in panel (C) reaches $0.25$ at the two-year span for $|\mu^{z}_{an}|=0.5$, and the zero-drift curve isolates the pure cost drag, which declines from $0.34$ at the one-week span to $0.04$ at the two-year span. In the long-span limit, the direction of the signal converges to the sign of the drift and the system converges to a buy-and-hold exposure up to a deterministic scale, so the net Sharpe ratio under the proxy approaches the absolute Sharpe ratio $|\mu^{z}_{an}|$ of the underlying instrument for any proportional cost (see Remark \ref{rem:asym_wn}).
\subsection{AR-1 process}

Next, we consider the AR-1 process with $|\phi|<1$ and the dynamics:
\begin{equation} \label{eq:a3} 
	\begin{split}
		r_{t} & = \mu + x_{t}, \qquad x_{t} = \phi x_{t-1} + \epsilon_{t},
	\end{split}
\end{equation}
where $\left|\phi\right|<1$, $\epsilon_{t}$ is normal with variance $1/a$, and the raw drift parameter $\mu_{an}$ sets $\mu=\mu_{an}/a$ outside the autoregression without the intercept amplification $1/(1-\phi)$. The stationary variance of $x_{t}$ is $1/(a(1-\phi^{2}))$, so the normalized annualized drift is $\mu^{z}_{an} = \mu_{an}\sqrt{1-\phi^{2}}$, which equals $\mu_{an}$ to the leading order in $\phi^{2}$, a $0.1\%$ effect at $|\phi|=0.05$.

We recall the following features of the AR-1 process. If $\phi>0$, the volatility-normalized returns are positively autocorrelated and the markets are divergent: the volatility of aggregated returns settles above the square-root-of-time benchmark by the factor $\sqrt{(1+\phi)/(1-\phi)}$ at long horizons. If $\phi<0$, the volatility-normalized returns are negatively autocorrelated and the markets are mean-reverting: the aggregated volatility settles below the benchmark by the same factor.

The autocorrelation of the AR-1 process is the power function:
\begin{equation} \label{eq:a4} 
	\rho(m) = \phi^{m}.
\end{equation}

Thus, the expected return in equation \eqref{eq:er2} is given by:
\begin{equation} \label{eq:a5} 
	\bar{F}_{1y} = h_{1y} \left[  \frac{\phi\nu}{1-\phi\nu}   \right] + \left( \frac{l\sigma_{\mathrm{target}} }{\sqrt{a}} \right)   \left(\mu^{z}_{an}\right)^{2}.  
\end{equation}

We reach two conclusions. First, the expected return at zero drift is positive for $\phi>0$, when the markets are diverging, and negative for $\phi<0$, when the markets are mean-reverting. Second, the shorter the filter span, the larger the autocorrelation contribution to the expected return. Figure \ref{fig:expected_return_ar} shows the analytical results and the MC confidence intervals for the zero-drift AR-1 process with $\phi=\{-0.05, 0.05\}$. The gross Sharpe ratio in panel (B) is positive for $\phi=0.05$, negative for $\phi=-0.05$, and declines in magnitude with the span.  Panel (C) reports the net Sharpe ratio at $c=20$bp. The analytical results are within the MC intervals.

\begin{figure}[tph]\hspace*{-3\baselineskip}
	\begin{center}\hspace*{-3\baselineskip}\vspace*{-\baselineskip}
		\includegraphics[width=0.9\textwidth, angle=0] {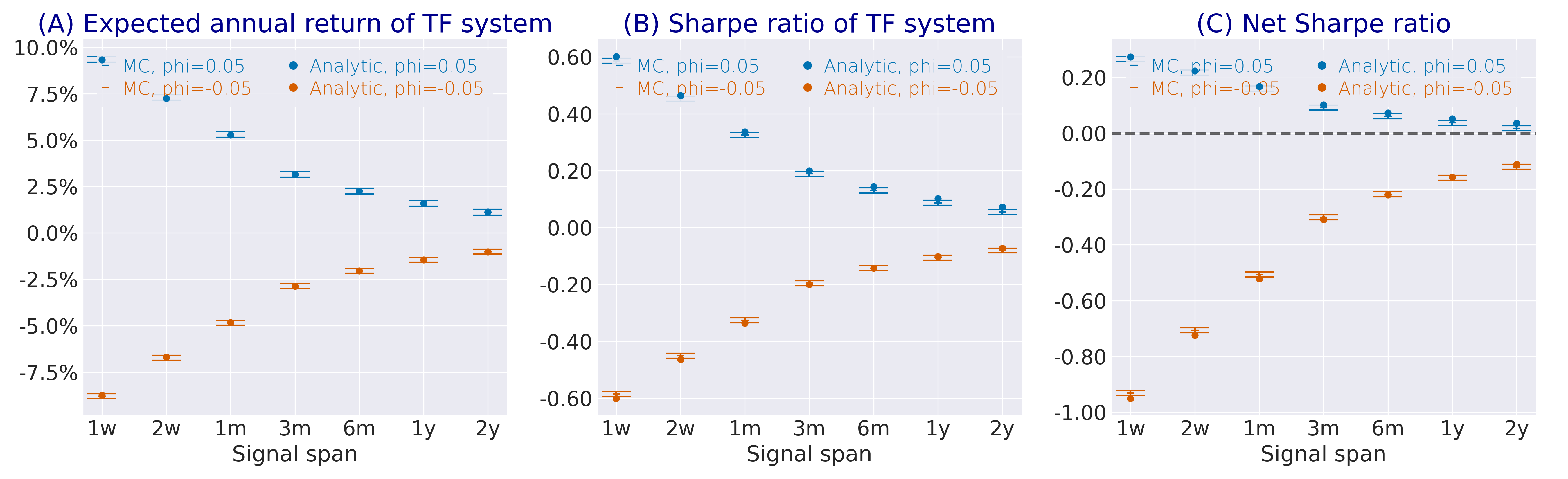}\hspace*{-2.5\baselineskip}
	\end{center}
	\caption{Expected annual return, gross Sharpe ratio, and net Sharpe ratio of the European TF system for the zero-drift AR-1 process in equation \eqref{eq:a3} with $\phi=\{-0.05, 0.05\}$. Panel layout, cost level, and analytic and MC conventions as in Figure \ref{fig:expected_return_white_noise}}
	\label{fig:expected_return_ar}
\end{figure}

Under the zero-drift AR-1 process in equation \eqref{eq:a3}, the loadings and the Sharpe ratio of panel (B) follow from equation \eqref{eq:sr_main} in closed form for $\kappa=0$:
\begin{equation} \label{eq:sr_ar1}
	A_{\nu} = \frac{(1-\nu)\phi}{1-\nu\phi}, \qquad B_{\nu} = \frac{(1-\nu)(1+\nu\phi)}{(1+\nu)(1-\nu\phi)}, \qquad SR = \frac{\sqrt{a}\,A_{\nu}}{\sqrt{B_{\nu}+A^{2}_{\nu}}}.
\end{equation}
The AR-1 process is the linear process with the weights $\psi_{s}=\sqrt{1-\phi^{2}}\,\phi^{s}$, so the kurtosis loading in equation \eqref{eq:sr_k} has the closed form:
\begin{equation} \label{eq:sr_ar1_k}
	K_{\nu} = \frac{(1-\nu)^{2}(1-\phi^{2})^{2}}{(\phi-\nu)^{2}} \left( \frac{\phi^{4}}{1-\phi^{4}} - \frac{2\nu\phi^{3}}{1-\nu\phi^{3}} + \frac{\nu^{2}\phi^{2}}{1-\nu^{2}\phi^{2}} \right).
\end{equation}
At $\phi=\nu$, equation \eqref{eq:sr_ar1_k} is interpreted by continuity.
For $\phi=0.05$ and $\kappa=3$, the correction $\kappa K_{\nu}$ lowers the Sharpe ratio by at most $0.2\%$ in relative terms across the spans from $5$ to $250$ days. We therefore evaluate the AR-1 exhibits with $\kappa=0$.

For a small AR-1 coefficient, the expansion in $\phi$ yields a compact approximation:
\begin{equation} \label{eq:sr_ar1_approx}
	SR \approx \phi \sqrt{a \left(1-\nu^{2}\right)} = \frac{2\phi\sqrt{a}\sqrt{\spn}}{\spn+1} \approx 2\phi\sqrt{\frac{a}{\spn}}.
\end{equation}
The approximation is accurate for the parameter ranges that we consider. The net Sharpe ratio admits the same expansion:
\begin{equation} \label{eq:sr_net_ar1_approx}
	SR^{net} \approx \phi \sqrt{a\left(1-\nu^{2}\right)} - \frac{2ac}{\sqrt{\pi}}\sqrt{1-\nu} \approx 2\sqrt{\frac{a}{\spn}} \left( \phi - c\sqrt{\frac{2a}{\pi}} \right).
\end{equation}
Both terms decay with $1/\sqrt{\spn}$, so the sign of the net Sharpe ratio is span-invariant at leading order and positive if and only if $c < \phi\sqrt{\pi/(2a)}$, which is the small-$\phi$ form of the threshold in equation \eqref{eq:sr_net_ar1}. The net Sharpe ratio reveals a knife-edge property of the AR-1 alpha. The gross Sharpe ratio and the cost drag both decline with the square root of the span, so the break-even cost is nearly span-invariant, with the closed form of Proposition \ref{pr:asym_ar1} and the long-span limit:
\begin{equation} \label{eq:sr_net_ar1}
	c^{*}_{\infty} = \sqrt{\frac{\pi}{2a}}\,\frac{\phi}{1-\phi}.
\end{equation}
For $\phi=0.05$ and $a=260$, the proxy break-even cost varies only between $37$bp and $41$bp across the spans from one week to two years. These values lie around the lower edge of realistic costs of $40$bp to $60$bp, and the pipeline turnover of the single filter at the volatility span of $33$ days exceeds the proxy by about $4\%$, which places the break-even costs slightly below that range. To first order, costs determine the viability of a short-memory alpha. Near the break-even level, costs also push the cost-optimal span toward longer horizons.

Proposition \ref{pr:asym_ar1} derives the interior optimal span in the limit $c \to c^{*}_{\infty}$:
\begin{equation} \label{eq:sr_net_span}
	\spn^{*} \approx \frac{6\phi/(1-\phi) + 3c/(2c^{*}_{\infty})}{1 - c/c^{*}_{\infty}}.
\end{equation}
As the cost approaches the break-even level, the optimal span diverges and the attainable net Sharpe ratio declines toward zero. The optimal response to a marginally profitable short-memory alpha is therefore to trade it at ever longer spans. Equation \eqref{eq:sr_net_span} is asymptotic in the limit $c \to c^{*}_{\infty}$ and is not intended for quantitative use at costs well below the threshold. For such costs, the interior candidate falls below the one-week minimum of our span grid, and the optimum reverts to the corner solution at the fastest span.

The closed forms expose a fundamental trade-off in the span selection. The drift contribution to the Sharpe ratio in equation \eqref{eq:sr_wn} grows with $\sqrt{\spn}$, while the short-term autocorrelation contribution in equation \eqref{eq:sr_ar1_approx} decays with $1/\sqrt{\spn}$. A short filter monetizes short-term autocorrelation and a long filter monetizes the drift. For processes that combine both features, the Sharpe ratio in equation \eqref{eq:sr_main} may attain its maximum at an interior span, depending on the parameter magnitudes and signs, and the formula turns the span selection into an explicit function of the process parameters.

\subsection{Long-memory ARFIMA Process}

\cite{Granger1980} and \cite{Hosking1981} introduce the autoregressive fractionally integrated moving-average (ARFIMA) process. \citet{SchadnerLang2023} estimate option-implied Hurst exponents for S\&P 500 stocks and document time-varying persistence of returns, which supports long memory as a priced feature. The ARFIMA $(0,d,0)$ process is a discrete-time analog of fractional Gaussian noise, with the infinite moving-average representation:
\begin{equation} \label{eq:lm1} 
	\tilde{r}_{t} =   \sum^{\infty}_{j=0}\psi_{j}\epsilon_{t-j}, \ \psi_{j} = \frac{\Gamma(d+j)}{\Gamma(1+j)\Gamma(d)},
\end{equation}
where $\{\epsilon_{t}\}$ are independent innovations with zero mean and unit variance, Gaussian in our exhibits, and $\Gamma(x)$ is the Gamma function. The process is covariance stationary and invertible for $-1/2 < d < 1/2$, which we assume throughout. At $d=0$, the Gamma-function representation of the weights is interpreted by its white-noise limit $\psi_{0}=1$ and $\psi_{j}=0$ for $j \geq 1$.

The variance and autocorrelation of $\tilde{r}_{t}$ are given respectively by:
\begin{equation} \label{eq:lm3} 
	\gamma^{\tilde{r}}_{0} = \frac{\Gamma(1-2d)}{(\Gamma(1-d))^{2}}, \ \rho^{\tilde{r}}_{k} = \frac{\Gamma(1-d)\Gamma(k+d)}{\Gamma(d)\Gamma(k-d+1)}.
\end{equation}

Further, we consider the ARFIMA $(1,d,0)$ process, which incorporates both the short-term autocorrelation produced by the AR-1 process and the long-term autocorrelation produced by the ARFIMA $(0,d,0)$ process:
\begin{equation} \label{eq:lm2} 
	r_{t} = \mu + x_{t}, \qquad x_{t} = \phi x_{t-1} + \tilde{r}_{t}/\sqrt{a},
\end{equation}
where $|\phi|<1$ and the raw drift parameter $\mu_{an}$ sets $\mu=\mu_{an}/a$ outside the autoregression, as in equation \eqref{eq:a3}. The stationary variance of $x_{t}$ equals $V_{\phi,d}/a$ with $V_{\phi,d}$ in equation \eqref{eq:lm4a}, so the normalized annualized drift is $\mu^{z}_{an} = \mu_{an}/\sqrt{V_{\phi,d}}$ rather than $\mu_{an}$. The rescaling is $1.0\%$ at $(\phi, d)=(0, 0.1)$ and below $0.1\%$ at $d=0.02$, and the analytic values apply the normalized drift.

The autocorrelation of the ARFIMA $(1,d,0)$ process $r_{t}$ is given by:
\begin{equation} \label{eq:lm4} 
	\rho^{r}_{k} = \rho^{\tilde{r}}_{k}  \frac{ F(1,d+k,1-d+k; \phi) + F(1,d-k,1-d-k; \phi) - 1}{(1-\phi)F(1,1+d,1-d;\phi)},
\end{equation}
with the variance of the unscaled auxiliary process $\tilde{x}_{t} = \phi \tilde{x}_{t-1} + \tilde{r}_{t}$ given by:
\begin{equation} \label{eq:lm4a} 
	V_{\phi,d} = \gamma^{\tilde{r}}_{0} \frac{F(1,1+d,1-d;\phi)}{1+\phi},
\end{equation}
where $F(a,b,c;x)$ is the hypergeometric function.

The AR-1 autocorrelations decay geometrically, while the long-lag ARFIMA tail decays slowly, with the positive sign for $d>0$ and the negative sign for $d<0$.

We evaluate the autocorrelation sums in equations \eqref{eq:er2} and \eqref{eq:er3} by direct summation using equation \eqref{eq:lm4}. For heavy-tailed innovations, the kurtosis loading $K_{\nu}$ follows numerically from the moving-average weights in equation \eqref{eq:lm1}, rescaled to the unit-norm convention of Assumption \ref{as:lp}. Figure \ref{fig:expected_return_arfima1} shows the expected return, the gross Sharpe ratio, and the net Sharpe ratio at $c=20$bp for the ARFIMA process with $d=0.02$ and $\phi\in\{-0.05, 0.0, 0.05\}$. The calibration matches the observed first-lag autocorrelation through $\rho(1)=d/(1-d)$ for the pure fractional process, with $d=0.02$ at the center of the $0.01$ to $0.04$ range of Section \ref{sec:attr}. Long memory alone makes the TF system profitable, and it stays profitable under short-term mean reversion when the filter span exceeds one month.

\begin{figure}[!htb]\hspace*{-3\baselineskip}
	\begin{center}
		\includegraphics[width=0.86\textwidth, angle=0] {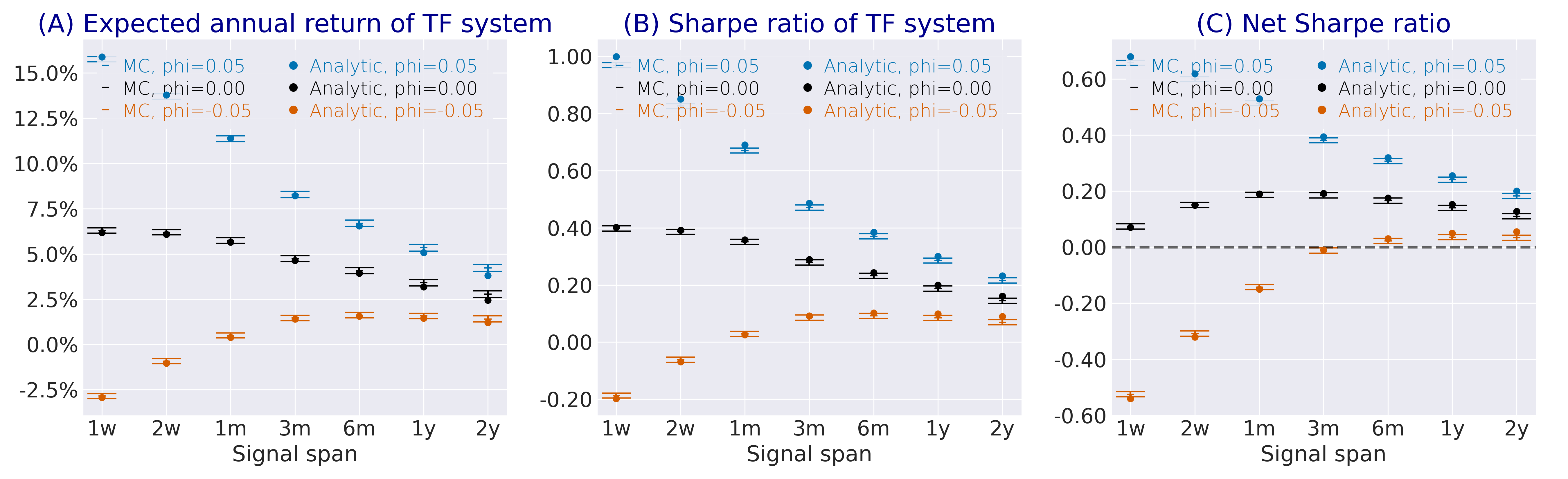}
	\end{center}
	\caption{Expected annual return, gross Sharpe ratio, and net Sharpe ratio of the European TF system for the ARFIMA process with positive long-term memory, fractional order $d=0.02$, AR-1 feature $\phi\in\{-0.05, 0.0, 0.05\}$, and zero drift. Panel layout, cost level, and analytic and MC conventions as in Figure \ref{fig:expected_return_white_noise}}
	\label{fig:expected_return_arfima1}
\end{figure}

The Sharpe ratio in panel (B) is highest at the shortest spans for the pure long-memory case. Panel (C) displays the three regimes of the cost-adjusted span selection in a single figure. For $\phi=0.05$, the alpha dominates the cost at every span and the net Sharpe ratio declines from $0.68$ at the one-week span, so the fast corner remains optimal. For the pure long-memory case $\phi=0$, the net Sharpe ratio is hump-shaped with an interior maximum of $0.19$ at the one-to-three-month spans, because the ratio of the long-memory alpha to the cost drag grows with $\spn^{2d}$ while both decline in the span\footnote{In the large-cost limit, the optimal span scales as $c^{1/(2d)}$ and the net-to-gross ratio at the optimum converges to $4d/(1+2d)$ (Proposition \ref{pr:asym_arfima}). For small $d$, the limit lies beyond practical cost levels: at $d=0.02$ and $c=20$bp, the continuous-span optimum lies at $37$ days with a net-to-gross ratio of $0.61$, against the limiting value of $0.08$. We therefore locate the cost-optimal span from the non-asymptotic proxy objective in equation \eqref{eq:sr_net}, rather than from the large-cost asymptotic law.}. For $\phi=-0.05$, the net Sharpe ratio turns positive between the three-month and the six-month spans, so short-term mean reversion and costs together push the viable region into slow spans. 

For $\phi \neq 0$, the moving-average weights entering the kurtosis loading $K_{\nu}$ are the convolution $h_{j} = \sum^{j}_{k=0} \phi^{k}\psi_{j-k}$ of the fractional weights with the AR weights, normalized to unit variance. For the pure fractional process ARFIMA$(0,d,0)$, using the closed-form autocorrelations derived by \citet{Hosking1981}, the autocorrelation generating function can be written as:
\begin{equation} \label{eq:sr_agf_hyp}
	\Phi_{\nu} = F(d, 1, 1-d; \nu),
\end{equation}
where $F$ is the hypergeometric function of equation \eqref{eq:lm4}, and the identity follows term by term from the Gauss series, because the series coefficients equal the autocorrelations in equation \eqref{eq:lm3}. At $\kappa=0$, the gross and net Sharpe ratios of the single filter and of the long-short filter are therefore closed-form under the pure fractional process, because Corollaries \ref{col:sr} and \ref{col:sr_ls} depend on the process only through $\Phi_{\nu}$ evaluated at the filter spans.

\subsection{Verification Tests}\label{ssec:sr_mc}

Table \ref{tab:t6} verifies the closed-form Sharpe ratios within the full pipeline at the configuration of the empirical section: the long-short filter LS(250,20) under the long-memory calibration $d=0.1$, which makes the cost and tail effects resolvable at the reported MC precision. The Gaussian sub-panels pair the analytic values at $\kappa=0$ with Gaussian simulations. The Student-t sub-panels pair the analytic values at $\kappa=3$ with simulations under standardized Student-t innovations with six degrees of freedom, which keep the unit variance and add an excess kurtosis of $3$, close to the empirical median of $3.8$ across our futures universe. The MC net values charge the cost against the realized turnover of equation \eqref{eq:tur1}, and the analytic net values apply the signal-turnover proxy of Proposition \ref{pr:turnover2}. The ARFIMA drift rows label the raw drift parameter $\mu_{an}=0.50$ of equation \eqref{eq:lm2}, and the variance scale maps it to the normalized annual drift $\mu^{z}_{an} = \mu_{an}/\sqrt{V_{\phi,d}}$: $V_{0,0.1} = \Gamma(0.8)/\Gamma(0.9)^{2} \approx 1.019$ gives $0.495$, and $V_{-0.05,0.1} \approx 1.011$ gives $0.497$. The white-noise rows carry the same convention, with $\mu^{z}_{an}=\mu_{an}$ under unit variance. The analytic and the MC values both reflect the normalized drift. Both the MC and the analytic net ratios divide the cost-adjusted mean by the gross annualized volatility, matching the definition in equation \eqref{eq:sr_net_def}. The AR-1 rows are near zero for both signs of $\phi$, because the two legs of the long-short filter carry the identical loading on the first lag, which cancels in the signal: the covariance in equation \eqref{eq:sr_ls} reduces to $q\phi^{2}(\nu_{1}-\nu_{2})/\left((1-\nu_{1}\phi)(1-\nu_{2}\phi)\right)$, which is positive for either sign of $\phi$ and of the second order in $\phi$. The single filter retains the first-order term and its Sharpe ratio changes sign with $\phi$, as equation \eqref{eq:sr_ar1_approx} shows. Figure \ref{fig:expected_return_arfima1} verifies the formulas at the empirically calibrated $d=0.02$.

The table supports three conclusions. First, the analytic values track the simulations within $0.05$ across all processes, with the gap driven by the dampening of the autocorrelation of $z_{t}$ by the lagged volatility estimator, which lowers the first-lag autocorrelation from $0.050$ to $0.046$ under the AR-1 calibration and attenuates the Sharpe ratio by $2\%$ to $8\%$ in relative terms. Second, the heavy tails lower the pooled gross Sharpe ratio by at most $0.009$, so the kurtosis effect is second order within the pipeline. Third, the cost of $20$bp lowers the Sharpe ratio by at most $0.02$, while the simulated turnover exceeds the proxy by about a factor of two at the volatility span of $33$ days, as Section \ref{sc:eft} discusses. The formulas with population moments therefore predict the realized performance of the empirical section's filter, gross and net, under realistic tails.

\begin{table}[!htb]
	\footnotesize
	\centering
	\resizebox{\textwidth}{!}{
		\begin{tabular}{l|cc|cc|cc|cc}
			\hline
			& \multicolumn{4}{c|}{Gross Sharpe ratio} & \multicolumn{4}{c}{Net Sharpe ratio at $c=20$bp} \\
			& \multicolumn{2}{c|}{Gaussian} & \multicolumn{2}{c|}{Student-t(6)} & \multicolumn{2}{c|}{Gaussian} & \multicolumn{2}{c}{Student-t(6)} \\
			Process & Analytic & MC & Analytic & MC & Analytic & MC & Analytic & MC \\
			\hline
			White noise, $\mu_{an}=0.25$ & 0.062 & 0.065 & 0.062 & 0.061 & 0.050 & 0.048 & 0.050 & 0.042 \\
			White noise, $\mu_{an}=0.50$ & 0.227 & 0.234 & 0.227 & 0.225 & 0.216 & 0.217 & 0.216 & 0.207 \\
			AR-1, $\phi=+0.05$ & 0.001 & -0.000 & 0.001 & -0.002 & -0.011 & -0.018 & -0.011 & -0.021 \\
			AR-1, $\phi=-0.05$ & 0.000 & 0.002 & 0.000 & 0.003 & -0.012 & -0.016 & -0.012 & -0.016 \\
			ARFIMA, $d=0.1$ & 0.696 & 0.670 & 0.696 & 0.664 & 0.689 & 0.654 & 0.689 & 0.647 \\
			ARFIMA, $d=0.1$, $\phi=-0.05$ & 0.666 & 0.640 & 0.666 & 0.635 & 0.659 & 0.624 & 0.659 & 0.619 \\
			ARFIMA, $d=0.1$, $\mu_{an}=0.50$ & 0.809 & 0.781 & 0.809 & 0.776 & 0.802 & 0.765 & 0.802 & 0.759 \\
			ARFIMA, $d=0.1$, $\phi=-0.05$, $\mu_{an}=0.50$ & 0.784 & 0.763 & 0.784 & 0.757 & 0.777 & 0.747 & 0.777 & 0.740 \\
			\hline
	\end{tabular}}
	\caption{Trading costs at $20$bp lower the long-short Sharpe ratio by at most $0.02$, heavy tails by at most $0.009$, and the analytic formulas track the simulations within $0.05$. The table reports annualized gross and net Sharpe ratios of the European TF system with the filter LS(250,20). Analytic values apply Corollary \ref{col:sr_ls} with $\kappa=0$ in the Gaussian sub-panels and $\kappa=3$ in the Student-t sub-panels, where the kurtosis correction stays below $0.001$, and charge the cost of $c=20$bp against the signal-turnover proxy of Proposition \ref{pr:turnover2}. MC values are pooled estimators over $1000$ paths of $50$ years through the pipeline of Section \ref{sec:sp}, with standardized Student-t(6) innovations in the Student-t sub-panels, and charge the cost against the realized turnover of equation \eqref{eq:tur1}. Their $95\%$ confidence intervals range from $\pm 0.008$ to $\pm 0.015$, and the full grid is reproducible from the companion repository}
	\label{tab:t6}
\end{table}

\section{Implementation of TF Systems}\label{sc:impl}

We apply the three TF systems to the universe of $84$ futures contracts in Table \ref{fig:universe}, which covers the most liquid contracts across global equity, bond, short-rate, currency, and commodity markets, with the volume-based trading costs of Table \ref{fig:costs}, which follow Exhibit B1 in \citet{Hurst2017} and split by period and asset class. These costs cover rebalancing and rolling, and in our experience they are representative for a trend-following system at a medium to slow trading frequency. The proportional cost model excludes market impact, which scales nonlinearly with trade size and becomes binding for large programs.

\begin{table}[th]
	\small
	\centering
	\begin{tabular}{lcl}
		\hline
		Asset class & Number of contracts & Earliest contract \\
		\hline
		Equities & $21$ & January 1962 \\
		Bonds & $16$ & January 1962 \\
		STIR & $4$ & June 1989 \\
		FX & $10$ & September 1971 \\
		Energy & $7$ & July 1986 \\
		Metals & $6$ & January 1975 \\
		Agriculture & $20$ & July 1959 \\
		\hline
		Total & $84$ & July 1959 \\
		\hline
	\end{tabular}
	\caption{Universe of $84$ liquid futures contracts split into asset classes. The table reports the number of contracts and the start of the earliest contract in each class. STIR stands for short-term interest rate, and FX stands for foreign exchange}
	\label{fig:universe}
\end{table}

\begin{table}[th]
	\small
	\centering
	\begin{tabular}{lccc}
		\hline
		Asset class & Until Dec 1992 & Jan 1993 to Dec 2002 & From Jan 2003 \\
		\hline
		Equities & $34$bp & $11$bp & $6$bp \\
		Bonds & $6$bp & $2$bp & $1$bp \\
		STIR & $6$bp & $2$bp & $1$bp \\
		FX & $18$bp & $6$bp & $3$bp \\
		Energy, Metals, Agriculture & $58$bp & $19$bp & $10$bp \\
		\hline
	\end{tabular}
	\caption{One-way volume-based transaction costs applied in the backtests, in basis points, by asset class and period, following Exhibit B1 of \citet{Hurst2017}. The Energy, Metals, and Agriculture classes in Table \ref{fig:universe} share the commodity estimate, and the STIR class takes the bond estimate}
	\label{fig:costs}
\end{table}

\subsection{Grid Backtests}\label{sc:grid}

We start with grid backtests of the TF systems, to analyze the parameter sensitivity and to select the model parameters, simulating the performance from 1 January 1997 to 10 July 2026. Each system is applied to all available futures contracts with individual profit-and-loss aggregated on the portfolio level, without portfolio risk limits. We estimate EWMA volatility using equation \eqref{eq:ewmavol} for European and TSMOM TF systems with a span of $33$ days, and the ATR using equation \eqref{eq:int5} for American TF with the period of $33$ days. We assume that futures contracts trade in fractions, with an initial trade level of $1$ USD and net profits reinvested. We consider three characteristics. The Sharpe ratio applies the arithmetic daily-moment convention of equation \eqref{eq:sr_def} to the total excess performance without management or performance fees, and returns on futures contracts are excess returns, so the Sharpe ratio requires no adjustment by the risk-free rate. The Bear Sharpe ratio conditions the excess performance on the bear regime. The annualized average costs are realized by portfolio rebalancing and computed with the volume costs in Table \ref{fig:costs}, and the costs reflect the turnover of each configuration. We define the bear regime as the $16\%$ of worst quarters of the long-only benchmark portfolio, because the $16\%$ tail corresponds to the one-sigma quantile of the normal distribution. For the long-only benchmark, we use the 60/40 equity/bond portfolio computed using S\&P 500 index futures and the 10-year UST futures, respectively, with quarterly rebalancing. The grid contains $64$ configurations for each system. We read the grid as a sensitivity analysis rather than a strategy selection, because the maximum Sharpe ratio over a grid overstates the out-of-sample value (\citealp{Sullivan1999}).

For the European TF system, the two key parameters include long and short spans in days for the long-short EWMA filter in equation \eqref{eq:ewma3}, where we set $\nu_{1}=1-2/(\mathrm{long}+1)$ and $\nu_{2}=1-2/(\mathrm{short}+1)$. The top row of Figure \ref{fig:grid_backtests} shows the corresponding grid backtest. Fast TF with a small long span leads to high turnover and realized costs. We fix the long and short spans to 250 and 20, which provide reasonable long-term performance at moderate turnover, with realized costs of $1.7\%$, in line with a typical real-world CTA system.

For the American TF system in Definition \ref{as:am}, the core parameters are the spans of the long and short filters in equation \eqref{eq:as1} with $\nu_{1}=1-2/(\mathrm{long}+1)$ and $\nu_{2}=1-2/(\mathrm{short}+1)$. We set the entry-point scale parameter $\omega=5$ in equations \eqref{eq:as3} and \eqref{eq:as4} and set the stop-loss width parameter $p=5$ in equations \eqref{eq:as3a} and \eqref{eq:as4a}. The middle row of Figure \ref{fig:grid_backtests} shows a pattern similar to the European system, with the best total and Bear Sharpe ratios around the long and short values of $250$ and $20$.

For the generalized TSMOM system in equation \eqref{eq:tsmom2}, the two key parameters are the period length $L$ in days and the number of periods $M$ for computing the aggregate signal. The bottom row of Figure \ref{fig:grid_backtests} shows the optimal values around $M=L=10$.

\begin{figure}[!htb]
	\begin{center}\vspace*{-\baselineskip} 
		\includegraphics[width=0.85\textwidth, angle=0] {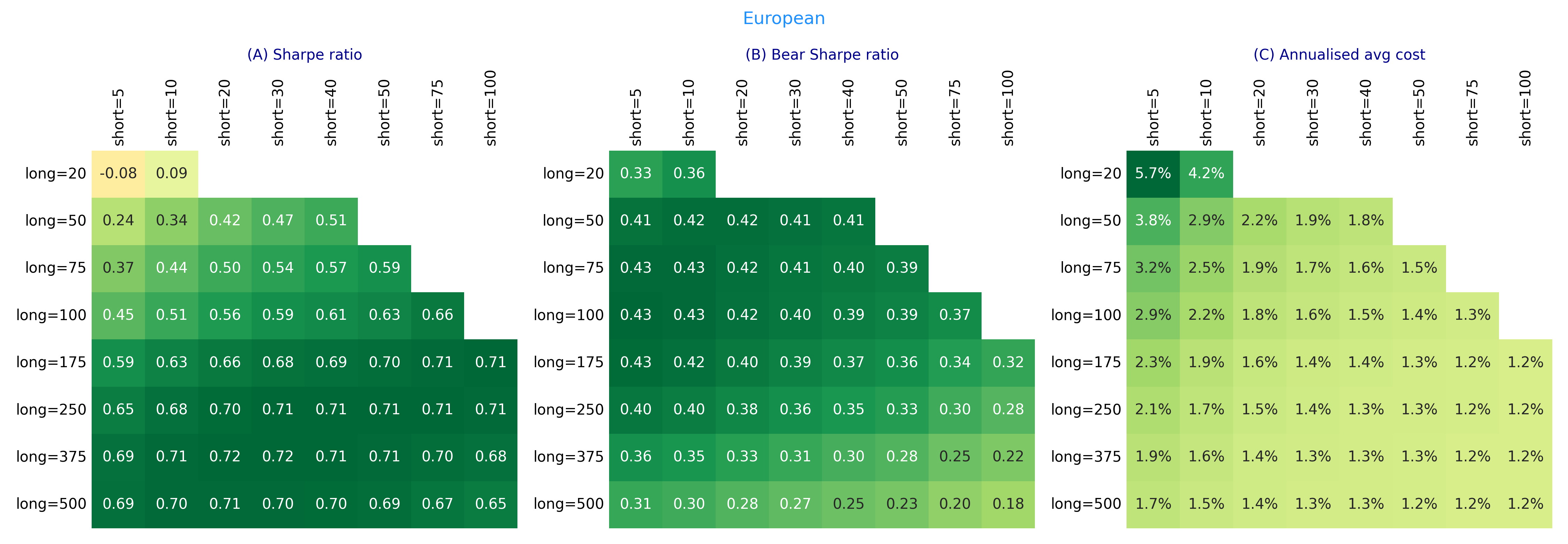}
		\includegraphics[width=0.85\textwidth, angle=0] {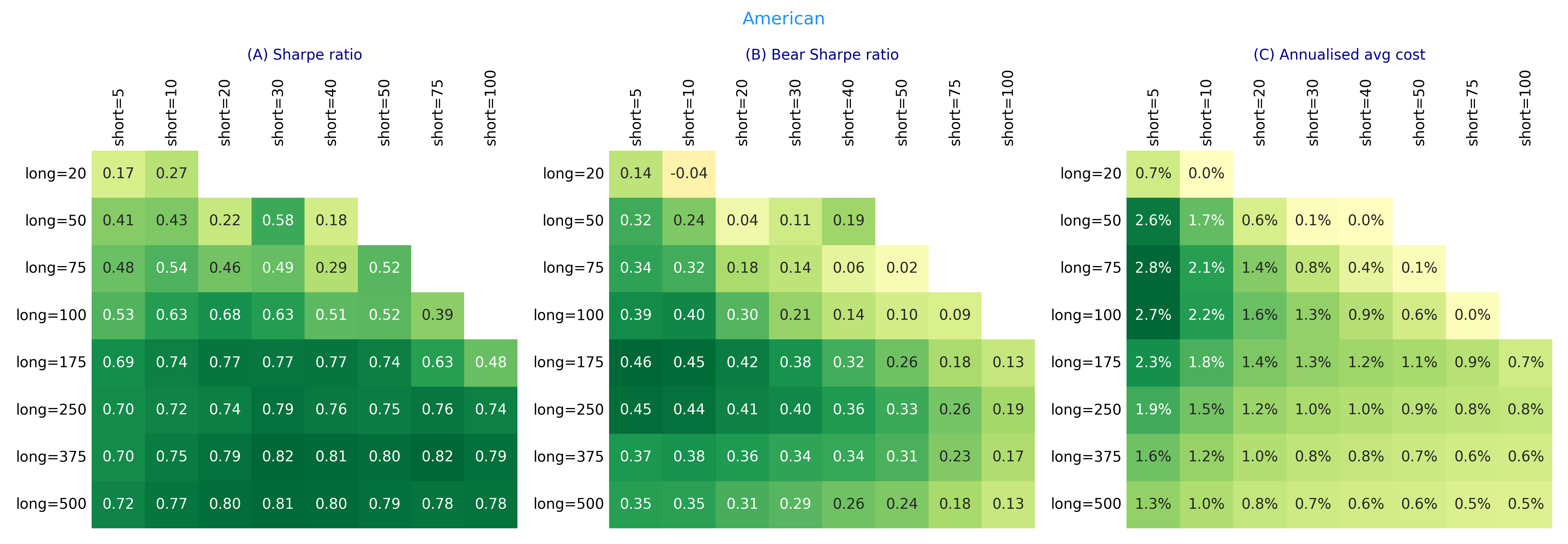}
		\includegraphics[width=0.85\textwidth, angle=0] {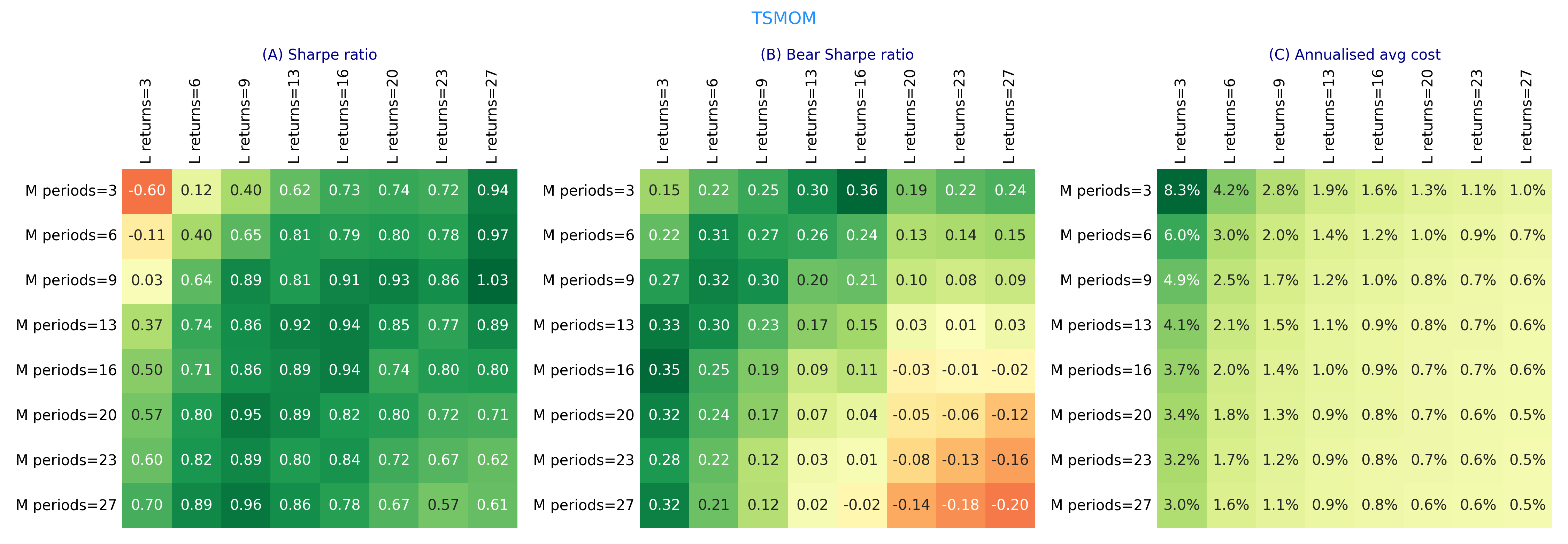}
	\end{center}
	\caption{Grid backtests of the three TF systems with excess returns realized from 1 January 1997 to 10 July 2026. The top row shows the European TF system on the grid of long and short spans in days for the long-short EWMA filter in equation \eqref{eq:ewma3} with $\nu_{1}=1-2/(\mathrm{long}+1)$ and $\nu_{2}=1-2/(\mathrm{short}+1)$. The middle row shows the American TF system in Definition \ref{as:am} on the same grid of spans, with the entry-point and stop-loss scales set by $\omega=p=5$. The bottom row shows the TSMOM system in equation \eqref{eq:tsmom2} on the grid of the period length $L$ and the number of periods $M$. In each row, (A) is the Sharpe ratio of excess returns, (B) is the Bear Sharpe ratio realized in the $16\%$ of worst quarters of the 60/40 equity/bond portfolio, and (C) is the realized transaction costs using the volume-based costs in Table \ref{fig:costs}}
	\label{fig:grid_backtests}
\end{figure}

\subsection{Comparison with SG Trend Index}\label{sec:comp}

We compare our three implementations of TF systems with the SG Trend Index, which tracks the performance of the 10 largest CTAs with annual rebalancing. To make a fair comparison, we apply $2\%/20\%$ management/performance annual fees to performances of our TF systems. We report the excess performance of the TF systems after transaction costs and fees. The index tracks funded programs and includes their interest income, which we cannot strip out, while we keep the TF systems and the 60/40 benchmark on excess returns, so the comparison modestly favors the index.

In Figure \ref{fig:tf_sg_backtest_paper}, we show the performance of the three simulated TF systems using specifications and established optimal parameters as in Section \ref{sc:grid}. For European and American TF systems, we set the spans of long and short EWMA filters to $250$ and $20$, respectively. We set the static volatility target and position risk so that all three systems generate the same level of volatility of their returns. For the TSMOM system, we set the period length to $L=10$ days and the number of periods to $M=10$. Panels (A1), (A2i), and (B1) show that, for this choice of TF system parameters, all three systems generate similar performance in the sample period from 31 December 1999 to 10 July 2026. 

From panel (C1), the three systems correlate with the SG Trend Index at $80\%$ on average, and the European and American systems correlate at $95\%$, which indicates the robustness of different implementations with similar lookbacks. From panels (B2) and (C2), the American TF system generates about $50\%$ lower turnover and rolling costs than the European and TSMOM systems. The core distinction is that the American TF system generates discrete trades, whereas the European and TSMOM systems generate continuous signals that require more frequent rebalancing. For practical implementation of TF systems using any methodology, it is beneficial to introduce some discretization rules.

We test the Sharpe ratio differences with the robust test of \cite{LedoitWolf2008} on monthly returns net of transaction costs and net of fees, which is the basis of Figure \ref{fig:tf_sg_backtest_paper} and the basis on which the index reports, from 31 December 1999 to 30 June 2026. The European, American, and TSMOM systems deliver annualized Sharpe ratios of $0.47$, $0.50$, and $0.55$ against $0.47$ for the SG Trend Index. The differences carry p-values of $0.96$, $0.82$, and $0.62$, so the test does not reject the equality of the Sharpe ratios for any of the systems. The failure to reject is consistent with the replication claim, although it is not evidence of equivalence.

From panel (A2ii), all three systems and the SG Trend Index generate net annual returns with small dispersion, so all three specifications capture strong market trends across regimes. TF systems are weakly correlated to the 60/40 portfolio over longer periods, so TF systems are natural diversifiers of long-only portfolios.

From panel (B2), the volatility-normalized turnover ranges between $300\%$ and $400\%$ for the European and TSMOM systems and between $125\%$ and $200\%$ for the American system. A turnover of $300\%$ corresponds to a long-short equity strategy rebalancing about $25\%$ per month, while the absolute annualized turnover is close to $2000\%$, dominated by the low-volatility bond and rates contracts.

\begin{figure}[!htb]\hspace*{-3\baselineskip}
	\begin{center}\hspace*{-3\baselineskip}\vspace*{-\baselineskip} 
		\includegraphics[width=0.9\textwidth, angle=0] {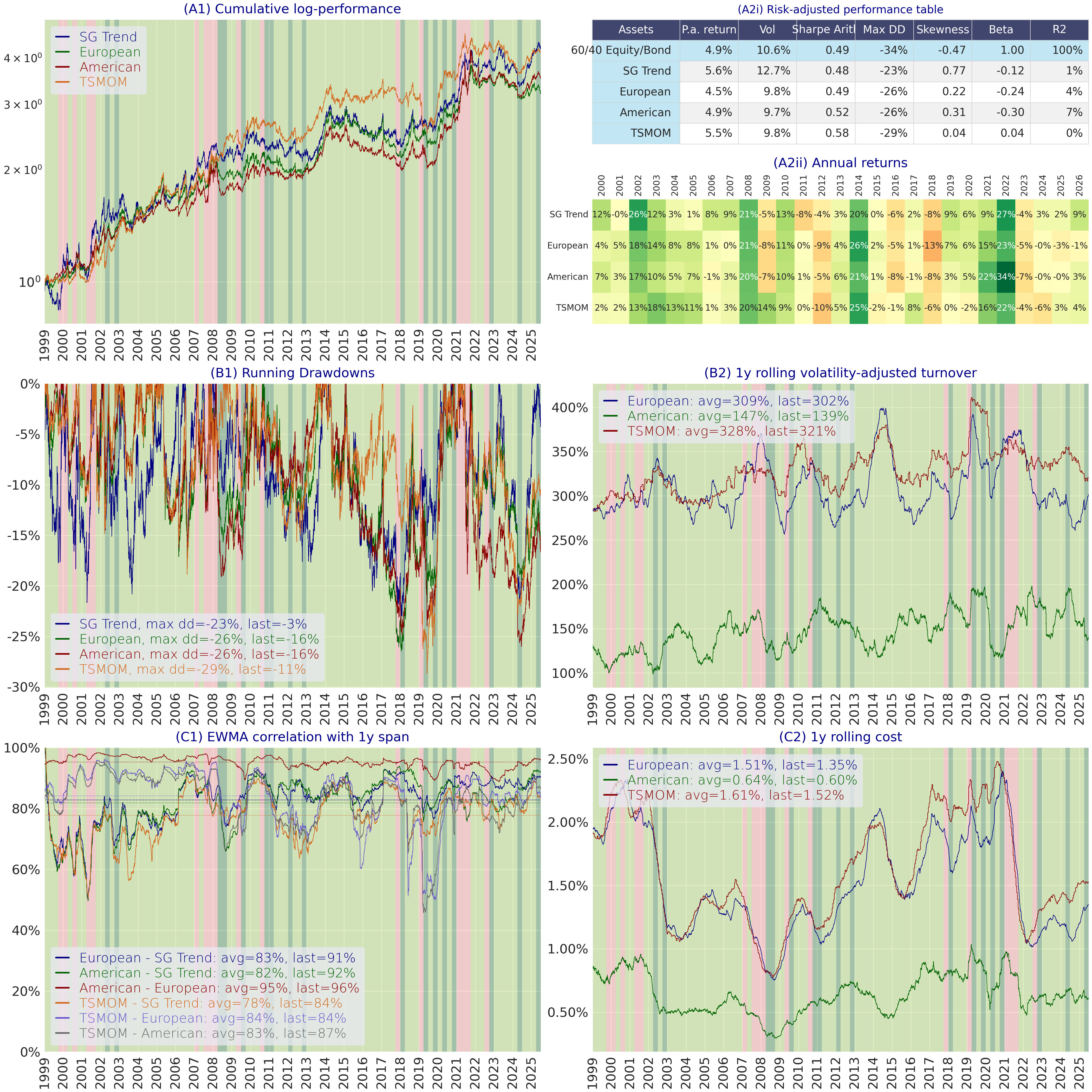}\hspace*{-2.5\baselineskip} 
	\end{center}
	\caption{Simulated performance of European, American and TSMOM systems along with the historical performance of SG Trend Index. Panels (A1), (B1), and (C1) show the cumulative log-performance, running drawdown, and EWMA correlations with one year span. Panel (A2i) shows the risk-adjusted performance table: P.a. return is the annualized return (CAGR), Vol is the annualized volatility of daily log returns, Sharpe Arith is the annualized Sharpe ratio of daily excess returns in the arithmetic convention of equation \eqref{eq:sr_def}, Max DD is the maximum drawdown, Skew is the skewness of quarterly log returns, and $\beta$ and $R^{2}$ are the slope and the $R^{2}$ of the regression of monthly returns on the 60/40 equity/bond portfolio. Panel (A2ii) shows annual returns. Panels (B2) and (C2) show one year rolling volatility-normalized turnover and cost, respectively. The background color orders the quarterly returns of the benchmark 60/40 portfolio and splits the $16\%$ of worst returns into the bear regime (pink), the $16\%$ of best returns into the bull regime (dark green), and the remaining quarters into the normal regime (light green). The period of performance measurement is from 31 December 1999 to 10 July 2026}
	\label{fig:tf_sg_backtest_paper}
\end{figure}

\subsection{Attribution of Realized Sharpe Ratios to Autocorrelation and Drift}\label{sec:attr}

Corollary \ref{col:sr} expresses the Sharpe ratio of the European TF system through two inputs: the autocorrelation function and the drift of the volatility-normalized returns. We estimate both inputs from the futures data and test the formula against the realized backtests of all three TF systems. For each of the 84 futures contracts, we compute the volatility-normalized returns $z_{t}$ with the EWMA span of 33 days, skip the first 250 days, and require at least six years of history. The signal filters initialize at zero at inception. The missing fraction of the expected response to a constant mean is $\nu^{t}$, while the standard-deviation scale under independent zero-mean input is $\sqrt{1-\nu^{2t}}$. At the 520-day span on the first included day, the missing mean response is $38\%$ and the standard-deviation attenuation is $8\%$, declining to $5.6\%$ and below $0.2\%$ within two further years and averaging out over the six-year-minimum samples. The liquidity and history screens may tilt the estimates toward heavily traded markets. We estimate the sample autocorrelation function of $z_{t}$ to 780 lags and the sample mean of $z_{t}$, both over the full instrument sample.

The prediction applies equation \eqref{eq:sr_main} with the sample autocorrelation function, the sample variance $\hat{\gamma}_{T}(0)$, the variance-normalized annualized drift $\hat{\mu}^{z}_{an} = \sqrt{a}\,\overline{z}_{T}/\sqrt{\hat{\gamma}_{T}(0)}$, and the kurtosis loading $\kappa=0$, which Section \ref{ssec:sr_mc} justifies as a second-order effect. At the 520-day span, the truncation weight is $\nu^{780} \approx 0.05$, so the truncation may omit a small residual long-memory contribution at the slowest span. The validation is nonparametric, so no parametric return model is estimated. We compute two predictions for each filter span: the autocorrelation channel alone, with the drift set to zero, and the total, which adds the sample drift. Treating the population variance as known, the squared sample drift carries an upward bias of $a\,\Var[\overline{z}_{T}]/\vartheta$, which equals $a/T \approx 0.17$ in the squared annualized drift at the six-year minimum sample under independent unit-variance returns. With the sample variance in the denominator, the expression is a first-order approximation, because the variance estimate carries its own error, and serial dependence replaces the variance of the sample mean by its long-run counterpart. The sample drift square is therefore a valid component of the in-sample decomposition and not an unbiased estimate of the population drift channel, and a forward-looking application should debias or shrink it. We compare the predictions with the Sharpe ratios of the gross backtests on the same sample, which isolates the accuracy of the formula from the accuracy of the turnover approximation. The comparison is in-sample by design: we test whether the closed form reproduces the realized performance on the same sample, including the pipeline effects that the population formulas do not capture. The out-of-sample application is a subject for future research.

Figure \ref{fig:attribution_scatter} pools the total predictions against the realized Sharpe ratios across instruments and spans. For the European system in panel (A), the pooled correlation is $0.99$ and the regression slope is $0.96$, so the formula reproduces the realized Sharpe ratio at the instrument level. Panels (B) and (C) apply the same European predictions to the realized Sharpe ratios of the American and TSMOM systems at matched lookbacks. We set the American short span to two days and the TSMOM period length to one day with the number of periods equal to the span, which a parameter search on the same sample identifies as the closest discretized counterparts of the European filter, with details in the companion repository, so the cross-system comparison is an in-sample calibrated exercise rather than an independent validation.

The TSMOM system tracks the predictions with a correlation of $0.89$ and a slope of $0.73$ across all spans. At the leading order, the Gaussian sign-moment identity gives a benchmark slope of $\sqrt{2/\pi}\approx 0.80$ for the sign filter, so the measured TSMOM slope of $0.73$ sits close to the Gaussian benchmark\footnote{For a jointly Gaussian pair with a zero-mean unit-variance signal, $\mathbb{E}\left[\sgn\left(S\right)z\right]=\mathbb{E}\left[Sz\right]\mathbb{E}\left[\left|S\right|\right]=\sqrt{2/\pi}\,\mathbb{E}\left[Sz\right]$, which holds exactly and is the Bussgang theorem for the hard limiter. The sign transform scales the autocorrelation channel by $\sqrt{2/\pi}$ and leaves the variance at one to the leading order, because $\sgn^{2}\left(S\right)=1$, so the leading order refers to the Sharpe ratio mapping rather than to the moment identity.}.

The American system tracks the predictions with a correlation of $0.92$ and a slope of $0.61$ at the spans above one month, while the trailing stops and the entry buffers disengage the system from the signal at faster spans. The slopes below one measure a combined implementation discount, which mixes the nonlinear signal transformation, the lookback mapping, the entry buffers and stops, and the different risk denominators.

Figure \ref{fig:attribution_medians} shows the cross-sectional medians by span with the decomposition of the prediction. The autocorrelation channel produces median Sharpe ratios from $0.55$ at the 5-day span to $0.33$ at the two-year span. The drift channel adds up to $0.08$, and the contribution grows with the span as equation \eqref{eq:sr_wn} predicts. The realized European medians run from $0.48$ to $0.39$ and sit $4\%$ to $14\%$ below the total prediction, with the largest gaps at short spans, where the error of the volatility estimator has the largest effect on fast signals, consistent with the attenuation of Section \ref{ssec:sr_mc}. The realized TSMOM medians run parallel to the prediction across all spans at the discount of the sign filter. The realized American medians converge to the prediction as the span grows. Thus both the TSMOM and the American TF systems can be viewed as a slow-span implementation of the same alpha.

\begin{figure}[!htb]
	\begin{center}
		\includegraphics[width=1.0\textwidth, angle=0]{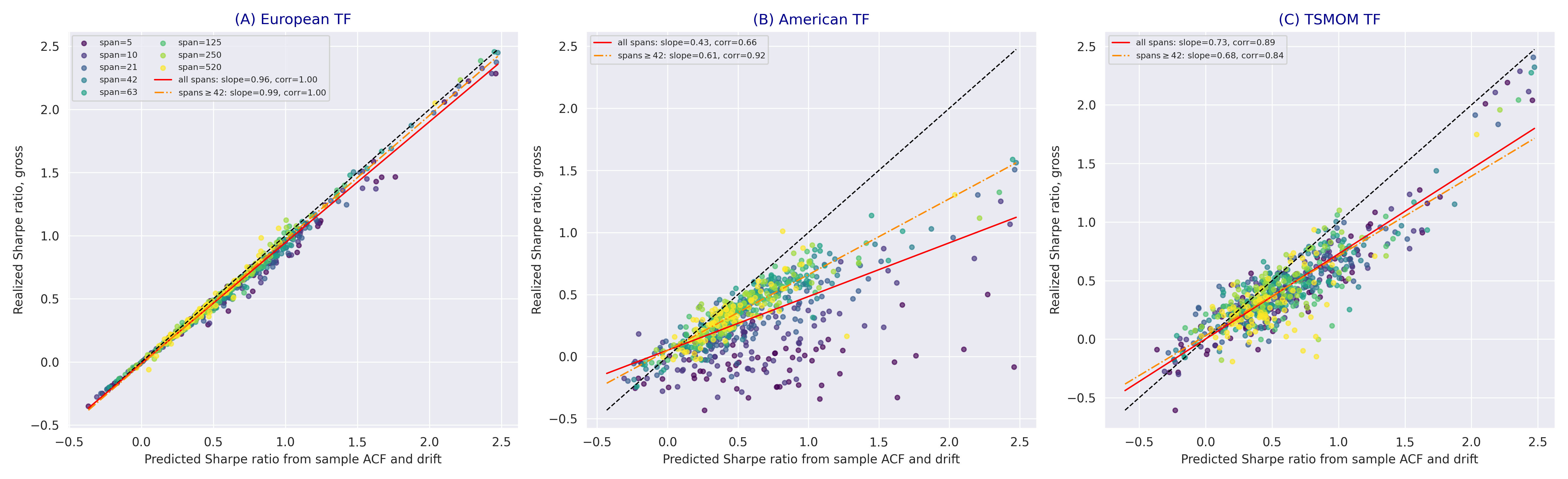}
	\end{center}
	\caption{The European closed form predicts realized Sharpe ratios across instruments, spans, and system designs. Each panel compares the Sharpe ratio predicted from the sample autocorrelation function and the sample drift of $z_{t}$ in equation \eqref{eq:sr_main} with the realized Sharpe ratio of the gross backtest, for the 84 futures contracts and the filter spans from 5 to 520 days, with points colored by span. Panel (A) shows the European single-filter system, panel (B) the American system with the short span of two days and the buffer and stop parameters of Section \ref{sc:grid}, and panel (C) the TSMOM system with daily signs and the number of periods equal to the span. The dashed line is the diagonal, the red line is the pooled regression over all spans, and the orange line is the regression over the spans of 42 days and above. Samples run from the instrument inception plus 250 warmup days to 10 July 2026, with a minimum of six years, and the points are dependent across spans within an instrument. Sharpe ratios are annualized with $a=260$ and use gross excess returns of the packaged dataset of the companion repository}\label{fig:attribution_scatter}
\end{figure}

\begin{figure}[!htb]
	\begin{center}
		\includegraphics[width=1.0\textwidth, angle=0]{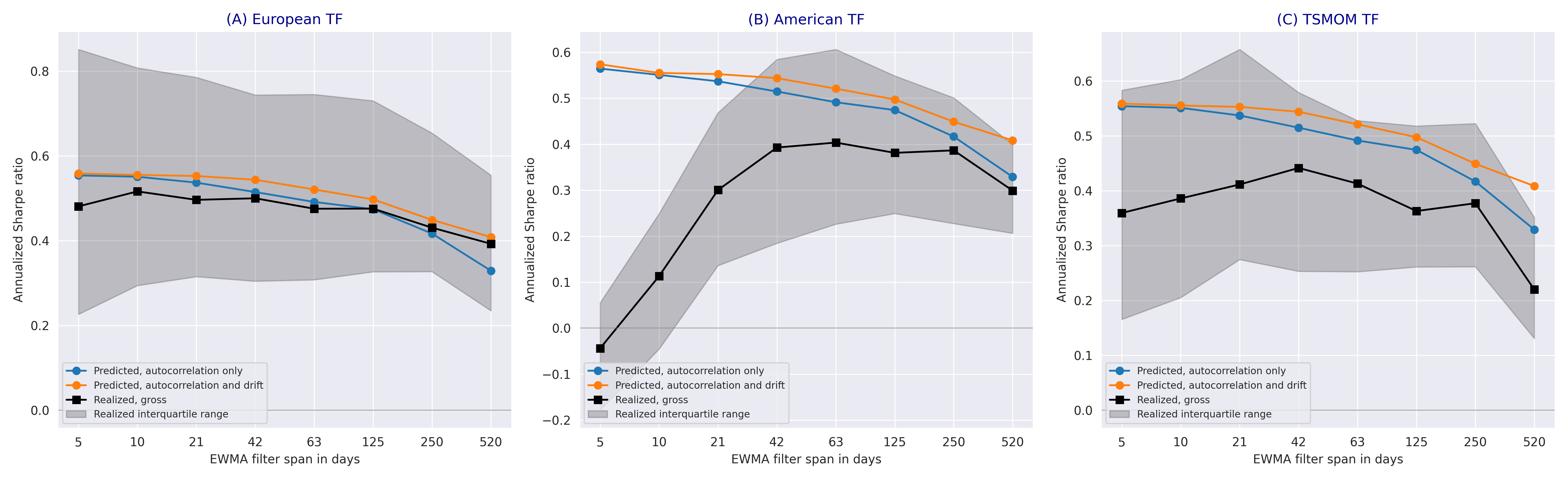}
	\end{center}
	\caption{The trend-following alpha shifts from autocorrelation at short spans to drift at long spans, and each system design monetizes it at its implementation discount. Each panel shows the cross-sectional medians of the predicted and realized Sharpe ratios by filter span, with the interquartile band of the realized values, for the systems and samples of Figure \ref{fig:attribution_scatter}. The predicted medians decompose into the autocorrelation-only channel, which sets the drift to zero, and the total prediction, which adds the sample drift. The predicted lines are identical across the panels, because all panels apply the European closed form of equation \eqref{eq:sr_main}}\label{fig:attribution_medians}
\end{figure}

The attribution supports the central claim of the paper with market data. Trend-following performance on liquid futures is primarily an autocorrelation phenomenon, and, in the realized in-sample decomposition, the drift channel contributes a secondary share that grows with the span. The same closed form ranks the performance across all three system designs, so the American and TSMOM systems monetize a closely related signal, and the closed form transfers to these nonlinear designs empirically rather than mathematically. The in-sample correlation validates the moment reconstruction and the limited size of the boundary, initialization, and pipeline effects, while the out-of-sample forecasting value of the estimated inputs remains to be tested. The EWMA lag-1 autocorrelation of $z_{t}$ across the universe declines from about $0.04$ in the 1990s to about $0.01$ after 2010, with the estimation reproducible from the companion repository. Because the autocorrelation input varies slowly across decades, the estimated autocorrelation function is a candidate input for conditional span selection, and we recommend estimating the inputs on rolling windows with a debiased drift, because the formula applies conditionally on the prevailing autocorrelation function.

\subsection{Skewness of Aggregated Returns}\label{ssec:skew}

The product structure of the daily return generates positive skewness under time aggregation. The daily return multiplies the lagged signal by the current return, so the $T$-day cumulative return loads on the realized autocovariance of the volatility-normalized returns at the filter lags. This loading makes the aggregated return a convex payoff on the realized trend, in line with the convexity of TF portfolios documented by \cite{Bouchaud2017}, and \cite{Potters2006} document the resulting positive skewness of the exponential-average system on a random walk. Appendix~\ref{app:skew} derives the third moment of the aggregated return in closed form under white noise from a single pairing identity. Proposition~\ref{pr:skew} gives the skewness of the $T$-day return of the single-filter system in equation \eqref{eq:skew_t}: the skewness is zero at $T=1$, positive at every horizon $T\geq 2$ although the expected return is zero, hump-shaped with the maximum near half the filter span, and independent of the loading and of the volatility target.

\begin{figure}[ht]
	\begin{center}
		\includegraphics[width=0.99\textwidth, angle=0] {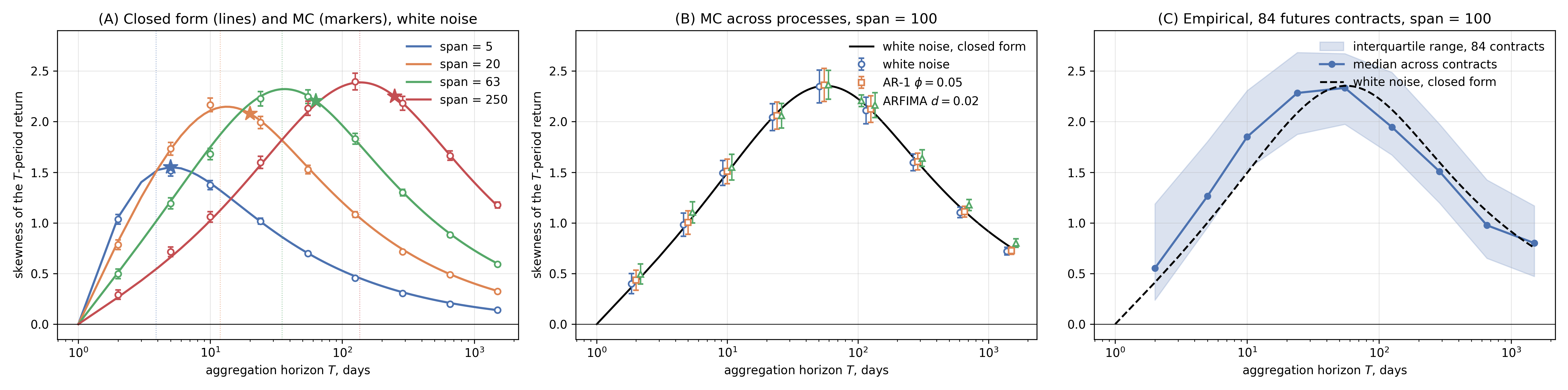}
	\end{center}
	\caption{The skewness of aggregated TF returns is positive at every horizon beyond one day and peaks near half the filter span, in closed form, in simulation, and in the data. Panel (A) reports the closed-form skewness of equation \eqref{eq:skew_t} under white noise for the filter spans of $5$, $20$, $63$, and $250$ days (lines), MC estimates over $400,000$ paths with $95\%$ confidence intervals (markers with error bars), the peak horizons of the closed form (dotted vertical lines), and the horizon equal to the filter span (stars), with the third central moment normalized by the known variance $T$. Panel (B) reports MC estimates at the span of $100$ days for the three processes of Section \ref{sec:sp} with zero drift, white noise, the AR-1 process with $\phi=0.05$, and the ARFIMA process with $d=0.02$ and $\phi=0$, over $100,000$ paths with the standardized sample skewness, together with the white-noise closed form (line). Panel (C) reports the empirical skewness of the gross single-filter European system with the span of $100$ days on the $84$ futures contracts of Table \ref{fig:universe}: the cross-sectional median and interquartile range of the standardized sample skewness of overlapping $T$-day sums, for contracts with at least six non-overlapping lengths of history at each horizon, against the white-noise closed form. Samples run from the instrument inception plus warmup to 10 July 2026 and use the gross excess returns of the packaged dataset of the companion repository}
	\label{fig:aggregated_skewness}
\end{figure}

Panels (A) and (B) of Figure~\ref{fig:aggregated_skewness} verify and extend the closed form by simulation. Panel (A) confirms equation \eqref{eq:skew_t} at every span and horizon. Panel (B) places the three processes of Section \ref{sec:sp} against the white-noise closed form: the AR-1 estimates track the closed form at every horizon, long memory raises the skewness at horizons short relative to the span, and at long horizons the estimates of all three processes converge to the closed form. The skewness of aggregated TF returns is therefore structural: the filter design drives the profile more than the data-generating process.

The approach is analytic beyond white noise. Under the AR-1 process, every covariance in the pairing expansion of the third moment is geometric, so the aggregated skewness admits a closed form, and the expansion generalizes to the ARFIMA process. We leave these derivations, together with the asymptotics of the horizon profile, to a follow-up paper.

The closed form is also robust to the distribution of the innovations. For independent innovations with zero mean, unit variance, and zero third moment, every surviving pairing in the third moment uses only second moments, so equation \eqref{eq:skew_t} holds without Gaussianity and without a finite-kurtosis condition. A non-zero third moment $\gamma_{3}$ of the innovations adds the single term $T\gamma^{2}_{3}\sum_{j\geq 1}w^{3}_{j}$ to the third moment of the aggregated return, with the filter weights $w_{j}$ of Appendix~\ref{app:skew}, and this term decays in the skewness as $T^{-1/2}$. This robustness supports the application of the formula to the heavy-tailed futures returns of panel (C).

Panel (C) takes the profile to the data. For each of the $84$ contracts, we compute the gross daily returns of the single-filter European system with the span of $100$ days and zero costs, aggregate them into overlapping $T$-day sums, and estimate the standardized sample skewness at each horizon. Costs shift the mean of the aggregated return and leave the skewness almost unchanged, because the skewness is invariant to the mean and the cost term is small relative to the daily return: in simulation, a cost rate of $100$bp per unit of volatility-normalized turnover changes the white-noise skewness by less than $0.002$ at every horizon. The cross-sectional median tracks the closed form across the full horizon range and attains $2.33$ at the horizon of $55$ days against the closed-form value of $2.35$, with the peak of the closed form near $55$ days. The interquartile range lies above zero at every horizon, so positive aggregated skewness is a property of the cross-section rather than of the median contract. The median sits above the closed form at short horizons, where the heavy tails and the autocorrelations of the data enter, and converges to the closed form at long horizons, in line with panel (B). Non-overlapping $T$-day blocks show the same pattern, with the medians within $0.06$ of the overlapping estimates at every horizon with sufficient data.

The proposition and the panels turn a documented property into a structural one. The positive skewness requires no autocorrelation, no drift, and no alpha, because it holds exactly under white noise where the expected return of the system is zero. The horizon profile explains why daily TF returns appear symmetric while monthly and quarterly returns are right-skewed, and it places the strongest asymmetry at horizons near half the filter span. The positive quarterly skewness of the three systems in Figure~\ref{fig:tf_sg_backtest_paper}, against the negative skewness of the balanced benchmark, is consistent with this profile. The quarterly skewness of the systems in panel (A2i) of Figure~\ref{fig:tf_sg_backtest_paper} sits below the single-contract profile for two reasons that the framework quantifies. The long-short filter shifts the profile toward longer horizons, with the white-noise skewness of the LS(250,20) system at $1.7$ at the quarterly horizon against $2.3$ for the single filter, and the post-2000 sample realizes a quarterly skewness of $0.5$ against $1.8$ over the full history of the same portfolio on our data. Cross-sectional aggregation itself preserves the skewness through the common trend component, while portfolio-level volatility targeting, which the systems of Figure~\ref{fig:tf_sg_backtest_paper} do not apply, would reduce it further.

\section{Conclusions}

We have classified trend-following systems into European, American, and Time Series Momentum designs, and we have developed a unified analytical framework around the European system, which sizes positions with an EWMA filter of volatility-normalized returns. We have derived an exact sample-path identity for the cumulative return and closed-form gross and leading-order net Sharpe ratios under a generic stationary autocorrelation function within the linear-process class.

The closed forms separate the sources of performance. Under white noise with drift, the net Sharpe ratio under the signal-turnover proxy converges to the absolute Sharpe ratio of the instrument at long spans for any proportional cost. Under the AR-1 process, the gross Sharpe ratio is approximately $2\phi\sqrt{a/\spn}$, the break-even cost has the span-invariant limit $c^{*}_{\infty}=\sqrt{\pi/(2a)}\,\phi/(1-\phi)$, and costs determine the viability of a short-memory alpha to first order. Under the ARFIMA process with memory parameter $d$, the net Sharpe ratio attains an interior cost-optimal span, which scales as $c^{1/(2d)}$ in the large-cost limit with the net-to-gross ratio converging to $4d/(1+2d)$.

We have verified the framework by simulation and on market data. Within the full pipeline, the formulas track the simulated gross and net Sharpe ratios within $0.05$. On 84 liquid futures contracts, the sample autocorrelation function and drift reproduce the realized Sharpe ratios of the European system with a pooled correlation of $0.99$ and a slope of $0.96$, and they rank the TSMOM and American systems at design-specific implementation discounts. Trend-following performance is primarily an autocorrelation phenomenon with a drift share that grows with the span. At realized costs of $40$bp to $60$bp per unit of turnover, the proxy break-even cost implied by the observed short-lag autocorrelation lies below the realized cost level, while the long-memory component supports positive net performance at the one-to-three-month spans. We prove that under white noise the skewness of the aggregated return admits the closed form of Proposition~\ref{pr:skew}, positive at every horizon beyond one day and peaking near half the filter span. We show that under general empirical and theoretical data generation processes that the TF skewness is positive, so the defensive right tail of TF returns is structural.

The framework connects trend-following to the theory of stationary processes: a TF system is a spectral instrument, whose filter span selects the bandwidth of a Poisson kernel and whose alpha is the excess spectral mass of the volatility-normalized returns at low frequencies. The cost-optimal span solves a bandwidth-selection problem against the spectral shape and the cost level, with the AR-1 and ARFIMA asymptotics as its two tractable instances. The sign-based designs point to an extension to nonlinear filters, and we develop the diversification benefits of TF portfolios for long-only investors in a companion paper.

\paragraph{Acknowledgments}

The authors are thankful to Yoav Git for useful discussions and constructive suggestions. The authors are also grateful to Rupert Goodwin and members of the London Quant Group (LQG) for useful feedback at the LQG seminar on 13 May 2025. An early draft of this paper was circulated under the title ``Trend-Following Strategies for Tail-Risk Hedging and Alpha Generation''.

\section*{Declarations}

\paragraph{Funding}
No funding was received for conducting this study.

\paragraph{Competing interests}
The authors have no competing interests to declare that are relevant to the content of this article.

\paragraph{Data availability}
The futures dataset used in the empirical analysis is included in the companion repository. All analytical and Monte Carlo results require no data.

\paragraph{Code availability}
The Python package \texttt{trendfollowing}, which reproduces all analytical results and exhibits of this paper, is available at \url{https://github.com/ArturSepp/TrendFollowingSystems}.

\paragraph{Use of generative artificial intelligence}
The authors used generative artificial intelligence tools in the preparation of this article, including the checking and numerical verification of derivations, AI-assisted review reports on the mathematical content, the design and testing of the companion code, and the editing of the text. The authors verified all derivations, numerical results, and text. The authors assume responsibility for all content.
\appendix

\section{American and TSMOM TF Systems}\label{app:tfs}

The American TF system applies the average true range (ATR) of daily prices for the signal buffer, the position size, and the stop-loss levels.

\begin{definition}[Average True Range (ATR)]
	The daily true range is computed by:
	\begin{equation} \label{eq:int4} 
		\begin{split}
			tr_{t} & = \max\left\{\left|hl_{t} \right|, \left|hc_{t} \right| , \left|lc_{t} \right|  \right\}, \\
			hl_{t}& = s^{high}_{t} - s^{low}_{t}, \ hc_{t} = s^{high}_{t} - s^{close}_{t-1}, \ lc_{t} = s^{low}_{t} - s^{close}_{t-1},
		\end{split}
	\end{equation}
	where $s^{high}_{t}, s^{low}_{t}, s^{close}_{t}$ are daily high, low, and close prices at the $t$-th day.
	
	ATR over a period of $N$ days is computed as the average of daily true ranges by:
	\begin{equation} \label{eq:int5}
		ATR_{t} =  \frac{1}{N}\sum^{N-1}_{n=0} tr_{t-n}.
	\end{equation}
\end{definition}

ATR is a dimensional, price-measured variable, similar to the price volatility in equation \eqref{eq:pricevol}. To construct an ATR-based estimator analogous to the EWMA returns-volatility estimator in equation \eqref{eq:ewmavol}, we introduce the relative true range and its EWMA filter.

\begin{definition}[Relative True Range (RTR)]
	Daily RTR is computed by:
	\begin{equation} \label{eq:int6} 
		rtr_{t}  = \frac{tr_{t}}{s^{close}_{t-1}}.
	\end{equation}
	
	EWMA of RTR is denoted by ERTR and computed using EWMA filter:
	\begin{equation} \label{eq:int8}
		ERTR_{t} =   \mathcal{L}^{(\nu_{etr})}\left(\frac{tr_{t}}{s^{close}_{t-1}}\right),
	\end{equation}
	where $\nu_{etr}$ is the smoothing parameter. The ERTR provides the volatility input of the American TF system, as the EWMA volatility of equation \eqref{eq:ewmavol} serves the European system.
	
\end{definition}

\subsection{American TF System}

For the American TF system, we work with the sequence of daily prices $\{s_{t}\}$ as follows.

\begin{definition}[American TF System]\label{as:am} ~
	
	\begin{itemize}
		
		\item The long (or slow) filter $\overline{s}^{(\nu_{1})}$ and the short (or fast) filter $\overline{s}^{(\nu_{2})}$ are computed using the EWMA filter in equation \eqref{eq:ewma1} applied to the continuous prices of futures contracts $s_{t}$, respectively:
		\begin{equation} \label{eq:as1} 
			\overline{s}^{(\nu_{1})} = \mathcal{L}^{(\nu_{1})} (s_{t}), \ \overline{s}^{(\nu_{2})} = \mathcal{L}^{(\nu_{2})} (s_{t}),
		\end{equation}
		where $\nu_{1} > \nu_{2}$.

		\item If the current position is zero, the long position is initiated when the fast filter exceeds the slow filter plus the signal buffer as follows:
		\begin{equation} \label{eq:as3} 
			\overline{s}^{(\nu_{2})} > \overline{s}^{(\nu_{1})} + \omega \cdot ATR_{t},
		\end{equation}
		where $\omega$ is the entry-point width parameter. The position size $p^{long}$ and the stop loss $sl^{long}_{t}$ are initiated respectively by:
		\begin{equation} \label{eq:as3a} 
			p^{long} = R \frac{s_{t}}{ATR_{t}}, \  sl^{long}_{t} = s_{t} - p \cdot ATR_{t},
		\end{equation}
		where $R$ is the risk multiple and $p$ is the stop-loss width parameter.

		The short position is initiated when the fast filter drops below the slow filter minus the buffer:
		\begin{equation} \label{eq:as4} 
			\overline{s}^{(\nu_{2})} < \overline{s}^{(\nu_{1})} - \omega \cdot ATR_{t},
		\end{equation}
		where the entry-point scale $\omega$ matches the long side in equation \eqref{eq:as3} for robustness. The position size $p^{short}$ and the stop loss are initiated by:
		\begin{equation} \label{eq:as4a} 
			p^{short} =  - R \frac{s_{t}}{ATR_{t}}, \  sl^{short}_{t} = s_{t} + p \cdot ATR_{t},
		\end{equation}
		with the risk multiple $R$ and the stop-loss width $p$ of equation \eqref{eq:as3a}.
		
		\item If the current position is non-zero, the long position is closed when the stop-loss is breached and the signal in equation \eqref{eq:as3} is off\footnote{A stop breach while the signal of equation \eqref{eq:as3} remains on would re-create the position on the next trading day, so the exit requires both the stop breach and the signal turning off.}:
		\begin{equation} \label{eq:as5} 
			s_{t} < sl^{long}_{t-1}  \ \& \ \overline{s}^{(\nu_{2})}  \leq \overline{s}^{(\nu_{1})} + \omega \cdot ATR_{t},
		\end{equation}
		otherwise the stop-loss is updated as follows:
		\begin{equation} \label{eq:as5a} 
			sl^{long}_{t} = \max\left\{ sl^{long}_{t-1},  s_{t} - p \cdot ATR_{t} \right\}.
		\end{equation}

		The short position is closed if the stop-loss is breached and the condition in equation \eqref{eq:as4} is off:
		\begin{equation} \label{eq:as6} 
			s_{t} > sl^{short}_{t-1}  \ \& \ \overline{s}^{(\nu_{2})}  \geq \overline{s}^{(\nu_{1})} - \omega \cdot ATR_{t},
		\end{equation}
		otherwise the stop-loss is updated by the analogous minimum:
		\begin{equation} \label{eq:as6a} 
			sl^{short}_{t} = \min\left\{ sl^{short}_{t-1},  s_{t} + p \cdot ATR_{t} \right\}.
		\end{equation}
		
	\end{itemize}
	
\end{definition}

The risk multiple $R$ sets the notional risk budget per trade: the initial stop distance $p \cdot ATR_{t}/s_{t}$ times the weight $p^{long}$ equals $R\,p$ of the notional.

The American system differs from the European system in two ways. The position size is fixed at trade inception using the current ATR, and the exit is a trailing stop-loss rather than a signal reversal, so a continuing trend raises the stop and defers the exit. The position quantity $p^{long}$ in equation \eqref{eq:as3a} is a dimensionless notional weight, so we state the per-trade P\&L per unit of notional capital. The long position exits at the first day $\tau^{long}$ on which both conditions of equation \eqref{eq:as5} hold, and the short position at the first day $\tau^{short}$ on which both conditions of equation \eqref{eq:as6} hold, so the P\&L of the trade is exact:
\begin{equation} \label{eq:as7} 
	\text{P\&L}^{long} = p^{long} \left(\frac{s_{\tau^{long}}}{s_{0}} - 1  \right), \qquad \text{P\&L}^{short} = p^{short} \left(\frac{s_{\tau^{short}}}{s_{0}} - 1 \right),
\end{equation}
where $s_{0}$ is the entry price at the day $t_{0}$. The recursion in equation \eqref{eq:as5a} unwinds over the life of the trade to $sl^{long}_{t} = \max_{t_{0} \leq u \leq t}\left\{ s_{u} - p \cdot ATR_{u} \right\}$, so the exit price is close to $s_{max} - p \cdot ATR$ for the long position and to $s_{min} + p \cdot ATR$ for the short position, with the maximum price $s_{max}$ and the minimum price $s_{min}$ over the life of the trade. Both expressions are heuristics, and they require three conditions: the entry signal is off already at the first breach of the stop, by equation \eqref{eq:as3} for the long position and by equation \eqref{eq:as4} for the short position, the ATR is approximately constant over the life of the trade, and the price gaps and the execution slippage are ignored. The backtest applies the notional weights and accrues the P\&L through the weighted returns, with the filters, the stops, and the ATR of the day $t$ determining the position that earns the return of the day $t+1$.

\subsection{Time Series Momentum (TSMOM) System}

\cite{Moskowitz2012}, \cite{Hurst2013}, and \cite{Baltas2015} construct the TSMOM signal as the average sign of rolling 12 monthly returns:
\begin{equation} \label{eq:tsmom1} 
	w_{t} = \frac{1}{M} \sum^{M}_{m=1} \sgn\left(\frac{s_{t-m+1}}{s_{t-m-11}}-1\right) \frac{\sigma_{\mathrm{target}}}{\sigma^{an}_{t-1}},
\end{equation}
where $\sgn(x)$ is the sign function, $\{s_{t}\}$ is the set of month-end prices, $\sigma_{\mathrm{target}}$ is the annualized volatility target, and $\sigma^{an}_{t}$ is the annualized volatility estimate. \cite{Dudler2015} extend this approach to averaging the signs of volatility-normalized returns.

We generalize this approach as follows. We fix a period length of $L$ days and define the rebalancing grid $\{t'\}$ as the multiples of $L$, $t' \equiv 0 \ \text{(mod L)}$. We set the lookback to $M$ periods, so the signal aggregates $ML$ daily observations.  We replace the sign of the aggregated return with the sum of the signs of the period returns, which yields the exact unit-variance normalization below.

\begin{definition}[TSMOM] The signal is the normalized sum of the signs of daily volatility-normalized returns over the lookback of $M$ periods with $L$ days each:
	\begin{equation} \label{eq:tsmom2} 
		w_{t'} = \frac{\sigma_{\mathrm{target}}}{\sigma^{an}_{t'-1}}\, S_{t'} , \quad
		S_{t'}  = \frac{1}{\sqrt{ML}} \sum^{M}_{m=1} b_{t'-(m-1)L}, \quad b_{t''} = \sum^{t''}_{t=t''-L+1} \sgn\left(\frac{r_{t}}{\sigma^{an}_{t-1}}\right),
	\end{equation}
	where $\sigma^{an}_{t}$ is the annualized volatility of daily returns. We extrapolate $w_{t'}$ to the daily grid $\{t\}$ by forward fill. The weight $w_{t'}$ uses returns through day $t'$ and applies from the return of day $t'+1$ onward, so the forward fill introduces no same-day look-ahead.
\end{definition}

In equation \eqref{eq:tsmom2}, the double sum contains $ML$ daily signs. Under serially independent returns whose distribution is continuous at zero with zero median, the signs are independent Rademacher variables with unit variance, so the sum has variance $ML$ and the normalized signal $S_{t'}$ is a $z$-score. An atom at zero would lower the variance of the sign. The annualized volatility of the system return for each instrument is therefore approximately $\sigma_{\mathrm{target}}$ independently of $M$ and $L$, and the sign function removes the volatility estimator from the signal because $\sgn(r_{t}/\sigma^{an}_{t-1})=\sgn(r_{t})$. The double sum reindexes to a single sum of $ML$ daily signs over disjoint blocks, so the signal value depends on the meta parameters $(L, M)$ only through the product $ML$, while $L$ sets the rebalancing frequency.

\section{Proofs}\label{app:proofs}

\subsection{Proof of Proposition \ref{prop1}}\label{sec:prop1}

We consider a sequence of serially independent random variables $\{y_t\}$ such that $\mathbb{E}[y_t]=\mu_{0}$ and $\Var[y_t]=\vartheta_{0}$. Using the EWMA filter in equation \eqref{eq:ewma1}, we obtain for the expected value
\begin{equation} \label{eq:e3a} 
	\mathbb{E}\left[ \mathcal{L}^{(\nu)} (y_{n}) \right]   =  \left(1-\nu \right)\sum^{\infty}_{m=0} \nu^{m} \mathbb{E}\left[y_{n-m}\right] =\mu_{0}
\end{equation}
and for the variance:
\begin{equation} \label{eq:e3c} 
	\Var\left[ \mathcal{L}^{(\nu)} (y_{n}) \right]   = \left( 1-\nu\right)^{2} \sum^{\infty}_{m=0} \nu^{2m} \Var\left[y_{n-m}\right] =   \frac{\left(1-\nu \right)^{2} }{ 1-\nu^{2}} \vartheta_{0}=   \frac{ 1-\nu   }{ 1+\nu} \vartheta_{0}.
\end{equation}

Thus, for the variance-preserving filter in equation \eqref{eq:ewma2}, $\Var[\tilde{\mathcal{L}}^{(\nu)} (y_{t})] = \vartheta_0$.

\subsection{Proof of Proposition \ref{prop2}}\label{sec:prop2}

Assuming a sequence of serially independent random variables $\{y_t\}$ as in Appendix \ref{sec:prop1}, we compute the variance of the long-short EWMA filter in equation \eqref{eq:ewma3} as follows:
\begin{equation} \label{eq:ls2} 
	\begin{split}
		& \Var\left[ \widetilde{\mathcal{LS}}^{(\nu_{1}, \nu_{2})}(y_{n})  \right] \equiv \Var\left[ \widetilde{l}_{1} \widetilde{\mathcal{L}}^{(\nu_{1})} (y_{n}) - \widetilde{l}_{2}\widetilde{\mathcal{L}}^{(\nu_{2})} (y_{n}) \right]\\
		&   = \widetilde{l}^{2}_{1}\vartheta_{0}+  \widetilde{l}^{2}_{2}\vartheta_{0} - 2  \widetilde{l}_{1}\widetilde{l}_{2}\sqrt{\frac{ 1+\nu_{1}   }{ 1-\nu_{1}}} \sqrt{\frac{ 1+\nu_{2}   }{ 1-\nu_{2}}} \Cov\left[ \mathcal{L}^{(\nu_{1})} (y_{n}),\mathcal{L}^{(\nu_{2})} (y_{n}) \right]\\
		&   = \widetilde{l}^{2}_{1}\vartheta_{0}+  \widetilde{l}^{2}_{2}\vartheta_{0} - 2  \widetilde{l}_{1}\widetilde{l}_{2}\sqrt{ 1-\nu^{2}_{1}}\sqrt{1-\nu^{2}_{2}} \frac{1}{1-\nu_{1}\nu_{2}}\vartheta_{0},
	\end{split}
\end{equation}
where we apply the covariance of the two raw filters under serial independence:
\begin{equation} \label{eq:ls3} 
	\begin{split}
		&\Cov\left[ \mathcal{L}^{(\nu_{1})} (y_{n}),\mathcal{L}^{(\nu_{2})} (y_{n}) \right]   = \left( 1-\nu_{1}\right) \left( 1-\nu_{2}\right) \sum^{\infty}_{m=0} \nu^{m}_{1} \nu^{m}_{2} \Var\left[y_{n-m}\right] \\
		& \qquad =\left( 1-\nu_{1}\right) \left( 1-\nu_{2}\right) \frac{1}{1-\nu_{1}\nu_{2}}\vartheta_{0}.
	\end{split}
\end{equation}

Substituting the loadings in equation \eqref{eq:ewma3a} with the variable $q$ into equation \eqref{eq:ls2} and setting the variance to $\vartheta_{0}$, we obtain the equation for $q$:
\begin{equation} \label{eq:ls5} 
	\left( \frac{1}{ 1-\nu^{2}_{1}}  + \frac{1}{ 1-\nu^{2}_{2} } - 2 \frac{1}{1-\nu_{1}\nu_{2}} \right)q^{2} = 1,
\end{equation}
with the solution given in equation \eqref{eq:ewma3a}.

\subsection{Proof of Proposition~\ref{pr:cumreturn}}\label{sec:cumreturn}

Using EWMA filter in equation \eqref{eq:ewma1}, we present $\mathcal{E}_{t}$ in equation \eqref{eq:e3} as follows:
\begin{equation} \label{eq:e3b} 
	\begin{split}
		\mathcal{E}_{t} & =      \nu z_{t}    \left(1-\nu \right)  \left(  z_{t-1} + \nu^{1} z_{t-2} + \nu^{2} z_{t-3}+... \right)   =  \left(1-\nu \right)  \sum^{\infty}_{m=0} \nu^{m} z_{t} z_{t-m} - \left(1-\nu \right)  z^{2}_{t},
	\end{split}
\end{equation}
where we add and subtract the $m=0$ term $\left(1-\nu\right)z^{2}_{t}$.

Applying the above result to the cumulative return $E(T)$ in equation \eqref{eq:e4a}
\begin{equation} \label{eq:e5} 
	E(T) = \sum^{T}_{t=1}  \mathcal{E}_{t}   =   \left(1-\nu \right)   \sum^{T}_{t=1}\sum^{\infty}_{m=0} \nu^{m} z_{t} z_{t-m} - \left(1-\nu \right)  \sum^{T}_{t=1} z^{2}_{t}.   
\end{equation}
The finite second moments make the first term absolutely convergent, because $\mathbb{E}\left|z_{t} z_{t-m}\right| \leq \vartheta + \mu^{2}$ by the Cauchy--Schwarz inequality and the weights $\nu^{m}$ are summable, so the convergence covers the ARFIMA processes with $d>0$, whose plain autocorrelation sum diverges, and justifies exchanging the two sums in equation \eqref{eq:e5}. Next, we apply the adjustment by the average $\overline{z}_{T}$ as follows:
\begin{equation} \label{eq:e7} 
	\begin{split}
		& \sum^{T}_{t=1} z_{t} z_{t-m}  = \sum^{T}_{t=1} \left( z_{t} -\overline{z}_{T}+\overline{z}_{T}\right) \left(z_{t-m} -\overline{z}_{T}+\overline{z}_{T}\right)\\
		& = \sum^{T}_{t=1} \left( z_{t} -\overline{z}_{T} \right) \left(z_{t-m} -\overline{z}_{T} \right)  + \overline{z}_{T}\Delta_{m} +\sum^{T}_{t=1} (\overline{z}_{T})^{2},
	\end{split}
\end{equation}
where the cross term $\sum_{t}\left(z_{t}-\overline{z}_{T}\right)\overline{z}_{T}$ vanishes by the definition of the sample mean, while the cross term $\sum_{t}\overline{z}_{T}\left(z_{t-m}-\overline{z}_{T}\right)$ leaves the boundary difference $\Delta_{m}=\sum^{T}_{t=1} z_{t-m} - T\overline{z}_{T}$ of the window shifted by $m$ lags, with $\Delta_{0}=0$.

For the first term, the geometric weights sum to one, so the weighted sum of squared means equals $\sum^{T}_{t=1} (\overline{z}_{T})^{2}$. The mean-adjustment for the second term in equation \eqref{eq:e5} is obtained using equation \eqref{eq:e7} with $m=0$, and its sum of squared means carries the weight $\left(1-\nu\right) \sum^{T}_{t=1} (\overline{z}_{T})^{2}$. Substituting equation \eqref{eq:e7} into equation \eqref{eq:e5}, we obtain:
\begin{equation} \label{eq:e10} 
	\begin{split}
		E(T) & =   \left(1-\nu \right)  \sum^{\infty}_{m=0} \nu^{m} \sum^{T}_{t=1}  \left( z_{t} -\overline{z}_{T} \right) \left(z_{t-m} -\overline{z}_{T} \right) \\
		& \qquad -  \left(1-\nu \right) \sum^{T}_{t=1} \left(z_{t}-\overline{z}_{T} \right)^{2} +  \left( 1-(1-\nu) \right)T(\overline{z}_{T})^{2} + \left(1-\nu \right) \sum^{\infty}_{m=1} \nu^{m}\, \overline{z}_{T}\Delta_{m}.
	\end{split}
\end{equation}

Rearranging the terms and collecting the boundary sum into $R_{T}$ of equation \eqref{eq:cumreturn_rt}, we obtain equation \eqref{eq:cumreturn}.

\subsection{Proof of Proposition \ref{pr:turnover1}}\label{sec:ta1}

Using equation \eqref{eq:eu3} for the weight, we consider the daily turnover as follows
\begin{equation} \label{eq:tur3}  
	\begin{split}
		U_{t} &  \equiv \sqrt{a}\sigma_{t}\left| w_{t} -  w_{t-1}\right| \\
		& = \sigma_{\mathrm{target}}\sigma_{t}\left|  \frac{1}{\sigma_{t}}\tilde{\mathcal{L}}^{(\nu)} (z_{t}) -  \frac{1}{\sigma_{t-1}}\tilde{\mathcal{L}}^{(\nu)} (z_{t-1})   \right| \\
		& \leq  \sigma_{\mathrm{target}}\left|  \tilde{\mathcal{L}}^{(\nu)} (z_{t}) - \tilde{\mathcal{L}}^{(\nu)} (z_{t-1})   \right| +  \sigma_{\mathrm{target}}\left| \tilde{\mathcal{L}}^{(\nu)} (z_{t-1}) \left(1 -  \frac{\sigma_{t}}{\sigma_{t-1}} \right)  \right|,
	\end{split}
\end{equation}
by adding $\pm \tilde{\mathcal{L}}^{(\nu)} (z_{t-1})$ in the second line: the first term is the signal-increment component and the second term is the volatility-update component.

For the first term, we obtain:
\begin{equation} \label{eq:tur4}  
	U^{(signal)}_{t}  = \sigma_{\mathrm{target}}\left|  \tilde{\mathcal{L}}^{(\nu)} (z_{t}) -  \tilde{\mathcal{L}}^{(\nu)} (z_{t-1})   \right| = \sigma_{\mathrm{target}}  \left|  (1-\nu)\sqrt{ \frac{ 1+\nu}{ 1-\nu} }z_{t}  + (\nu -1)\tilde{\mathcal{L}}^{(\nu)} (z_{t-1})  \right|. 
\end{equation}

Under the assumption that volatility-normalized returns $z_{t}$ are serially independent with variance $\vartheta_{0}$, we compute the variance of the signed weight increment $\Delta^{(signal)}_{t}$, with $U^{(signal)}_{t} = |\Delta^{(signal)}_{t}|$, as follows:
\begin{equation} \label{eq:tur5}  
	\begin{split}
		\Var \left[ \Delta^{(signal)}_{t}  \right] & =\sigma^{2}_{\mathrm{target}}  \left(  (1-\nu)^{2}\frac{ 1+\nu}{ 1-\nu} \vartheta_{0}  + (1-\nu)^{2}  \vartheta_{0} \right) =  2 \sigma^{2}_{\mathrm{target}}  \left(  1-\nu \right)   \vartheta_{0},
	\end{split}
\end{equation}
where we use equation \eqref{eq:e3c} for the variance of $\tilde{\mathcal{L}}(z_{t})$.

The increment $\Delta^{(signal)}_{t}$ is Gaussian with zero mean, so the Gaussian absolute-moment formula gives:
\begin{equation} \label{eq:tur6}  
	\mathbb{E}\left[ U^{(signal)}_{t}  \right] = \frac{2\sigma_{\mathrm{target}}}{\sqrt{\pi}}  \sqrt{ (1-\nu) \vartheta_{0} }  =  \frac{2\sigma_{\mathrm{target}}}{\sqrt{\pi}}  \sqrt{ (1-\nu) }, 
\end{equation}
where we apply the fact that the variance $\vartheta_{0}$ of normalized returns is $1$.

The leading-order approximation retains the first term in equation \eqref{eq:tur3} and drops the second term, which is proportional to the absolute change in the volatility estimate $\left|\sigma_{t}- \sigma_{t-1}\right|$. The omitted volatility-update term is not negligible in general, and Section \ref{sc:eft} quantifies its contribution within the full pipeline.

\subsection{Proof of Proposition \ref{pr:turnover2}}\label{sec:ta2}

The long-short signal in equation \eqref{eq:ewma3} equals $S_{n} = l_{1}\mathcal{L}^{(\nu_{1})}(z_{n}) - l_{2}\mathcal{L}^{(\nu_{2})}(z_{n})$ in terms of the raw filters, with $l_{1}(1-\nu_{1}) = l_{2}(1-\nu_{2}) = q$ by equation \eqref{eq:ewma3a}. The recursion $\mathcal{L}^{(\nu)}(z_{n}) = (1-\nu)z_{n} + \nu \mathcal{L}^{(\nu)}(z_{n-1})$ gives the filter increment $(1-\nu)\left(z_{n} - \mathcal{L}^{(\nu)}(z_{n-1})\right)$, so the terms in $z_{n}$ cancel in the signal increment:
\begin{equation} \label{eq:tur_ls1}
	S_{n} - S_{n-1} = q\left( \mathcal{L}^{(\nu_{2})}(z_{n-1}) - \mathcal{L}^{(\nu_{1})}(z_{n-1}) \right).
\end{equation}
Under serially independent $z_{t}$ with unit variance, the variance of the increment follows from the filter variances and the cross-covariance in equation \eqref{eq:ls3}:
\begin{equation} \label{eq:tur_ls2}
	\Var\left[ S_{n} - S_{n-1} \right] = q^{2}\left( \frac{1-\nu_{1}}{1+\nu_{1}}  +  \frac{1-\nu_{2}}{1+\nu_{2}}  - \frac{2\left( 1-\nu_{1}\right) \left( 1-\nu_{2}\right)}{1-\nu_{1}\nu_{2}}\right) = 2\zeta.
\end{equation}
The increment is Gaussian with zero mean, so $\mathbb{E}\left|S_{n} - S_{n-1}\right| = \sqrt{(2/\pi)\, \Var\left[ S_{n} - S_{n-1} \right]} = 2\sqrt{\zeta/\pi}$. The assumption of equation \eqref{eq:tur6} then gives equation \eqref{eq:tur_ls}.

\subsection{Proof of Lemma~\ref{lem:isserlis}}\label{sec:isserlis}

We write $X=\mu_{X}+x$ and $Y=\mu_{Y}+y$ with the centered variables $x=\sum_{s}a_{s}\epsilon_{t-s}$ and $y=\sum_{s}b_{s}\epsilon_{t-s}$. The third-order moments $\mathbb{E}[x^{2}y]$ and $\mathbb{E}[xy^{2}]$ vanish, because they are proportional to the third moment of the innovations, and the multilinearity of cumulants over the independent innovations gives $\mathrm{cum}\left(x,x,y,y\right)=\kappa\sum_{s}a^{2}_{s}b^{2}_{s}$. The moment identity $\mathbb{E}[x^{2}y^{2}]=\sigma^{2}_{X}\sigma^{2}_{Y}+2\sigma^{2}_{XY}+\mathrm{cum}\left(x,x,y,y\right)$ expresses the fourth moment through cumulants of order two and four. Expanding $\mathbb{E}[(XY)^{2}]$ and subtracting $\left(\mathbb{E}[XY]\right)^{2}=\left(\sigma_{XY}+\mu_{X}\mu_{Y}\right)^{2}$ yields equation \eqref{eq:sr_isserlis}. For $\kappa=0$, the result is the Isserlis theorem (\cite{Isserlis1918}).

\subsection{Proof of Proposition~\ref{prop:dailyvar}}\label{sec:dailyvar}
We write $\Lambda_{t-1}=\mathcal{L}^{(\nu)}(z_{t-1})$ for the lagged filter. Equation \eqref{eq:e2} then reads $f_{t}=\left(l\sigma_{\mathrm{target}}/\sqrt{a}\right) z_{t}\Lambda_{t-1}$. The filter weights sum to one, so $\mathbb{E}[\Lambda_{t-1}]=\mu$. The covariance with the next return follows by summing the autocovariances:
\begin{equation} \label{eq:sr_cov}
	\Cov\left[ z_{t}, \Lambda_{t-1} \right] = \left(1-\nu\right)\sum^{\infty}_{k=0}\nu^{k}\, \vartheta\rho(k+1) = \vartheta \frac{1-\nu}{\nu}\Psi_{\nu} = \vartheta A_{\nu}.
\end{equation}
For the variance of the filter, we sum over both indices and split the double sum into the diagonal $j=k$ and the two off-diagonal parts with $m=\left|j-k\right|\geq 1$:
\begin{equation} \label{eq:sr_varL}
	\Var\left[ \Lambda_{t-1} \right] = \left(1-\nu\right)^{2} \vartheta \sum^{\infty}_{j,k=0}\nu^{j+k}\rho(\left|j-k\right|) = \left(1-\nu\right)^{2} \vartheta\, \frac{1+2\Psi_{\nu}}{1-\nu^{2}} = \vartheta B_{\nu}.
\end{equation}
The mean of $f_{t}$ follows from $\mathbb{E}\left[z_{t}\Lambda_{t-1}\right]=\Cov\left[z_{t},\Lambda_{t-1}\right]+\mu^{2}$. Under Assumption \ref{as:lp}, the centered return is $z_{t}-\mu=\sqrt{\vartheta}\sum_{s\geq 0}\psi_{s}\epsilon_{t-s}$. The filter applies the weights $(1-\nu)\nu^{m}$ to the lagged returns, so the centered filter is $\Lambda_{t-1}-\mu=\sqrt{\vartheta}\sum_{u\geq 0}g_{u}\epsilon_{t-1-u}$ with $g_{u}$ of equation \eqref{eq:sr_k}. The innovation $\epsilon_{t-s}$ enters the return with the loading $\sqrt{\vartheta}\psi_{s}$ and enters the filter with the loading $\sqrt{\vartheta}g_{s-1}$ for $s\geq 1$. The fourth-moment term of Lemma \ref{lem:isserlis} equals $\kappa\vartheta^{2}\sum_{s\geq 1}\psi^{2}_{s}g^{2}_{s-1}=\kappa\vartheta^{2}K_{\nu}$. The variance follows by applying Lemma \ref{lem:isserlis} with $\sigma^{2}_{X}=\vartheta$, $\sigma^{2}_{Y}=\vartheta B_{\nu}$, $\sigma_{XY}=\vartheta A_{\nu}$, $\mu_{X}=\mu_{Y}=\mu$, the fourth-moment term $\kappa\vartheta^{2}K_{\nu}$, and collecting the terms.

\section{Asymptotics of the Cost-Optimal Span}\label{app:asym}

This appendix formalizes the asymptotic statements of Section \ref{sec:sp} on the cost-optimal span. Throughout, we consider the zero-drift case with $\kappa=0$ and the single variance-preserving filter, we write $\eta = 1-\nu = 2/(\spn+1)$, we set $c_{0}=2ac/\sqrt{\pi}$, and $SR^{net}$ denotes the leading-order form of equation \eqref{eq:sr_net}, which charges the cost against the signal-turnover proxy of Proposition \ref{pr:turnover1}. The break-even and optimal-span results therefore describe the independence-based proxy cost model, and Section \ref{ssec:sr_mc} quantifies the pipeline turnover in excess of it. The gross Sharpe ratio and the cost drag share the factor $1/\sqrt{B_{\nu}+A^{2}_{\nu}}$, so the net Sharpe ratio in equation \eqref{eq:sr_net} takes the form:
\begin{equation}\label{eq:asym1}
	SR^{net}(\eta) = \frac{\sqrt{\eta}\, M(\eta)}{\sqrt{W(\eta)}}, \qquad
	M(\eta) = \frac{\sqrt{a}\,A_{\nu}}{\eta} - \frac{c_{0}}{\sqrt{2-\eta}}, \qquad
	W(\eta) = \frac{B_{\nu} + A^{2}_{\nu}}{\eta}.
\end{equation}

\begin{proposition}[Break-even cost and optimal span under AR-1]\label{pr:asym_ar1}
	Let $\{z_{t}\}$ follow the zero-drift AR-1 process with $\phi \in (0,1)$.
	
	(i) The net Sharpe ratio vanishes at the span with $\eta = 2/(\spn+1)$ if and only if $c = c^{*}(\eta)$ with:
	\begin{equation}\label{eq:asym2}
		c^{*}(\eta) = \sqrt{\frac{\pi}{2a}}\, \frac{\phi\sqrt{1-\eta/2}}{1-\phi+\eta\phi}.
	\end{equation}
	The break-even cost strictly increases in the span with the supremum $c^{*}_{\infty}$ in equation \eqref{eq:sr_net_ar1}, and the net Sharpe ratio is positive at some span if and only if $c < c^{*}_{\infty}$.
	
	(ii) Let $c < c^{*}_{\infty}$ and $v = c/c^{*}_{\infty}$. For $v$ sufficiently close to one, an interior maximiser $\spn^{*}(c)$ of the net Sharpe ratio exists, and any such maximiser satisfies:
	\begin{equation}\label{eq:asym3}
		\lim_{v \to 1} \left( 1 - v \right) \spn^{*}(c) = \frac{6\phi}{1-\phi} + \frac{3}{2},
	\end{equation}
	and equation \eqref{eq:sr_net_span} restates this limit at finite $v$.
\end{proposition}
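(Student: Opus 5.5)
Throughout, I work in the variable $\eta=1-\nu=2/(\spn+1)$ and use the representation \eqref{eq:asym1}. The first step is to specialize the AR-1 loadings of \eqref{eq:sr_ar1} to $\eta$. With $\nu=1-\eta$ we have $1-\nu\phi=1-\phi+\eta\phi$, hence
\[
\frac{\sqrt a\,A_\nu}{\eta}=\frac{\sqrt a\,\phi}{1-\phi+\eta\phi}.
\]
This gives
\[
M(\eta)=\frac{\sqrt a\,\phi}{1-\phi+\eta\phi}-\frac{c_0}{\sqrt{2-\eta}},
\]
and $W(\eta)=(B_\nu+A_\nu^2)/\eta>0$. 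Since $\sqrt{\eta}/\sqrt{W}>0$, the sign of $SR^{net}$ equals the sign of $M$.

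\textbf{Part (i).} I solve $M(\eta)=0$ for $c_0$ and convert back with $c=\sqrt{\pi}\,c_0/(2a)$. This yields \eqref{eq:asym2} directly. For monotonicity, note that as $\eta$ decreases (span increases) the factor $\sqrt{1-\eta/2}$ increases and the denominator $1-\phi+\eta\phi$ decreases, because $\phi>0$. Hence $c^*$ strictly increases in the span, and its supremum, attained as $\eta\to0$, is $c^*_\infty$ in \eqref{eq:sr_net_ar1}. Positivity at some span is equivalent to $c<c^*(\eta)$ for some admissible $\eta$, which holds if and only if $c<\sup_\eta c^*(\eta)=c^*_\infty$; the supremum is not attained.

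\textbf{Part (ii).} Write $v=c/c^*_\infty$, $\varepsilon=1-v$ and $\beta=\phi/(1-\phi)$. From $c_0^{*}=\sqrt{2a}\,\phi/(1-\phi)$ I factor
\[
M(\eta)=\frac{\sqrt a\,\phi}{1-\phi}\Big[(1+\beta\eta)^{-1}-v(1-\eta/2)^{-1/2}\Big].
\]
Expanding in $\eta$ gives
\[
M(\eta)=\frac{\sqrt a\,\phi}{1-\phi}\Big[\varepsilon-\gamma\eta+O(\eta^2)+O(\varepsilon\eta)\Big],\qquad \gamma=\beta+\tfrac14 .
\]
Also $W(\eta)=W_0\,(1+O(\eta))$ with $W_0=(1+\phi)/(2(1-\phi))$. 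Consequently, up to a positive constant,
\[
SR^{net}=\sqrt{\eta}\,\big(\varepsilon-\gamma\eta\big)\big(1+O(\eta)+O(\varepsilon)\big)+O(\eta^{5/2}).
\]

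I then rescale $\eta=\varepsilon s$ and show that $\varepsilon^{-3/2}SR^{net}(\varepsilon s)\to C\sqrt{s}\,(1-\gamma s)$ uniformly on compact sets of $s\in(0,\infty)$. The limit has the unique maximiser $s^*=1/(3\gamma)$. Since $\spn=2/\eta-1$, this gives
\[
(1-v)\,\spn^*\to \frac{2}{s^*}=6\gamma=\frac{6\phi}{1-\phi}+\frac32,
\]
which is \eqref{eq:asym3}.

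\textbf{Main obstacle.} The hard step is localization: showing that every maximiser lies in the window $\eta\in[\delta\varepsilon,\delta^{-1}\varepsilon]$, so that the uniform convergence on compacts applies. I would establish this through three estimates.
\begin{itemize}
\item For $\eta\ge \delta^{-1}\varepsilon$ with $\eta$ small, the bracket in $M$ is at most $\varepsilon-\gamma\eta/2<0$.
\item For $\eta$ bounded away from zero, $c^*(\eta)<c$ once $v$ is close to one, by the strict monotonicity in part (i), so $M<0$ there.
\item For $\eta\le\delta\varepsilon$, the objective is at most $C\sqrt{\delta}\,\varepsilon^{3/2}$. This is below the value $c_1\varepsilon^{3/2}$ attained at $s^*$ once $\delta$ is small.
\end{itemize}
Existence of an interior maximiser follows from the same estimates. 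The objective is positive at $\eta=\varepsilon s^*$, tends to $0$ as $\eta\to0$, and is negative for $\eta$ outside $O(\varepsilon)$. By continuity on the compact window, a maximum is attained in its interior. Any maximiser of the rescaled objective then converges to $s^*$ by the standard argmax-continuity argument.
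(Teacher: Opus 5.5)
Your proposal is correct. Part (i) is essentially the paper's argument: the paper differentiates $\log c^{*}$, while you note that $\sqrt{1-\eta/2}$ rises and $1-\phi+\eta\phi$ falls as $\eta$ decreases.

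Part (ii) takes a genuinely different route. The paper localizes only from above, using $\eta^{*}<\eta_{0}=O(1-v)$ from part (i), and then uses the first-order condition \eqref{eq:asym4}. Because $W'/(2W)$ is bounded, that condition gives $M(\eta^{*})=2\eta^{*}|M'(\eta^{*})|(1+O(\eta^{*}))$. Combined with the mean value theorem, $M(\eta^{*})=M(0)-|M'(\xi)|\eta^{*}$, this yields $\eta^{*}=M(0)/(3|M'(0)|)\,(1+o(1))$. You instead rescale $\eta=\varepsilon s$, pass to the limit profile $\sqrt{s}(1-\gamma s)$, and close with a two-sided localization and argmax continuity. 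This is the device the paper itself uses for Proposition \ref{pr:asym_arfima}, with the profile $H$.

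The first-order-condition route has three advantages:
\begin{itemize}
\item It is shorter and needs no lower localization, i.e.\ no counterpart of your comparison on $\eta\le\delta\varepsilon$.
\item It applies to every critical point in the region where the net Sharpe ratio is positive, not only to global maximisers.
\item It keeps $|M'(0)|=\sqrt{a}\,\phi\left(\phi/(1-\phi)+v/4\right)/(1-\phi)$ at finite $v$, so it produces the $v$-dependent numerator $6\phi/(1-\phi)+3c/(2c^{*}_{\infty})$ of \eqref{eq:sr_net_span} directly.
\end{itemize}
Your route is derivative-free and robust, but it only delivers the limit $6\gamma$. For the last clause of (ii), you should add the one-line remark that $3v/2$ and $3/2$ differ by $O(1-v)$, so \eqref{eq:sr_net_span} has the same limit.

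One cosmetic point: the $O(\varepsilon\eta)$ term in your bracket contributes an additive $O(\varepsilon\eta^{3/2})$, not a multiplicative factor on $\varepsilon-\gamma\eta$. After rescaling it is $O(\varepsilon^{5/2})$ uniformly on compact sets, so your uniform convergence claim is unaffected.
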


\begin{proof}
	Under the AR-1 process, $\Psi_{\nu} = \nu\phi/(1-\nu\phi)$ gives $A_{\nu} = \eta\phi/(1-\phi+\eta\phi)$, so $M(\eta) = \sqrt{a}\phi/(1-\phi+\eta\phi) - c_{0}/\sqrt{2-\eta}$.
	
	(i) Both $\eta$ and $W(\eta)$ are positive on $(0,1)$, so the net Sharpe ratio vanishes if and only if $M(\eta)=0$, which rearranges to $c = c^{*}(\eta)$ in equation \eqref{eq:asym2}. The logarithmic derivative of $c^{*}$ equals $-1/(2(2-\eta)) - \phi/(1-\phi+\eta\phi) < 0$, so $c^{*}$ strictly decreases in $\eta$ from the supremum $c^{*}(0+) = c^{*}_{\infty}$ to the infimum $c^{*}_{1} \equiv c^{*}(1^{-}) = \phi\sqrt{\pi/(4a)}$, and strictly increases in the span. The sign of $M(\eta)$ equals the sign of $c^{*}(\eta) - c$, so three regimes follow. For $c \geq c^{*}_{\infty}$, the net Sharpe ratio is negative at every span. For $c^{*}_{1} < c < c^{*}_{\infty}$, the strictly decreasing function $M$ has a unique root $\eta_{0} \in (0,1)$, and the net Sharpe ratio is positive exactly on $(0, \eta_{0})$. For $0 \leq c \leq c^{*}_{1}$, we have $c^{*}(\eta) > c$ at every $\eta \in (0,1)$, so the net Sharpe ratio is positive at every span, and at $c = c^{*}_{1}$ the root occurs only at the excluded boundary $\eta = 1$.
	
	(ii) By part (i), an interior root $\eta_{0} \in (0,1)$ exists if and only if $v > c^{*}_{1}/c^{*}_{\infty} = (1-\phi)/\sqrt{2}$, and the net Sharpe ratio is then positive exactly on $(0, \eta_{0})$. In the regime $v \to 1$ of the proposition, this condition holds and the root exists. Expanding equation \eqref{eq:asym2} at $\eta_{0}=0$ gives $1 - v = \eta_{0}\left( \phi/(1-\phi) + 1/4 \right) + O(\eta^{2}_{0})$, so $\eta_{0} = 4(1-\phi)(1-v)/(1+3\phi) + O\left((1-v)^{2}\right)$, and any maximiser $\eta^{*} \in (0, \eta_{0})$ tends to zero as $v \to 1$. The net Sharpe ratio is continuous and positive on $(0,\eta_{0})$ and vanishes at both endpoints, so an interior maximiser exists and satisfies the first-order condition:
	\begin{equation}\label{eq:asym4}
		\frac{1}{2\eta^{*}} + \frac{M'(\eta^{*})}{M(\eta^{*})} - \frac{W'(\eta^{*})}{2W(\eta^{*})} = 0.
	\end{equation}
	The term $W'/(2W)$ is bounded on $[0, 1/2]$ by a constant that depends only on $\phi$, so equation \eqref{eq:asym4} gives $M(\eta^{*}) = 2\eta^{*}|M'(\eta^{*})|\left(1+O(\eta^{*})\right)$. The mean value theorem gives $M(\eta^{*}) = M(0) - |M'(\xi)|\eta^{*}$ with $\xi \in (0, \eta^{*})$, and both $|M'(\xi)|$ and $|M'(\eta^{*})|$ converge to $|M'(0)|$ as $\eta^{*} \to 0$. Combining the two displays yields $\eta^{*} = M(0)/\left(3|M'(0)|\right)\left(1+o(1)\right)$. Substituting $M(0) = \sqrt{a}\phi(1-v)/(1-\phi)$, $|M'(0)| = \sqrt{a}\phi\left(\phi/(1-\phi) + v/4\right)/(1-\phi)$, and $\spn^{*} = 2/\eta^{*} - 1$ gives the limit in equation \eqref{eq:asym3}.
\end{proof}

\begin{proposition}[Cost-optimal span under long memory]\label{pr:asym_arfima}
	Let $\{z_{t}\}$ follow the zero-drift ARFIMA$(0,d,0)$ process with $d \in (0, 1/2)$, and let $C_{\psi} = \Gamma(1-d)\Gamma(2d)/\Gamma(d)$. As $c \to \infty$, any maximiser of the net Sharpe ratio satisfies:
	\begin{equation}\label{eq:asym5}
		\spn^{*}(c) = 2 \left( \sqrt{\frac{2a}{\pi}}\, \frac{(1+2d)\, c}{C_{\psi} (1-2d)} \right)^{1/(2d)} \left(1 + o(1)\right).
	\end{equation}
	At the optimum, the ratio of the cost drag to the gross Sharpe ratio converges to $(1-2d)/(1+2d)$, so the net-to-gross ratio converges to $4d/(1+2d)$.
\end{proposition}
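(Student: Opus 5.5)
We use the factorisation in equation \eqref{eq:asym1}, $SR^{net}(\eta)=\sqrt{\eta}\,M(\eta)/\sqrt{W(\eta)}$, together with the small-$\eta$ asymptotics of the fractional generating function. Since $c\to\infty$ forces the optimum to $\eta\to 0$, it suffices to expand $\Psi_{\nu}$ near $\nu=1$.

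\textbf{Step 1: small-$\eta$ expansion of $\Psi_\nu$.} By equation \eqref{eq:sr_agf_hyp}, $\Phi_{\nu}=F(d,1,1-d;\nu)$. The hypergeometric parameters satisfy $c-a-b=-2d<0$, so the connection formula gives
\[
\Phi_\nu=\frac{\Gamma(1-d)\Gamma(2d)}{\Gamma(d)\Gamma(1)}(1-\nu)^{-2d}\bigl(1+O(\eta)\bigr)+\frac{\Gamma(1-d)\Gamma(-2d)}{\Gamma(1-2d)\Gamma(-d)}+\ldots .
\]
Hence $\Psi_\nu=C_\psi\eta^{-2d}+O(1)$. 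Equivalently, one can use the Tauberian tail $\rho(m)\sim\frac{\Gamma(1-d)}{\Gamma(d)}m^{2d-1}$ and sum against $\nu^m$.

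\textbf{Step 2: expansions of $M$ and $W$.} Substituting into \eqref{eq:sr_ab} gives $A_\nu=\eta\Psi_\nu/(1-\eta)\sim C_\psi\eta^{1-2d}$. Then $M(\eta)=\sqrt{a}\,C_\psi\eta^{-2d}(1+o(1))-c_0/\sqrt2\,(1+O(\eta))$, where the $O(1)$ remainder in $\Psi$ contributes $O(1)$ and is negligible against both terms once $c\to\infty$.

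Next, $B_\nu=\eta(1+2\Psi_\nu)/(2-\eta)\sim C_\psi\eta^{1-2d}$ dominates $A_\nu^2=O(\eta^{2-4d})$, because $1-2d<2-4d$. Therefore $W(\eta)\sim C_\psi\eta^{-2d}$.

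Combining these,
\[
SR^{net}(\eta)\sim \frac{\eta^{1/2+d}}{\sqrt{C_\psi}}\Bigl(\sqrt a\,C_\psi\eta^{-2d}-\tfrac{c_0}{\sqrt2}\Bigr)=\sqrt{aC_\psi}\,\eta^{1/2-d}-\frac{c_0}{\sqrt{2C_\psi}}\,\eta^{1/2+d}.
\]

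\textbf{Step 3: optimise the leading-order objective.} Maximising $G(\eta)=\alpha\eta^{1/2-d}-\beta\eta^{1/2+d}$, with $\alpha=\sqrt{aC_\psi}$ and $\beta=c_0/\sqrt{2C_\psi}$, gives the first-order condition $(1/2-d)\alpha\eta^{-2d}=(1/2+d)\beta$. This yields
\[
\eta^{*}=\left(\frac{(1-2d)\alpha}{(1+2d)\beta}\right)^{1/(2d)}=\left(\frac{(1-2d)C_\psi\sqrt{2a}}{(1+2d)c_0}\right)^{1/(2d)}.
\]
With $c_0=2ac/\sqrt\pi$ the inner ratio becomes $C_\psi(1-2d)\sqrt{\pi/(2a)}/((1+2d)c)$. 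Using $\spn^*=2/\eta^*-1\sim 2/\eta^*$ reproduces \eqref{eq:asym5}.

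At $\eta^{*}$ the cost term divided by the gross term equals $\beta\eta^{2d}/\alpha=(1-2d)/(1+2d)$. The net-to-gross ratio is therefore $1-(1-2d)/(1+2d)=4d/(1+2d)$. The gross Sharpe ratio is the first term up to $1+o(1)$, since $A^2_\nu$ is negligible.

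\textbf{Main obstacle: rigour for every maximiser.} The argument must hold for \emph{any} maximiser, not just the leading-order critical point. I would first show that every maximiser satisfies $\eta^*(c)\to0$. For $\eta$ bounded below, $SR^{net}$ is negative and of order $-c$. In contrast, the choice $\eta=\kappa_0 c^{-1/(2d)}$ with small $\kappa_0$ makes $SR^{net}$ positive.

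I would then rescale $\eta=t\,c^{-1/(2d)}$. After this rescaling, $c^{(1/2-d)/(2d)}SR^{net}$ converges to $\alpha t^{1/2-d}-\beta' t^{1/2+d}$ uniformly on compact subsets of $t\in(0,\infty)$. The limit is strictly concave in the appropriate sense and has a unique maximiser. The remaining task is to exclude escape of $t$ to $0$ or $\infty$; both tails give a strictly smaller value, using monotone bounds from the expansions. The standard argmax continuity argument then transfers the limit maximiser to any maximiser of the finite-$c$ problem. Controlling the $O(1)$ remainder in $\Psi_\nu$ uniformly near $\eta=0$ is the delicate point; the hypergeometric connection formula handles it cleanly.
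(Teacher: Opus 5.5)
Your proposal is correct and follows the paper's proof essentially step for step. Both use the hypergeometric connection formula for $F(d,1,1-d;\nu)$, reduce to the two-power profile $P\eta^{p}-Q\eta^{q}$ with $P=\sqrt{aC_{\psi}}$ and $Q=c_{0}/\sqrt{2C_{\psi}}$, and read off the drag-to-alpha ratio $(1-2d)/(1+2d)$ at its maximiser. The only difference is in the final localization of an arbitrary maximiser, where you rescale $\eta=t\,c^{-1/(2d)}$ and invoke argmax continuity. The paper instead bounds $\eta^{*}$ by positivity and then uses the normalized unimodal profile $H(s)=(qs^{p}-ps^{q})/(q-p)$ with $H(\eta^{*}/\eta_{0})\geq 1-C''\eta_{0}^{\alpha}$ to get $\eta^{*}/\eta_{0}\to 1$.
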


\begin{proof}
	The hypergeometric connection formula applied to equation \eqref{eq:sr_agf_hyp} gives the exact decomposition:
	\begin{equation}\label{eq:asym6}
		\Phi_{\nu} = C_{\psi}\, \eta^{-2d} (1-\eta)^{d} + C_{\gamma}\, F(d, 1, 1+2d; \eta), \qquad
		C_{\gamma} = \frac{\Gamma(1-d)\,\Gamma(-2d)}{\Gamma(1-2d)\,\Gamma(-d)},
	\end{equation}
	because the first hypergeometric factor in the connection formula reduces through $F(1-2d, -d, 1-2d; \eta) = (1-\eta)^{d}$. The second term in equation \eqref{eq:asym6} is bounded on $[0, 1/2]$, so $\Psi_{\nu} = \Phi_{\nu} - 1 = C_{\psi}\eta^{-2d}\left(1 + O(\eta^{\alpha})\right)$ with $\alpha = \min(2d, 1-2d)$. Substituting into $A_{\nu} = \eta\Psi_{\nu}/\nu$ and $B_{\nu} = \eta(1+2\Psi_{\nu})/(2-\eta)$, and using $A^{2}_{\nu}/B_{\nu} = O(\eta^{1-2d})$, we obtain the uniform expansion on $(0, 1/2]$:
	\begin{equation}\label{eq:asym7}
		SR^{net}(\eta) = P \eta^{p}\left(1 + r_{1}(\eta)\right) - Q \eta^{q}\left(1 + r_{2}(\eta)\right), \qquad |r_{i}(\eta)| \leq C\eta^{\alpha},
	\end{equation}
	with $p = (1-2d)/2$, $q = (1+2d)/2$, $P = \sqrt{a C_{\psi}}$, and $Q = c_{0}/\sqrt{2 C_{\psi}}$.
	
	The leading two-power profile $G(\eta) = P\eta^{p} - Q\eta^{q}$ attains its unique maximum at $\eta_{0} = \left(Pp/(Qq)\right)^{1/(2d)}$, where the drag-to-alpha ratio equals $Q\eta_{0}^{q}/(P\eta_{0}^{p}) = p/q$. Positivity of equation \eqref{eq:asym7} requires $\eta^{2d} \leq (P/Q)(1+O(\eta^{\alpha}))$, so any maximiser satisfies $\eta^{*} \leq C'(P/Q)^{1/(2d)}$ and tends to zero as $c \to \infty$, because $Q$ grows linearly in $c$. Writing $G(\eta) = G(\eta_{0}) H(\eta/\eta_{0})$ with the universal profile $H(s) = (q s^{p} - p s^{q})/(q-p)$, the error bounds in equation \eqref{eq:asym7} give $H(\eta^{*}/\eta_{0}) \geq 1 - C''\eta_{0}^{\alpha}$, because the maximiser value must weakly exceed the value at $\eta_{0}$. The profile $H$ is continuous and strictly unimodal with $H(1)=1$, so $\eta^{*}/\eta_{0} \to 1$, which gives equation \eqref{eq:asym5} through $\spn^{*} = 2/\eta^{*} - 1$. The ratio of the cost drag to the gross Sharpe ratio at the optimum equals $(Q/P)(\eta^{*})^{2d}(1+O((\eta^{*})^{\alpha})) \to p/q = (1-2d)/(1+2d)$.
\end{proof}

\begin{remark}[White noise with drift]\label{rem:asym_wn}
	Under the zero-autocorrelation process with drift $\mu^{z}_{an} \neq 0$, direct differentiation of equation \eqref{eq:sr_wn} shows that the gross Sharpe ratio increases in the span with the limit $|\mu^{z}_{an}|$, while the proxy cost drag declines to zero. The net Sharpe ratio under the signal-turnover proxy therefore converges to $|\mu^{z}_{an}|$ for every proportional cost, which formalizes the buy-and-hold limit of Section \ref{sec:sp}, up to a deterministic scale. The limit describes the proxy cost model: the mean of the variance-preserving signal grows as $\mu^{z}_{an}\sqrt{\spn}$, so, with an estimated volatility, the omitted volatility-update turnover need not vanish at long spans.
\end{remark}

\section{Skewness of Aggregated Returns under White Noise}\label{app:skew}

Let the volatility-normalized returns $z_{t}$ be independent standard normal. Write the unit-variance signal as $S_{t-1}=\sum_{j\geq 1}w_{j}z_{t-j}$ with $\sum_{j\geq 1}w^{2}_{j}=1$ and the signal autocorrelation $R(h)=\sum_{j\geq 1}w_{j}w_{j+h}$. The daily system return is $f_{t}=S_{t-1}z_{t}$ up to the loading and the volatility target, which scale $f_{t}$ linearly and cancel in the skewness. For the single filter, $w_{j}=\sqrt{1-\nu^{2}}\,\nu^{j-1}$ and $R(h)=\nu^{h}$.

\begin{lemma}[Pairing identity for third-order moments]\label{lem:skew_pair}
	For $h\geq 1$, $\mathbb{E}\left[f^{2}_{t}f_{t-h}\right]=2w_{h}R(h)$. All other third-order moments vanish: $\mathbb{E}\left[f^{3}_{t}\right]=0$, $\mathbb{E}\left[f^{2}_{t}f_{t+h}\right]=0$, and $\mathbb{E}\left[f_{t}f_{s}f_{u}\right]=0$ for distinct $t$, $s$, $u$.
\end{lemma}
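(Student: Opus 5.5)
We expand each third-order moment as a sum over products of independent standard normal variables and apply the Isserlis (Wick) pairing rule. Odd Gaussian moments vanish, so only perfect pairings of the six factors survive. Write $f_{t}=\sum_{j\geq 1}w_{j}z_{t-j}z_{t}$, so every daily return is a quadratic form in distinct-index products $z_{t}z_{t-j}$, $j\geq 1$. The guiding observation is that $z_{t}$ enters $f_{t}$ as the ``newest'' variable and enters $f_{s}$ for $s>t$ only through the signal. The argument therefore orders the time indices and asks which variable carrying the largest index can be paired.

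\emph{Vanishing cases.} Let $u$ be the largest of the time indices involved.

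If $u$ appears only once among $t,s,u$ (distinct indices, or the configuration $f^{2}_{t}f_{t+h}$ with $u=t+h$), then $z_{u}$ appears exactly once in the product. It cannot occur inside any lagged signal, because all signals use strictly earlier indices. Hence $z_{u}$ is independent of the rest and has mean zero, so the expectation vanishes. Explicitly, for $f^{2}_{t}f_{t+h}=z^{2}_{t}S^{2}_{t-1}z_{t+h}S_{t+h-1}$, condition on everything before $t+h$ and use $\mathbb{E}[z_{t+h}]=0$.

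For $\mathbb{E}[f^{3}_{t}]=\mathbb{E}[z^{3}_{t}]\,\mathbb{E}[S^{3}_{t-1}]=0$, the factorization holds by independence of $z_{t}$ and $S_{t-1}$, and the first factor is an odd Gaussian moment.

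\emph{The surviving case $h\geq 1$.} Write
\[
\mathbb{E}[f^{2}_{t}f_{t-h}]=\mathbb{E}\bigl[z^{2}_{t}S^{2}_{t-1}z_{t-h}S_{t-h-1}\bigr]=\mathbb{E}\bigl[S^{2}_{t-1}z_{t-h}S_{t-h-1}\bigr],
\]
after integrating out $z_{t}$ with $\mathbb{E}[z^{2}_{t}]=1$.

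The remaining factors $S_{t-1}$, $S_{t-1}$, $z_{t-h}$, $S_{t-h-1}$ are jointly Gaussian with zero mean, so the Isserlis theorem applies. This is Lemma \ref{lem:isserlis} with $\kappa=0$, or directly the four-factor Wick formula. It gives
\[
\mathbb{E}[S^{2}_{t-1}]\,\mathbb{E}[z_{t-h}S_{t-h-1}]+2\,\mathbb{E}[S_{t-1}z_{t-h}]\,\mathbb{E}[S_{t-1}S_{t-h-1}].
\]
The first term vanishes because $z_{t-h}$ is independent of $S_{t-h-1}$. For the second term, $\mathbb{E}[S_{t-1}z_{t-h}]=w_{h}$, since $z_{t-h}$ enters $S_{t-1}$ with weight $w_{h}$. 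Also
\[
\mathbb{E}[S_{t-1}S_{t-h-1}]=\sum_{j}w_{j}w_{j-h}\ \text{restricted to}\ j\geq h+1,
\]
which equals $\sum_{k\geq 1}w_{k+h}w_{k}=R(h)$. Together these give $2w_{h}R(h)$.

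\emph{Main obstacle.} The main difficulty is bookkeeping: confirming that the case split is exhaustive, and that in the mixed case no additional pairings arise from $z_{t-h}$ also appearing inside $S_{t-1}$. The Wick formula on the Gaussian vector handles the latter automatically, so the real check is that every configuration of three (possibly repeated) indices reduces either to the ``unique maximal index'' argument or to $f^{2}_{t}f_{t-h}$. The configuration $f^{2}_{t}f_{t-h}$ is the only one where the maximal index is repeated while the other index differs.
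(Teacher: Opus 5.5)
Your proof is correct and has the same structure as the paper's. The largest-index argument disposes of $\mathbb{E}[f^{2}_{t}f_{t+h}]$ and the moments with three distinct indices. $\mathbb{E}[f^{3}_{t}]$ factors through $\mathbb{E}[z^{3}_{t}]=0$. Integrating out $z_{t}$ reduces the surviving moment to $\mathbb{E}[S^{2}_{t-1}z_{t-h}S_{t-h-1}]$.

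The only difference is how you evaluate this four-factor moment. The paper splits $S_{t-1}=w_{h}z_{t-h}+R_{h}$, with $R_{h}$ independent of $z_{t-h}$, and drops the odd powers of $z_{t-h}$. You instead apply the four-factor Wick formula to the Gaussian vector $(S_{t-1},z_{t-h},S_{t-h-1})$.

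Both give $2w_{h}R(h)$ under the lemma's Gaussian hypothesis. The paper's split, however, uses only independence, zero mean, and unit variance of the innovations. It therefore carries over to the non-Gaussian innovations behind the robustness claim of Section \ref{ssec:skew}, while your Wick step ties the argument to joint Gaussianity.

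One small citation point: Lemma \ref{lem:isserlis} is a statement about $\Var[XY]$, that is, about moments of the form $\mathbb{E}[x^{2}y^{2}]$. It yields $\mathbb{E}[S^{2}_{t-1}z_{t-h}S_{t-h-1}]$ only after polarization, so the four-factor Wick formula you also cite is the direct reference.
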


\begin{proof}
	The return $z_{t}$ is independent of all signals and returns up to time $t$. In $\mathbb{E}\left[f^{3}_{t}\right]=\mathbb{E}\left[S^{3}_{t-1}\right]\mathbb{E}\left[z^{3}_{t}\right]$, both factors vanish for centred Gaussians. In $\mathbb{E}\left[f^{2}_{t}f_{t+h}\right]$ and in $\mathbb{E}\left[f_{t}f_{s}f_{u}\right]$ with $t>s>u$, the return with the largest time index enters the product exactly once, so the expectation vanishes under the Isserlis theorem. For the surviving moment, the independence of $z_{t}$ gives $\mathbb{E}\left[f^{2}_{t}f_{t-h}\right]=\mathbb{E}\left[S^{2}_{t-1}S_{t-h-1}z_{t-h}\right]$. We decompose $S_{t-1}=w_{h}z_{t-h}+R_{h}$, where $R_{h}$ collects the remaining innovations and is independent of $z_{t-h}$. Expanding the square, the terms with an odd total power of $z_{t-h}$ vanish, and the cross term gives $2w_{h}\,\mathbb{E}\left[z^{2}_{t-h}\right]\mathbb{E}\left[R_{h}S_{t-h-1}\right]=2w_{h}R(h)$, because $S_{t-h-1}$ contains no $z_{t-h}$, so that $\mathbb{E}\left[R_{h}S_{t-h-1}\right]=\mathbb{E}\left[S_{t-1}S_{t-h-1}\right]=R(h)$.
\end{proof}

\begin{proposition}[Skewness of the aggregated return]\label{pr:skew}
	Let $F_{T}=\sum^{T}_{t=1}f_{t}$. Then $\Var\left[F_{T}\right]=T$, and
	\begin{equation}\label{eq:skew_gen}
		\mathbb{E}\left[F^{3}_{T}\right]=6\sum^{T-1}_{h=1}\left(T-h\right)w_{h}R(h).
	\end{equation}
	For the single filter, the skewness $\varsigma(T)=\mathbb{E}\left[F^{3}_{T}\right]/T^{3/2}$ admits the closed form
	\begin{equation}\label{eq:skew_t}
		\varsigma(T) = \frac{6\nu\left(T\left(1-\nu^{2}\right)-1+\nu^{2T}\right)}{\left(1-\nu^{2}\right)^{3/2}\,T^{3/2}}.
	\end{equation}
	It satisfies $\varsigma(1)=0$ and $\varsigma(T)>0$ for all $T\geq 2$.
\end{proposition}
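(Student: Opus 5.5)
We plan to work directly from the pairing identity of Lemma~\ref{lem:skew_pair}. First, for the variance, expand $\Var[F_{T}]=\sum_{t,s}\mathbb{E}[f_{t}f_{s}]$, using that $\mathbb{E}[f_{t}]=\mathbb{E}[S_{t-1}]\mathbb{E}[z_{t}]=0$. For $t>s$, $z_{t}$ enters $f_{t}f_{s}$ exactly once and is independent of everything else in the product, so the cross covariance vanishes. On the diagonal, $\mathbb{E}[f^{2}_{t}]=\mathbb{E}[S^{2}_{t-1}]\mathbb{E}[z^{2}_{t}]=\sum_{j}w^{2}_{j}=1$. Hence $\Var[F_{T}]=T$.

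For the third moment, expand $\mathbb{E}[F^{3}_{T}]=\sum_{t,s,u}\mathbb{E}[f_{t}f_{s}f_{u}]$ and classify the index triples. Triples with all indices equal give zero. Triples with three distinct indices also give zero, by the lemma. The remaining triples have exactly two equal indices. Each unordered pattern $\{t,t,s\}$ with $s\neq t$ appears in $3$ ordered positions. By the lemma, only the case $s=t-h$ with $h\geq1$ survives, and it contributes $2w_{h}R(h)$. For a fixed lag $h\in\{1,\dots,T-1\}$, there are $T-h$ admissible pairs $(t,t-h)$ inside $\{1,\dots,T\}$. Summing gives $3\cdot 2\sum_{h}(T-h)w_{h}R(h)$, which is equation \eqref{eq:skew_gen}.

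For the single filter, substitute $w_{h}=\sqrt{1-\nu^{2}}\,\nu^{h-1}$ and $R(h)=\nu^{h}$. This gives $w_{h}R(h)=\sqrt{1-\nu^{2}}\,\nu^{-1}\nu^{2h}$, so the third moment reduces to the arithmetico-geometric sum $\sum_{h=1}^{T-1}(T-h)x^{h}$ with $x=\nu^{2}$. We would evaluate it by the standard identity
\[
\sum_{h=1}^{T-1}(T-h)x^{h}=\frac{x\left(T(1-x)-1+x^{T}\right)}{(1-x)^{2}},
\]
which can be checked by induction on $T$ or by differentiating a geometric series. Collecting the factors, $x/\nu=\nu$ and $\sqrt{1-\nu^{2}}/(1-\nu^{2})^{2}=(1-\nu^{2})^{-3/2}$. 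Dividing by $T^{3/2}$ then yields equation \eqref{eq:skew_t}.

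For the final claims, at $T=1$ the bracket equals $1-\nu^{2}-1+\nu^{2}=0$, so $\varsigma(1)=0$. For $T\geq2$, positivity is immediate from equation \eqref{eq:skew_gen}: every term $(T-h)w_{h}R(h)$ is strictly positive when $0<\nu<1$, and the sum is nonempty. Equivalently, $g(T)=T(1-x)-1+x^{T}$ satisfies $g(1)=0$ and $g(T+1)-g(T)=(1-x)(1-x^{T})>0$.

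The only delicate step is the combinatorial bookkeeping in the third-moment expansion. One must count the multiplicity $3$ of the ordered pairings correctly, and confirm that the forward configuration $\mathbb{E}[f^{2}_{t}f_{t+h}]$ is excluded. The lemma already supplies that exclusion, so we expect this step to be routine.
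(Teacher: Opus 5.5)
Your proposal is correct and follows the paper's proof essentially step for step: the largest-index argument for $\Var[F_{T}]=T$, the reduction via Lemma~\ref{lem:skew_pair} of the multinomial expansion to $3\sum_{t>s}\mathbb{E}[f^{2}_{t}f_{s}]$, and the arithmetico-geometric identity with $x=\nu^{2}$. The only difference is in the positivity step, and it is minor. You read positivity off termwise from equation \eqref{eq:skew_gen}, or from the difference $g(T+1)-g(T)=(1-x)(1-x^{T})$, whereas the paper differentiates $g$ in $x$ and uses $g(1)=0$; your termwise argument is, if anything, the more direct of the two.
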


\begin{proof}
	The covariances $\mathbb{E}\left[f_{t}f_{s}\right]$ vanish for $t\neq s$ by the largest-index argument of Lemma \ref{lem:skew_pair}, and $\Var\left[f_{t}\right]=\mathbb{E}\left[S^{2}_{t-1}\right]\mathbb{E}\left[z^{2}_{t}\right]=1$, which gives $\Var\left[F_{T}\right]=T$. The multinomial expansion of $F^{3}_{T}$ contains the diagonal terms, the paired terms, and the terms with three distinct indices. Lemma \ref{lem:skew_pair} annihilates all but the paired terms with the later index squared, so $\mathbb{E}\left[F^{3}_{T}\right]=3\sum_{t>s}\mathbb{E}\left[f^{2}_{t}f_{s}\right]=6\sum^{T-1}_{h=1}\left(T-h\right)w_{h}R(h)$. For the single filter, $w_{h}R(h)=\sqrt{1-\nu^{2}}\,\nu^{2h-1}$, and the identity $\sum^{T-1}_{h=1}\left(T-h\right)x^{h}=x\left(T\left(1-x\right)-1+x^{T}\right)/\left(1-x\right)^{2}$ with $x=\nu^{2}$ yields equation \eqref{eq:skew_t}. For positivity, the function $g(x)=T\left(1-x\right)-1+x^{T}$ satisfies $g(1)=0$ and $g^{\prime}(x)=-T\left(1-x^{T-1}\right)<0$ on $\left(0,1\right)$, so $g\left(\nu^{2}\right)>0$ for $T\geq 2$, while $g$ vanishes identically at $T=1$.
\end{proof}

The long-short filter satisfies Lemma \ref{lem:skew_pair} with its own weights and autocorrelation, so its third moment follows from equation \eqref{eq:skew_gen} with the long-short $w_{h}$ and $R(h)$. Under a general stationary autocorrelation function, additional pairings of the order of the autocorrelations enter the third-order moments, so the white-noise profile gives the leading-order skewness for weakly autocorrelated returns. Under the AR-1 process, the additional pairings form geometric sums, so the third moment remains closed-form, and the expansion generalizes to the ARFIMA process. We develop the general case and the asymptotics of the profile in the follow-up paper.

\end{document}